\documentclass[11pt]{article}
\usepackage{graphicx}
\usepackage[a4paper,margin=1in]{geometry}
\usepackage{amsmath,amssymb,amsthm,mathtools,bm,bbm}
\usepackage{algorithm}
\usepackage{algpseudocode}
\usepackage{enumitem}
\usepackage{tikz}
\usetikzlibrary{arrows.meta,positioning,calc,fit}
\usepackage{quantikz}
\usepackage{array}
\usepackage{hyperref}
\usepackage{float}
\usepackage{authblk} 
\usepackage{cite}

\hypersetup{colorlinks=true,linkcolor=blue,citecolor=blue,urlcolor=blue}

\newtheorem{definition}{Definition}[section]
\newtheorem{proposition}[definition]{Proposition}
\newtheorem{theorem}[definition]{Theorem}
\newtheorem{corollary}[definition]{Corollary}
\newtheorem{remark}[definition]{Remark}

\DeclareMathOperator{\Tr}{Tr}
\DeclareMathOperator{\rank}{rank}

\DeclareMathOperator{\Span}{span}

\DeclareMathOperator{\sech}{sech}

\newcommand{\R}{\mathbb{R}}
\newcommand{\I}{\mathbbm{1}}
\newcommand{\Herm}{\mathrm{Herm}}
\newcommand{\fd}[1]{S_{\mathrm{FD}}\!\left(#1\right)}
\newcommand{\ketbra}[1]{\lvert #1\rangle\!\langle #1\rvert}

\numberwithin{equation}{section}

\begin{document}
\title{\textbf{Quantum principal component analysis\\without eigenvector recovery}}
% \title{Quantum PCA without eigenvectors -- \\
% training-free recovery of principal scores without principal components}

\author[1]{Yewei Yuan} 
\author[2]{Michele Minervini}
\author[2]{Mark M. Wilde}
\author[1,3,4]{Nana Liu\thanks{nana.liu@quantumlah.org}} 

\affil[1]{Global College, Shanghai Jiao Tong University, Shanghai 200240, China}
\affil[2]{School of Electrical and Computer Engineering, \protect\\Cornell University, Ithaca, New York 14850, United States}

\affil[3]{Institute of Natural Sciences, Shanghai Jiao Tong University, Shanghai 200240, China}
\affil[4]{School of Mathematical Sciences, Shanghai Jiao Tong University, Shanghai 200240, China}

% ------------------
% \date{\small {\today}}
\date{\today}
\maketitle
\thispagestyle{empty} 
\pagestyle{plain}     
\begin{abstract}
Principal component analysis (PCA) is traditionally implemented through a covariance or kernel matrix, leading-eigenvector extraction, and hard rank-\(k\) projection. These steps can be computationally costly in high-dimensional and quantum-data settings, sensitive to small eigengaps, and unnecessary when downstream tasks only require principal-subspace scores. 
Such score-based objectives are important in applications such as anomaly detection, spectral-energy profiling, and other postselection tasks. To address these needs, we introduce a measurement-based soft PCA framework replacing the hard top-\(k\) projector with an entropy-regularized Fermi--Dirac filter. This filter is the unique optimizer of an entropy-regularized variational formulation of PCA and converges to the classical PCA projector in the zero-temperature limit. 

This filter has a direct interpretation as a quantum measurement, which naturally suggests a quantum approach. 
For centered covariance operators represented by quantum feature states, a single fixed circuit, together with threshold calibration, accesses all optimal filters for different rank budgets or retained-variance levels without rank-dependent circuit updates or eigenvector recovery. 
For new inputs, the same calibrated quantum circuit yields soft principal subspace scores, spectral energy profiles, and postselected filtered states. 
The required centering of both training and test data is performed coherently inside the quantum protocol, which is particularly important for quantum data where no classical feature vectors or centered Gram matrix are directly available.
By reframing PCA as a calibrated measurement task, this framework bypasses the need for iterative eigenvector extraction and achieves a \textit{dimension-independent} sample complexity $O(\eta^{-2})$ for normalized fractional-rank or retained variance scoring at additive accuracy $\eta$.
\end{abstract}

\newpage
\tableofcontents

\section{Introduction}

Principal component analysis (PCA) is one of the most standard ways to identify dominant structures and reduce the effective dimension in high-dimensional data~\cite{Jolliffe2002,JolliffeCadima2016}. Traditional PCA pipelines based on eigendecomposition can have several limitations: (i) keeping only the top $k$ eigenvectors imposes a hard spectral cut-off, so eigenmodes are either fully kept or completely discarded. Thus small perturbations in the data can change the selected subspace when the relevant eigengap is small; (ii) PCA requires the covariance matrix and kernel PCA requires the centered Gram matrix and its eigendecomposition, which can be costly to obtain in big data settings; (iii) even once the top eigenvectors are recovered, assigning a score to a new data point requires expensive projection onto the top eigenvectors. 

However, many practical applications care more about the final downstream task: how much of the new input data lies in the dominant covariance directions? This outputs a principal-subspace score. To answer this question directly, recovering dominant eigenvectors on their own can be overkill, and by avoiding this one can obviate some of the problems of traditional PCA above. One of the leading applications where only a final score is needed is anomaly detection\cite{jackson1979control,lakhina2004diagnosing,ding2013compressed}. Here if `normal' data live mostly in a leading principal subspace, then a new input is anomalous if it has low principal-subspace score. 
Similarly, PCA-subspace scores drive decision-making in classification tasks where individual subspaces are learned for each class, and test inputs are classified by comparing their projection scores or canonical angles to these subspaces~\cite{WatanabePakvasa1973,TurkPentland1991,Fukui2020SubspaceMethods}. In scientific data characterization and quality control, PCA scores are utilized to summarize dominant spectral or physical patterns, enabling the real-time analysis of complex processes like material synthesis~\cite{Kano2001,Marin2024RamanScoreMapping}. Furthermore, in model-order selection, the primary focus is evaluating how cumulative explained variance changes with the number of retained components. This effectively determines the minimal information required to represent complex systems without losing critical details~\cite{Minka2000,PerryPanigrahiBienWitten2025}.  
All of these tasks rely on extracting principal-subspace scores rather than solving explicit principal eigenvectors.

To directly solve this score-based PCA problem, a natural approach is to follow the philosophy of the variational characterization of leading eigenvalues~\cite{EckartYoung1936,Fan1951,Bhatia1997}, where the leading eigenspace can be rephrased as the solution of an optimization problem. For data with covariance matrix $C$, we can solve a trace-constrained measurement $M$ optimization problem (with solution $M^*$)
\begin{align*}
    \max_{M\in\mathcal M_d}\ \Tr(CM)
\qquad \text{subject to}\qquad
\Tr(M)=k, \qquad \mathcal M_d \coloneqq \{M\in \Herm(\mathbb C^d): 0\preceq M\preceq \I\}, 
\end{align*}
where $\text{Tr}(CM^*)$ is the retained covariance energy and $M^*$ can be shown to be the rank-\(k\) projector with a basic eigengap assumption, associated with traditional PCA that keeps only the top $k$ eigenvectors. The feasible set \(\mathcal M_d\cap\{M:\Tr(M)=k\}\) is the Fantope, the convex hull of rank-\(k\) projectors~\cite{Daspremont2007,Vu2013}. Although existing work has used this geometry mainly to impose sparsity or other structural constraints, here we use it to change the object returned by PCA itself. Once we get such measurement $M^*$, for downstream scoring of a normalized input \(\rho\), the corresponding principal-subspace score is \(\Tr(M\rho)\in[0,1]\). This allows us to directly output the score without needing to recover the eigenvectors.

This measurement perspective is particularly natural from the viewpoint of quantum computation, where $M$ being Hermitian and constrained to $0\preceq M\preceq \I$ can directly define a quantum measurement. Indeed, if the data is either natively quantum or can be embedded into quantum states, quantum algorithms like Quantum PCA~\cite{Lloyd2014} can be used to process the spectral information efficiently. In other settings, the covariance matrix operator may be directly available through state preparation or sampling, while writing down its full matrix or reconstructing its eigenvectors may be unnecessary~\cite{Gordon2022}. However, 
when the Fantope optimizer \(M^\star\) is the rank-\(k\) PCA projector, implementing it still amounts to resolving a sharp spectral cutoff. This typically requires eigenvector recovery by existing quantum approaches, which is overkill for score-based tasks. 
Furthermore, quantum implementations face an additional centering issue. For quantum-native data, or for nonlinear quantum feature maps, there may be no accessible classical feature vector to subtract and no explicit centered Gram matrix to build. Assuming classical centering for quantum data would require state tomography or explicit kernel reconstruction. 
Thus the question is now how to construct such a measurement so one can measure principal spectral content directly on new quantum feature states.   

We address this problem by softening the measurement optimization itself (adding a regularization), which is more readily solved and does not suffer from problems associated with a hard spectral cutoff. We develop a soft PCA measurement primitive based on Fermi--Dirac filtering~\cite{lindsey2023fast}. At the variational level, adding Fermi--Dirac entropy to the trace-constrained PCA objective replaces binary principal component occupations by smooth logistic occupations. The resulting operator is a temperature-controlled soft filter: finite temperature gives a stable spectral resolution, while the zero-temperature limit recovers hard PCA whenever the target eigengap is nonzero. This connects the Fantope view of PCA with the recent framework of quantum Fermi--Dirac thermal measurements~\cite{LiuWilde2026}.

We further show that this soft PCA filter can be calibrated without iterative quantum training.  A modified thermal measurement moves the scalar chemical potential (arising as a Lagrange multiplier to the constrained optimization problem) from the data-side Hamiltonian to a classical threshold on the control position. With a maximally mixed probe, the tail probability of the position outcome is exactly the normalized trace of the soft filter. A single fixed circuit therefore produces a position distribution whose quantiles identify the optimal thresholds for all target ranks. 
The same approach also applies to retained-variance tasks. 
Classically, these tasks are usually solved by first estimating the principal spectrum and then finding the smallest number of components that retains a prescribed fraction of covariance energy.
For this purpose, we introduce a covariance-probe version of the same fixed measurement. The full retained-variance profile can also be calibrated from one fixed position distribution, without reconstructing eigenvectors or eigenvalues.
After calibration, the same circuit can be applied to a centered test state to estimate normalized soft spectral-energy profiles or cumulative principal-subspace scores. 
The statistical cost is task dependent. Uniform all-rank calibration has a worst-case \(O(d^2/\eta^2)\) sample scaling for trace accuracy \(\eta\). A single fixed-rank target admits sharper binomial bounds, while normalized fractional-rank calibration and retained-variance calibration can achieve complexity \(O(\eta^{-2})\) independent of \(d\).

Our paper proceeds as follows. Sec.~\ref{sec:classical_pca} reformulates classical PCA as a trace-constrained measurement optimization over the Fantope.
Sec.~\ref{sec:2} softens this projector via Fermi–Dirac entropy regularization to derive a closed-form filter, and proves its convergence to hard PCA at low temperatures. 
Sec.~\ref{sec:3} presents the quantum construction: after representing centered feature covariances directly via quantum states, we deploy a modified thermal measurement. By moving the chemical potential to a classical position threshold, we achieve a training-free calibration for any target rank or retained variance, ready for inference on new data. Sec.~\ref{sec:comparison} positions the resulting protocol against existing classical and quantum PCA algorithms and identifies the downstream tasks where this new framework is favorable. Finally, Sec.~\ref{sec:4} discusses operational implications, statistical costs, limitations, and future directions.

\section{Classical PCA as Measurement Optimization}
\label{sec:classical_pca}
Let $\mathbf x_1,\dots,\mathbf x_N\in\R^d$ be classical data points, and let
\begin{equation}
\bar{\mathbf x} \coloneqq \frac{1}{N}\sum_{i=1}^N \mathbf x_i
\end{equation}
be their sample mean. Define the centered data
\begin{equation}
\mathbf y_i \coloneqq \mathbf x_i-\bar{\mathbf x},
\qquad i=1,\dots,N,
\end{equation}
and the sample covariance matrix
\begin{equation}
C \coloneqq \frac{1}{N}\sum_{i=1}^N \mathbf y_i\mathbf y_i^\top \in \R^{d\times d}.
\end{equation}
Suppose that $C$ admits the spectral decomposition
\begin{equation}
C=\sum_{j=1}^d \lambda_j \mathbf u_j\mathbf u_j^\top,
\qquad
\lambda_1\ge \lambda_2\ge \cdots \ge \lambda_d\ge 0,
\end{equation}
where $\mathbf u_1,\dots,\mathbf u_d$ form an orthonormal eigenbasis. When some eigenvalues are degenerate (e.g., $\lambda_k = \lambda_{k+1}$), the choice of the leading $k$ eigenvectors is not unique. Nevertheless, any valid selection spans an optimal $k$-dimensional principal subspace
\begin{equation}
\mathcal U_k \coloneqq \Span\{\mathbf u_1,\dots,\mathbf u_k\}.
\end{equation}
The orthogonal projector onto $\mathcal U_k$ is always a rank-$k$ projector, given by
\begin{equation}
P_k \coloneqq \sum_{j=1}^k \mathbf u_j\mathbf u_j^\top.
\end{equation}
For centered data, PCA acts by the filtering map $\mathbf y \longmapsto P_k \mathbf y$, while for the original data it gives the rank-$k$ reconstruction $\mathbf x \longmapsto \bar{\mathbf x}+P_k(\mathbf x-\bar{\mathbf x})$.

The same PCA projector can also be characterized variationally. We now extend this representation to complex spaces for quantum data, and then express it as an optimization form in terms of quantum measurement operators.

\begin{proposition}[Classical PCA as trace-constrained measurement optimization]
\label{prop:classical-pca-meas-opt}
Let
\begin{equation}
\mathcal M_d \coloneqq \{M\in \Herm(\mathbb C^d): 0\preceq M\preceq \I\}
\end{equation}
denote the set of measurement operators acting on $\mathbb C^d$. For generality, suppose that the sample covariance matrix $C\in\Herm(\mathbb C^d)$ has eigenvalues $\lambda_1\ge \cdots \ge \lambda_d$. Consider
\begin{equation}
\label{eq:pca-meas-opt}
\max_{M\in\mathcal M_d}\ \Tr(CM)
\qquad \text{subject to}\qquad
\Tr(M)=k,
\end{equation}
where $k\in\{1,\dots,d-1\}$. Then the set of optimal measurement operators forms a convex set, whose extreme points are exactly the rank-$k$ orthogonal projectors onto the optimal principal subspaces of $C$. Moreover, if the eigengap condition $\lambda_k>\lambda_{k+1}$ holds, the optimizer is unique and given by
\begin{equation}
M^\star=P_k=\sum_{j=1}^k \ketbra{u_j},
\end{equation}
where $u_1,\dots,u_d$ are the orthonormal eigenvectors of $C$ corresponding to the ordered eigenvalues.
\end{proposition}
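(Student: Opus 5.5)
The plan is to reduce the problem to a linear program on the diagonal of $M$ in the eigenbasis of $C$, i.e.\ Ky Fan's maximum principle. Write $C=\sum_j\lambda_j\ketbra{u_j}$ and, for feasible $M$, set $m_j\coloneqq\langle u_j|M|u_j\rangle$. The constraint $0\preceq M\preceq\I$ gives $m_j\in[0,1]$, and $\Tr(M)=k$ gives $\sum_j m_j=k$. The objective becomes $\Tr(CM)=\sum_j\lambda_j m_j$. Maximizing a linear functional with nonincreasing weights over $\{m\in[0,1]^d:\sum m_j=k\}$ is a greedy/rearrangement step. Writing
\[
\sum_j\lambda_j m_j-\sum_{j\le k}\lambda_j=\sum_{j\le k}(\lambda_j-\lambda_k)(m_j-1)+\sum_{j>k}(\lambda_j-\lambda_k)m_j\le 0,
\]
where we used $\sum_j(m_j-\mathbf 1_{j\le k})=0$, shows that the optimal value is $\sum_{j\le k}\lambda_j$. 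This value is attained by $P_k$, so $P_k$ is optimal.

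Next I characterize the optimal set $\mathcal O$. Convexity is immediate: $\mathcal O$ is the intersection of a compact convex feasible set with a level set of a linear functional attaining its maximum, so it is a face of the Fantope. Equality in the display forces $m_j=1$ whenever $\lambda_j>\lambda_k$ and $m_j=0$ whenever $\lambda_j<\lambda_k$.

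Let $E_>$, $E_=$ and $E_<$ be the spectral projectors of $C$ for eigenvalues $>\lambda_k$, $=\lambda_k$ and $<\lambda_k$. For $0\preceq M\preceq\I$, $\langle v|M|v\rangle=1$ forces $Mv=v$ (apply the argument to $\I-M\succeq0$), and $\langle v|M|v\rangle=0$ forces $Mv=0$. Hence every optimal $M$ has the block form
\[
M=E_>+N,\qquad N \text{ supported on } \operatorname{ran}E_=,\quad 0\preceq N\preceq E_=,\quad \Tr N=k-\rank E_>.
\]
Conversely, every such $M$ is optimal. So $\mathcal O$ is affinely isomorphic to a smaller Fantope, living on the $\lambda_k$-eigenspace with trace $r\coloneqq k-\rank E_>$.

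The main obstacle is identifying the extreme points, i.e.\ the standard fact that the extreme points of $\{0\preceq N\preceq\I_m,\ \Tr N=r\}$ are exactly the rank-$r$ projectors. I would prove it by diagonalizing $N$. If some eigenvalue lies in $(0,1)$, then since $\Tr N$ is an integer, at least two eigenvalues are fractional. Perturbing those two by $\pm\epsilon$ stays feasible, which exhibits $N$ as a midpoint, so $N$ is not extreme. Conversely, a projector $Q$ is extreme: if $Q=\tfrac12(A+B)$ with $A,B$ feasible, then on $\ker Q$ we get $A=B=0$, and on $\operatorname{ran}Q$ we get $A=B=\I$, using the same ``diagonal entry $0$ or $1$'' rigidity.

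Thus the extreme points of $\mathcal O$ are $E_>+Q$ with $Q$ a rank-$r$ projector in the $\lambda_k$-eigenspace. These are precisely the rank-$k$ projectors onto optimal principal subspaces, namely the spans of a valid choice of top-$k$ eigenvectors. Finally, if $\lambda_k>\lambda_{k+1}$, then $\operatorname{ran}E_=\subseteq\operatorname{ran}P_k$ and $r=\rank E_=$, so $N=E_=$ is forced. Hence $M^\star=E_>+E_==P_k$ is the unique optimizer.
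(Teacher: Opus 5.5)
Your proof is correct. It begins where the paper does: the diagonal entries $m_j=\langle u_j|M|u_j\rangle$ in an eigenbasis of $C$ reduce the objective to the linear program $\sum_j\lambda_j m_j$ over $\{m\in[0,1]^d:\sum_j m_j=k\}$, and the eigengap case ends the same way.

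\textbf{Where you differ: the extreme-point claim.} The paper argues that the optimal diagonals form a polytope with $0/1$ vertices. It then lifts each vertex to a projector, using the rigidity fact that $\langle u_j|M|u_j\rangle\in\{0,1\}$ forces $u_j$ to be an eigenvector of $M$. You instead use your explicit decomposition, which makes the equality conditions transparent, to identify the whole optimal face basis-independently as $E_>+\{N:0\preceq N\preceq E_=,\ \Tr N=r\}$. You then prove the Fantope extreme-point lemma directly.

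\textbf{What your route buys.} It establishes both directions of ``exactly'' and handles the degenerate case correctly. When $\lambda_k=\lambda_{k+1}$, the map $M\mapsto(m_j)_j$ is not injective. For example, the projector onto $\Span\{u_1,\dots,u_{k-1},(u_k+u_{k+1})/\sqrt2\}$ is an extreme optimizer, yet its diagonal has $m_k=m_{k+1}=1/2$, which is not a vertex of the polytope. So the paper's vertex-lifting argument alone shows neither that every extreme optimizer is a principal projector nor that every principal projector is extreme. Your block-form lemma is exactly the missing ingredient. It shows that every optimizer commutes with the spectral projectors of $C$ and lives on the $\lambda_k$-eigenspace up to $E_>$. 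It also gives an explicit description of the fractional optimizers that appear when the gap closes.

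\textbf{What the paper's route buys.} It is shorter, and it fully suffices for the nondegenerate uniqueness statement. That statement is the only part of the proposition the paper relies on later, in the low-temperature limit.
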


\begin{proof}
Let
\begin{equation}
C=\sum_{j=1}^d \lambda_j\ketbra{u_j}
\end{equation}
be the eigendecomposition of $C$. For any feasible matrix $M \in \mathcal{M}_d$, we define $m_j \coloneqq \bra{u_j}M\ket{u_j}$ for each $j$. From the trace constraint and $0 \preceq M \preceq \I$, these diagonal elements must satisfy
\begin{equation}
0\le m_j\le 1,
\qquad
\sum_{j=1}^d m_j=k.
\end{equation}
We evaluate the objective function by expanding the trace in the eigenbasis of $C$:
\begin{equation}
\Tr(CM) = \sum_{j=1}^d \lambda_j \bra{u_j}M\ket{u_j}=\sum_{j=1}^d \lambda_j m_j.
\end{equation}
Since the eigenvalues $\lambda_j$ are sorted in non-increasing order, the maximum possible value of this sum is $\sum_{j=1}^k \lambda_j$. The optimal solutions for $m_j$ form a convex polytope. Its extreme points correspond to integer assignments $m_j \in \{0,1\}$ where exactly $k$ variables equal $1$ (matching the largest eigenvalues) and the rest are $0$.

For any such extreme point assignment, because $0 \preceq M \preceq I$, the condition $\bra{u_j}M\ket{u_j} = 1$ forces $\ket{u_j}$ to be an eigenvector of $M$ with eigenvalue $1$. Similarly, $\bra{u_j}M\ket{u_j} = 0$ implies $\ket{u_j}$ is an eigenvector with eigenvalue $0$. Thus, any extreme-point optimizer is rigorously determined as an orthogonal projector onto a $k$-dimensional principal subspace of $C$.

Finally, if the eigengap condition $\lambda_k>\lambda_{k+1}$ holds, the optimal assignment is strictly unique:
\begin{equation}
m_1=\cdots=m_k=1,
\qquad
m_{k+1}=\cdots=m_d=0.
\end{equation}
In this case, the convex set of optimizers collapses to a single point, making the optimal operator strictly unique and given by the projector
\begin{equation}
M^\star = \sum_{j=1}^k \ketbra{u_j} = P_k,
\end{equation}
thus concluding the proof.
\end{proof}

The feasible set $\mathcal{M}_d \cap \{M : \Tr(M)=k\}$ is known as the Fantope, the convex hull of rank-$k$ projectors \cite{Daspremont2007,Vu2013}. Thus, Proposition~\ref{prop:classical-pca-meas-opt} serves as a measurement-operator analogue of Ky Fan's maximum principle. 
In classical machine learning, this framework is primarily studied as a convex relaxation of PCA to bypass non-convex rank constraints (a perspective that notably reduces Euclidean reconstruction error, as detailed in Appendix~\ref{app:fantope-shrinkage}). For the quantum framework developed in this paper, this Fantope structure plays a far more fundamental role. Any operator $M \in \mathcal{M}_d$ intrinsically defines a valid quantum measurement effect. Consequently, for a normalized input quantum state $\rho$, the quantity $\Tr(M\rho)$ directly corresponds to a physically measurable Born probability in $[0,1]$ serving as the normalized principal-subspace score. 

\section{Soft PCA and Fermi--Dirac Filter}
\label{sec:2}
We now soften the trace-constrained PCA measurement problem in Eq.~\eqref{eq:pca-meas-opt} by adding Fermi--Dirac entropy regularization. 
The hard rank-$k$ projector of classical PCA will be replaced by a temperature-dependent soft filter whose eigenvalues are Fermi--Dirac occupation numbers.

\begin{definition}[Fermi--Dirac entropy]
\label{def:FD-entropy}
For \(x\in(0,1)\), define the scalar binary entropy
\begin{equation}
h(x)
\coloneqq
-x\log x-(1-x)\log(1-x).
\end{equation}
We extend \(h\) continuously to \([0,1]\) by setting \(h(0)=h(1)=0\). 
For a measurement effect \(M\in\mathcal M_d\), define its Fermi--Dirac entropy:
\begin{equation}
\label{eq:FD-entropy}
\fd{M}
\coloneqq
\Tr h(M)
=
-\Tr[M\log M]-\Tr[(\I-M)\log(\I-M)].
\end{equation}
Equivalently, if \(m_1,\ldots,m_d\in[0,1]\) are the eigenvalues of \(M\), then
\begin{equation}
\fd{M}=\sum_{j=1}^d h(m_j).
\end{equation}
\end{definition}

The entropy-regularized soft PCA problem at temperature $T>0$ is
\begin{equation}
\label{eq:soft-pca-primal}
\max_{M\in\mathcal M_d}\ \Tr(CM)+T\fd{M}
\qquad \text{subject to}\qquad
\Tr(M)=k.
\end{equation}
The entropy term smooths the sharp rank-$k$ occupation pattern of hard PCA while keeping the feasible set unchanged.
For the trace-constrained problem Eq.~\eqref{eq:soft-pca-primal}, the dual variable is a scalar chemical potential $\mu\in\R$. For later use, define the Hermitian operator
\begin{equation}
\label{eq:A-of-mu-abstract}
H(\mu)\coloneqq \mu\I-C.
\end{equation}
The corresponding Fermi--Dirac filter is
\begin{equation}
\label{eq:thermal-projector-def}
M_{T,\mu}
\coloneqq
M_T\bigl(H(\mu)\bigr)
\coloneqq
\left(\I+e^{H(\mu)/T}\right)^{-1}
=
\left(\I+e^{(\mu\I-C)/T}\right)^{-1}.
\end{equation}

\begin{proposition}[Soft PCA as a Fermi--Dirac filter]
\label{thm:soft-pca-scalar-specialization}
The regularized problem Eq.~\eqref{eq:soft-pca-primal} has a unique optimizer of the Fermi--Dirac filter form   
\begin{equation}
\label{eq:soft-projector-form}
M_{T,\mu^\star}
=
\left(\I+e^{(\mu^\star\I-C)/T}\right)^{-1},
\end{equation}
where $\mu^\star\in\R$ is the unique solution of the trace equation
\begin{equation}
\label{eq:trace-equation-main}
\Tr\!\left[\left(\I+e^{(\mu\I-C)/T}\right)^{-1}\right]=k.
\end{equation}
Equivalently, $\mu^\star$ is the unique minimizer of the scalar dual function
\begin{equation}
\label{eq:scalar-dual-objective}
\Phi_T(\mu)
\coloneqq
\mu k + T\Tr\log\!\left(\I+e^{(C-\mu\I)/T}\right),
\end{equation}
which is a strictly convex scalar function, and whose derivative and second derivative are
\begin{equation}
\label{eq:scalar-dual-derivative}
\Phi_T'(\mu)=k-\Tr(M_{T,\mu}),
\qquad
\Phi_T''(\mu)=\frac{1}{T}\Tr\bigl[M_{T,\mu}(\I-M_{T,\mu})\bigr]>0.
\end{equation}

\end{proposition}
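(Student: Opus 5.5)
The plan is to reduce everything to a scalar problem in the eigenbasis of $C$, as in the proof of Proposition~\ref{prop:classical-pca-meas-opt}, and then to use Lagrangian duality with a single scalar multiplier.

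\textbf{Existence, uniqueness and eigenbasis reduction.} The feasible set $\mathcal M_d\cap\{\Tr M=k\}$ is compact and convex, and the objective $M\mapsto \Tr(CM)+T\fd{M}$ is continuous on it. The map $M\mapsto\Tr h(M)$ is strictly concave, because $h$ is strictly concave on $[0,1]$ and trace functions of strictly concave scalar functions are strictly concave on Hermitian matrices. Hence a maximizer exists and is unique.

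Next I would show that the maximizer commutes with $C$. Let $m_j=\bra{u_j}M\ket{u_j}$ and let $D(M)=\sum_j m_j\ketbra{u_j}$ be the pinching of $M$ in the eigenbasis of $C$. Then:
\begin{itemize}[leftmargin=*]
\item $D(M)$ is still feasible;
\item $\Tr(CD(M))=\Tr(CM)$;
\item $\fd{D(M)}=\sum_j h(m_j)\ge \fd{M}$, because the diagonal of $M$ is majorized by its spectrum (Schur--Horn) and $h$ is concave.
\end{itemize}
So $D(M)$ is at least as good as $M$, and uniqueness forces the optimizer to be diagonal in the $u_j$ basis. The problem therefore reduces to maximizing
\[
\sum_j\bigl(\lambda_j m_j+T h(m_j)\bigr)
\quad\text{over } m\in[0,1]^d \text{ with } \textstyle\sum_j m_j=k .
\]

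\textbf{Solving the scalar problem.} Introduce the Lagrangian with a multiplier $\mu$ for the trace constraint. Since $h'(m)=\log\frac{1-m}{m}\to\pm\infty$ at the endpoints, no box constraint is active, and the optimum is interior: $m_j\in(0,1)$. Stationarity gives $\lambda_j-\mu+T\log\frac{1-m_j}{m_j}=0$, that is,
\[
m_j=\bigl(1+e^{(\mu-\lambda_j)/T}\bigr)^{-1},
\]
which is exactly the eigenvalue of $M_{T,\mu}$ on $u_j$.

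The multiplier is fixed by the trace equation. The map $g(\mu)=\Tr M_{T,\mu}$ is continuous and strictly decreasing, since each $m_j$ strictly decreases in $\mu$. Its range is $(0,d)$, with $g\to d$ as $\mu\to-\infty$ and $g\to0$ as $\mu\to\infty$. Because $k\in\{1,\dots,d-1\}$ lies in $(0,d)$, there is a unique root $\mu^\star$, which gives \eqref{eq:trace-equation-main}. For the concave problem with a linear constraint, the KKT conditions are sufficient, so this point is the optimizer.

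\textbf{Dual function and its derivatives.} Compute the dual by maximizing the Lagrangian pointwise:
\[
\sup_{m\in[0,1]}\bigl[(\lambda-\mu)m+Th(m)\bigr]=T\log\bigl(1+e^{(\lambda-\mu)/T}\bigr).
\]
This is the Legendre conjugate of negative binary entropy, i.e.\ softplus. Summing over eigenvalues and adding $\mu k$ gives $\Phi_T(\mu)$. Differentiating term by term:
\[
\frac{d}{d\mu}\,T\log\bigl(1+e^{(\lambda_j-\mu)/T}\bigr)=-\bigl(1+e^{(\mu-\lambda_j)/T}\bigr)^{-1}=-m_j(\mu),
\]
so $\Phi_T'(\mu)=k-\Tr M_{T,\mu}$. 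Using $\partial_\mu m_j=-m_j(1-m_j)/T$, the second derivative is $\Phi_T''(\mu)=\frac1T\sum_j m_j(1-m_j)$, which is strictly positive because every $m_j\in(0,1)$. Hence $\Phi_T$ is strictly convex, and its unique minimizer is the unique zero of $\Phi_T'$, namely $\mu^\star$. Weak and strong duality follow from the explicit primal solution attaining the value $\Phi_T(\mu^\star)$, but they are not needed for the stated claims.

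\textbf{Main obstacle.} The only non-routine step is the reduction to the eigenbasis of $C$. I would handle it with the pinching/majorization argument above; alternatively one can cite strict concavity together with the fact that the objective is invariant under the unitaries $e^{i\theta C}$, which averages any optimizer to a commuting one. Everything else is scalar calculus.
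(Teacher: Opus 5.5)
Your proof is correct. Its dual half (the mode-by-mode softplus conjugate, then term-by-term differentiation of $\Phi_T$) matches the paper's; the two routes differ in how the primal optimizer is identified.

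The paper never reduces to the eigenbasis. It takes the Fr\'echet derivative of the operator Lagrangian on the open set $0\prec M\prec\I$, and the stationarity equation $\log M-\log(\I-M)=(C-\mu\I)/T$ by itself forces $M$ to be a function of $C$. So no pinching or Schur--Horn majorization is needed.

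In exchange, the paper leaves implicit several points that your argument makes explicit:
existence of a maximizer (compactness);
uniqueness (strict concavity of $M\mapsto\Tr h(M)$);
interiority of the optimizer (the endpoint blow-up of $h'$);
sufficiency of the stationarity condition;
existence of the root $\mu^\star$. The paper passes from $\Phi_T''>0$ directly to a minimizer, which strict convexity alone does not guarantee. Your range argument $\Tr M_{T,\mu}\in(0,d)\ni k$ actually supplies it.

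The two routes can also be merged so that your majorization step becomes unnecessary. Once $\mu^\star$ exists, $F(M)\coloneqq\Tr\bigl((C-\mu^\star\I)M\bigr)+T\fd{M}$ is concave on $\mathcal M_d$ and has the interior stationary point $M_{T,\mu^\star}$. By continuity of $h$ up to the boundary, $M_{T,\mu^\star}$ therefore maximizes $F$ over all of $\mathcal M_d$. On the slice $\Tr M=k$, $F$ equals the primal objective minus $\mu^\star k$, so $M_{T,\mu^\star}$ is optimal there, and strict concavity makes it the unique optimizer.
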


\begin{proof}
For the Lagrange multiplier $\mu\in\R$, we have the Lagrangian
\begin{align}
\mathcal L_T(M,\mu)
&=
\Tr(CM)+T\fd{M}+\mu\bigl(k-\Tr(M)\bigr)\\
&=
\Tr\bigl((C-\mu\I)M\bigr)+T\fd{M}+\mu k
.\label{eq:lag2}
\end{align}
For fixed $\mu$, the Fr\'echet derivative with respect to $M$ over the open set $0\prec M\prec \I$ is
\begin{equation}
\nabla_M \mathcal L_T(M,\mu)
=
C-\mu\I + T\bigl(\log(\I-M)-\log M\bigr).
\end{equation}
Setting this derivative to zero yields
\begin{equation}
\log M - \log(\I-M)=\frac{C-\mu\I}{T},
\end{equation}
and therefore
\begin{equation}
M(\I-M)^{-1}=e^{(C-\mu\I)/T}.
\end{equation}
Solving for $M$ gives
\begin{equation}
M=M_{T,\mu}=\left(\I+e^{(\mu\I-C)/T}\right)^{-1}.
\end{equation}

To obtain the scalar dual objective, substitute $M_{T,\mu}$ into the Lagrangian. Equivalently, diagonalize $C=\sum_{j=1}^d \lambda_j\ketbra{u_j}$ and optimize mode by mode, the first two terms of $\mathcal L_T(M,\mu)$ in Eq.~\eqref{eq:lag2} becomes:

\begin{equation}
\sup_{0\le m\le 1}\ \left[(\lambda_j-\mu)m + T\bigl(-m\log m -(1-m)\log(1-m)\bigr)\right]
=
T\log\!\left(1+e^{(\lambda_j-\mu)/T}\right).
\end{equation}
Summing over $j$ and restoring the $\mu k$ term gives Eq.~\eqref{eq:scalar-dual-objective}. Differentiating yields
\begin{equation}
\Phi_T'(\mu)
=
 k - \sum_{j=1}^d \frac{1}{1+e^{(\mu-\lambda_j)/T}}
=
 k - \Tr(M_{T,\mu}),
\end{equation}
and
\begin{equation}
\Phi_T''(\mu)
=
\frac{1}{T}\sum_{j=1}^d
\frac{e^{(\mu-\lambda_j)/T}}{\bigl(1+e^{(\mu-\lambda_j)/T}\bigr)^2}
=
\frac{1}{T}\Tr\bigl[M_{T,\mu}(\I-M_{T,\mu})\bigr]>0,
\end{equation}
which is Eq.~\eqref{eq:scalar-dual-derivative}.
So $\Phi_T$ is strictly convex and admits a unique minimizer $\mu^\star$. The first order condition $\Phi_T'(\mu^\star)=0$ is exactly the trace equation Eq.~\eqref{eq:trace-equation-main}.
\end{proof}

From Proposition~\ref{thm:soft-pca-scalar-specialization}, if
\begin{equation}
C=\sum_{j=1}^d \lambda_j\ketbra{u_j},
\end{equation}
then
\begin{equation}
\label{eq:spectral-soft-projector}
M_{T,\mu}
=
\sum_{j=1}^d \frac{1}{1+e^{(\mu-\lambda_j)/T}}\,\ketbra{u_j}.
\end{equation}
Thus the hard occupation numbers $0$ and $1$ of classical PCA are replaced by Fermi--Dirac occupation weights in the interval $(0,1)$. In particular, for $T>0$ and $1\le k\le d-1$, the optimizer $M_{T,\mu^\star}$ is generally not idempotent, and therefore is a \emph{soft filter} rather than a hard rank-$k$ projector. We now show that Soft PCA recovers the standard hard projector as the parameter $T$ vanishes.

\begin{corollary}[Low-temperature limit]
\label{cor:temperature-choice}
For each $T>0$, let
\begin{equation}
M_T^\star \coloneqq M_{T,\mu^\star}
\end{equation}
denote the unique optimizer of the entropy-regularized soft PCA problem Eq.~\eqref{eq:soft-pca-primal}, and let
$v_{\mathrm{hard}}$
be the optimal value of the hard measurement optimization problem Eq.~\eqref{eq:pca-meas-opt}. Let \(h\) be the scalar binary entropy defined in Definition~\ref{def:FD-entropy}.
For every $\epsilon>0$, if
\begin{equation}
\label{eq:Topt}
T\le \frac{\epsilon}{d\,h(k/d)},
\end{equation}
then
\begin{equation}
\label{eq:uniform-soft-hard-gap}
\Tr(CM_T^\star)\ge v_{\mathrm{hard}}-\epsilon.
\end{equation}
Note that since
\begin{equation}
\label{eq:uneqt}
k\log(d/k)\le d\,h(k/d)\le k\log(d/k)+k,
\end{equation}
it follows that for practical scenarios where $k = O(1)$, the sufficient temperature scale in Eq.~\eqref{eq:Topt} is $O(1/\log d)$.
In addition, if the eigengap condition
$\lambda_k(C)>\lambda_{k+1}(C)$
holds, then
\begin{equation}
\lim_{T\to 0} M_T^\star = P_k.
\end{equation}

\end{corollary}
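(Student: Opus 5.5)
The plan has three parts: the value bound, the entropy estimate \eqref{eq:uneqt}, and the limit $M_T^\star\to P_k$.

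\emph{Value bound.} Compare the soft optimizer with the hard optimizer. By Proposition~\ref{prop:classical-pca-meas-opt}, the projector $P_k$ is feasible for \eqref{eq:soft-pca-primal} and attains $\Tr(CP_k)=v_{\mathrm{hard}}$. Its entropy is $\fd{P_k}=0$. Optimality of $M_T^\star$ (Proposition~\ref{thm:soft-pca-scalar-specialization}) then gives
\begin{equation*}
\Tr(CM_T^\star)+T\fd{M_T^\star}\ge v_{\mathrm{hard}},
\end{equation*}
so $\Tr(CM_T^\star)\ge v_{\mathrm{hard}}-T\,\fd{M_T^\star}$. It therefore suffices to bound $\fd{M}$ uniformly on the feasible set. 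Since $h$ is concave on $[0,1]$, Jensen's inequality applied to the eigenvalues $m_j$ (which satisfy $\sum_j m_j=k$) gives
\begin{equation*}
\fd{M}=\sum_j h(m_j)\le d\,h\Bigl(\tfrac1d\textstyle\sum_j m_j\Bigr)=d\,h(k/d).
\end{equation*}
Hence $T\le \epsilon/(d\,h(k/d))$ implies \eqref{eq:uniform-soft-hard-gap}.

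\emph{Entropy estimate \eqref{eq:uneqt}.} Write
\begin{equation*}
d\,h(k/d)=k\log(d/k)+(d-k)\log\frac{d}{d-k}.
\end{equation*}
The second term is nonnegative, which gives the lower bound. For the upper bound, use $\log(1+x)\le x$ with $x=k/(d-k)$:
\begin{equation*}
(d-k)\log\Bigl(1+\frac{k}{d-k}\Bigr)\le k.
\end{equation*}
For $k=O(1)$ this yields $d\,h(k/d)=\Theta(\log d)$. The admissible temperature scale is then $\epsilon/\Theta(\log d)$, i.e.\ $O(1/\log d)$ for fixed $\epsilon$.

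\emph{Limit $T\to0$.} This is the main step, and I would use a compactness argument. The feasible set $\mathcal F=\mathcal M_d\cap\{\Tr M=k\}$ is compact, so take any sequence $T_n\to0$ and a convergent subsequence $M_{T_n}^\star\to \bar M\in\mathcal F$. The value bound gives
\begin{equation*}
v_{\mathrm{hard}}\ge\Tr(CM_{T_n}^\star)\ge v_{\mathrm{hard}}-T_n\,d\,h(k/d).
\end{equation*}
Passing to the limit, $\Tr(C\bar M)=v_{\mathrm{hard}}$, so $\bar M$ is an optimizer of \eqref{eq:pca-meas-opt}. Under the eigengap condition $\lambda_k>\lambda_{k+1}$, Proposition~\ref{prop:classical-pca-meas-opt} says this optimizer is unique, so $\bar M=P_k$. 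Every subsequential limit equals $P_k$, hence the whole family converges.

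Alternatively, the limit can be checked directly from \eqref{eq:spectral-soft-projector}. Show that $\mu^\star(T)$ eventually stays in $[\lambda_{k+1}+\delta,\lambda_k-\delta]$ for some $\delta>0$ (using the trace equation \eqref{eq:trace-equation-main}); then each occupation $1/(1+e^{(\mu^\star-\lambda_j)/T})$ tends to $1$ for $j\le k$ and to $0$ for $j>k$. Pinning $\mu^\star$ away from the gap edges is fiddly when the eigenvalues are degenerate, which is why I prefer the compactness route.
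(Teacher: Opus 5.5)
Your proposal is correct and follows essentially the same route as the paper: the value bound compares $M_T^\star$ against a zero-entropy hard optimizer and applies Jensen's inequality to the concave $h$, the two-sided estimate on $d\,h(k/d)$ is the same elementary binary-entropy bound, and the limit combines compactness of the feasible set with uniqueness of the hard optimizer under the eigengap. Your explicit subsequence argument (every limit point of $M_{T_n}^\star$ is a hard optimizer, hence equals $P_k$) spells out a step the paper only asserts from compactness and continuity.
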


\begin{proof}
Since $M_T^\star$ is feasible for Eq.~\eqref{eq:pca-meas-opt}, one has
\begin{equation}
\label{eq:v_lower_bound}
\Tr(CM_T^\star)\le v_{\mathrm{hard}}.
\end{equation}
On the other hand, because $M_T^\star$ is an optimizer of Eq.~\eqref{eq:soft-pca-primal}, we have
\begin{equation}
\label{eq:v_uper_bound}
v_{\mathrm{hard}}
\le
\Tr(CM_T^\star)+T\fd {M_T^\star}
\end{equation}
Now let $m_1,\dots,m_d\in[0,1]$ be the eigenvalues of $M_T^\star$. Then
\begin{equation}
\sum_{j=1}^d m_j=\Tr(M_T^\star)=k,
\qquad
\fd {M_T^\star}=\sum_{j=1}^d h(m_j).
\end{equation}

Because $h$ is concave, Jensen's inequality gives
\begin{equation}
\fd {M_T^\star}
\le
d\,h\!\left(\frac{1}{d}\sum_{j=1}^d m_j\right)
=
d\,h(k/d).
\end{equation}
Therefore
\begin{equation}
v_{\mathrm{hard}}
\le
\Tr(CM_T^\star)+T\,d\,h(k/d),
\end{equation}
which proves Eq.~\eqref{eq:uniform-soft-hard-gap}.  The bound
\begin{equation}
k\log(d/k)\le d\,h(k/d)\le k\log(d/k)+k
\end{equation}
follows from the elementary inequality
\begin{equation}
x\log(1/x) \le h(x)\le x\log(1/x)+x,
\qquad x\in[0,1].
\end{equation}

Now let $T\to 0$, since the feasible set
\begin{equation}
\{M\in \mathcal M_d: \Tr(M)=k\}
\end{equation}
is compact and closed, by Eq.~\eqref{eq:v_lower_bound}, Eq.~\eqref{eq:v_uper_bound} and continuity of the trace,
\begin{equation}
\lim_{T\to 0}\Tr(CM_{T}^\star)= v_{\mathrm{hard}},
\end{equation}
which is the optimal value of Eq.~\eqref{eq:pca-meas-opt}.

If the eigengap condition $\lambda_k(C)>\lambda_{k+1}(C)$ holds, then Proposition~\ref{prop:classical-pca-meas-opt} implies that Eq.~\eqref{eq:pca-meas-opt} has the unique optimizer $P_k$, and
\begin{equation}
\lim_{T\to 0} M_T^\star = P_k,
\end{equation}
thus concluding the proof.
\end{proof}

\section{Quantum Fermi--Dirac Filter for Soft PCA}
\label{sec:3}
Section~\ref{sec:2} reformulated soft PCA as a trace-constrained Fermi--Dirac optimization, and Appendix~\ref{app:classicalopt} showed that its classical implementation still requires repeated evaluation of matrix-valued quantities such as
\begin{equation}
\Tr(M_{T,\mu}),
\qquad
M_{T,\mu}=\left(\I+e^{(\mu\I-C)/T}\right)^{-1},
\end{equation}
which becomes costly in high dimensions. 
For dense unstructured covariance operators, direct methods require \(O(d^3)\) spectral or matrix-function computations, motivating a quantum implementation in which \(C\) is accessed directly as an operator.

We first specify a quantum representation of the covariance operator. We then introduce a modified thermal measurement that transfers the scalar shift \(\mu\I\) from the data-side Hamiltonian to the control side. This leads to a training-free quantile calibration of the optimal soft PCA filter. Once the optimal soft PCA filter is obtained, the computation of the final score on new input states can be performed directly. 

\subsection{Quantum covariance representation}
\label{sec:qcovariance}
There are two natural settings in which the present formulation is especially well matched to quantum computation. First, when the data are quantum native, the uncentered second-moment is already a density operator, meaning the covariance object can be built directly from quantum states. Second, when classical data are encoded through a quantum feature map
\begin{equation}
\mathbf{x}\longmapsto \ket{\phi(\mathbf{x})}=U_\phi(\mathbf{x})\ket{0},
\end{equation}
the corresponding feature-space covariance can be formed directly at the level of quantum states, without first explicitly constructing a full Gram matrix. 
In both cases, the protocol operates on quantum samples and produces calibrated measurement effects that can be applied directly to new quantum input states.

\begin{proposition}[Feature-state covariance representation]
\label{prop:feature-state-covariance}
Let
\begin{equation}
\ket{\phi_i}\coloneqq \ket{\phi(\mathbf{x}_i)}\in\mathcal H,
\qquad i=1,\dots,N,
\end{equation}
be quantum feature states associated with training samples $\mathbf{x}_1,\dots,\mathbf{x}_N$, and define the empirical feature-state average
\begin{equation}
\bar\rho_\phi \coloneqq \frac{1}{N}\sum_{i=1}^N \ketbra{\phi_i}.
\end{equation}
Let
\begin{equation}
\ket{m}\coloneqq \frac{1}{N}\sum_{i=1}^N \ket{\phi_i}
\end{equation}
be the empirical mean feature vector. Then the centered feature-space covariance operator is
\begin{equation}
\label{eq:centered-feature-covariance}
C_\phi
\coloneqq
\frac{1}{N}\sum_{i=1}^N
\bigl(\ket{\phi_i}-\ket{m}\bigr)\bigl(\bra{\phi_i}-\bra{m}\bigr)
=
\bar\rho_\phi-\ketbra{m}.
\end{equation}
In particular, if the feature data are centered in Hilbert space, i.e. $\ket{m}=0$, then $C_\phi=\bar\rho_\phi$.

\end{proposition}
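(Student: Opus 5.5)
The identity is a direct algebraic expansion, so I would simply multiply out the outer products in the definition of \(C_\phi\) and use the definition of \(\ket{m}\) to collapse the cross terms. Concretely, by bilinearity of the outer product, each summand expands as
\begin{equation*}
\bigl(\ket{\phi_i}-\ket{m}\bigr)\bigl(\bra{\phi_i}-\bra{m}\bigr)
=\ketbra{\phi_i}-\ket{\phi_i}\!\bra{m}-\ket{m}\!\bra{\phi_i}+\ketbra{m}.
\end{equation*}
Averaging over \(i\) with weight \(1/N\), the first term gives \(\bar\rho_\phi\) by definition, and the last term is independent of \(i\), so it contributes \(\ketbra{m}\).

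The key step is the cross terms. I would pull the sum inside using linearity, \(\frac1N\sum_i \ket{\phi_i}\!\bra{m} = \bigl(\frac1N\sum_i\ket{\phi_i}\bigr)\bra{m} = \ketbra{m}\), and similarly for the adjoint term, noting that \(\bra{m}=\frac1N\sum_i\bra{\phi_i}\) because the weights \(1/N\) are real. Each cross term then equals \(-\ketbra{m}\). Summing the four contributions gives \(\bar\rho_\phi-2\ketbra{m}+\ketbra{m}=\bar\rho_\phi-\ketbra{m}\), which is Eq.~\eqref{eq:centered-feature-covariance}. The special case follows immediately: if \(\ket{m}=0\), then \(\ketbra{m}=0\) and \(C_\phi=\bar\rho_\phi\).

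There is no real obstacle. The only point needing care is that \(\ket{m}\) is an unnormalized vector, so \(\ketbra{m}\) is a positive operator of trace \(\|m\|^2\le 1\) rather than a state. The identity is purely linear-algebraic, so normalization plays no role. I might also remark that \(C_\phi\succeq 0\) and \(C_\phi\in\Herm(\mathcal H)\) follow from the left-hand form as an average of positive rank-one operators. This confirms that \(C_\phi\) is a valid input for Proposition~\ref{thm:soft-pca-scalar-specialization}.
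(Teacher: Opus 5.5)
Your proposal is correct and follows the same route as the paper's proof: expand each outer product bilinearly, use $\frac{1}{N}\sum_i \ket{\phi_i}=\ket{m}$ (and its adjoint) to turn both cross terms into $-\ketbra{m}$, and collect terms to get $\bar\rho_\phi-\ketbra{m}$. Your remarks that $\Tr\ketbra{m}=\|m\|^2\le 1$ and that $C_\phi\succeq 0$ are also correct, though the identity itself does not need them.
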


\begin{proof}
Expand the definition:
\begin{align}
C_\phi
&=
\frac{1}{N}\sum_{i=1}^N
\left(
\ketbra{\phi_i}
-\ket{\phi_i}\bra{m}
-\ket{m}\bra{\phi_i}
+\ketbra{m}
\right) \\
&=
\frac{1}{N}\sum_{i=1}^N \ketbra{\phi_i}
-\left(\frac{1}{N}\sum_{i=1}^N \ket{\phi_i}\right)\bra{m}
-\ket{m}\left(\frac{1}{N}\sum_{i=1}^N \bra{\phi_i}\right)
+\frac{1}{N}\sum_{i=1}^N \ketbra{m}.
\end{align}
Using
\begin{equation}
\frac{1}{N}\sum_{i=1}^N \ket{\phi_i}=\ket{m},
\qquad
\frac{1}{N}\sum_{i=1}^N \bra{\phi_i}=\bra{m},
\qquad
\frac{1}{N}\sum_{i=1}^N \ketbra{m}=\ketbra{m},
\end{equation}
we obtain
\begin{equation}
C_\phi=\bar\rho_\phi-\ketbra{m}.
\end{equation}
The centered case is immediate.
\end{proof}

It is important to distinguish the roles of the two terms in
\begin{equation}
\label{eq:centered-covariance-main}
C_\phi=\bar\rho_\phi-\ketbra{m}.
\end{equation}
The operator
\begin{equation}
\bar\rho_\phi=\frac{1}{N}\sum_{i=1}^N \ketbra{\phi_i}
\end{equation}
is a density operator on $\mathcal H$, since it is positive semidefinite and has unit trace. In general, it is a \emph{mixed state}, rather than a pure state.

By contrast,
\begin{equation}
\ketbra{m},
\qquad
\ket{m}=\frac{1}{N}\sum_{i=1}^N \ket{\phi_i},
\end{equation}
is a positive semidefinite with rank at most one, but it is not generally a density operator, because $\ket{m}$ need not be normalized. Indeed,
\begin{equation}
\Tr(\ketbra{m})=\braket{m}{m}=\alpha,
\end{equation}
and typically $\alpha\neq 1$.

This form for $C_\phi$ in Eq.~\eqref{eq:centered-covariance-main} is especially useful later, because it expresses $C_\phi$ as a linear combination of quantum states. In the special case $\alpha=0$, one simply has $\ket{m}=0$ and hence
\begin{equation}
C_\phi=\bar\rho_\phi.
\end{equation}

\begin{remark}
\label{rem:kernel}
To connect Proposition~\ref{prop:feature-state-covariance} with kernel PCA, we now consider the Gram representation. In classical kernel PCA, one diagonalizes the centered kernel matrix rather than the feature-space covariance operator. Let
$\Phi\coloneqq [\,\ket{\phi_1}\ \cdots\ \ket{\phi_N}\,]$
denote the feature matrix, and let
$H\coloneqq I_N-\frac{1}{N}\mathbf 1\mathbf 1^\top$
be the standard centering matrix. Using Proposition~\ref{prop:feature-state-covariance}, the centered covariance operator in quantum feature space can be written compactly as
\begin{equation}
C_\phi = \bar\rho_\phi-\ketbra{m} = \frac{1}{N}\Phi H\Phi^\dagger.
\end{equation}
Meanwhile, the empirical kernel matrix is
\begin{equation}
K = \Phi^\dagger\Phi,
\end{equation}
and its centered version is
\begin{equation}
K_c = H K H = H\Phi^\dagger\Phi H.
\end{equation}
Since the operators $\Phi H\Phi^\dagger$ and $H\Phi^\dagger\Phi H$ have the same nonzero spectrum, it follows that the nonzero eigenvalues of $C_\phi$ coincide with those of $\frac{1}{N}K_c$.
Therefore, centered quantum feature states induce exactly the same nonzero principal spectrum as classical kernel PCA, while representing this spectrum through the feature-space covariance operator rather than an explicitly formed Gram matrix. In this sense, \(C_\phi\) is the operator-level counterpart of the centered kernel-PCA matrix~\cite{Scholkopf1998}.
\end{remark}

We now describe the state preparation procedures for the two terms in Eq.~\eqref{eq:centered-covariance-main} and for arbitrary centered test input.

\paragraph{Preparation of $\bar\rho_\phi$.}
Assume access to a data loading unitary
\begin{equation}
\label{eq:data-loading-unitary}
U_{\mathrm{data}}\ket{i}\ket{0}=\ket{i}\ket{\phi_i},
\qquad i\in\{1,\dots,N\}.
\end{equation}
Let
\begin{equation}
\ket{u_N}\coloneqq \frac{1}{\sqrt N}\sum_{i=1}^N \ket{i}
\end{equation}
be the uniform superposition over training indices and define
\begin{equation}
\label{eq:Omega-state}
\ket{\Omega}
\coloneqq
U_{\mathrm{data}}\bigl(\ket{u_N}\ket{0}\bigr)
=
\frac{1}{\sqrt N}\sum_{i=1}^N \ket{i}\ket{\phi_i}.
\end{equation}
Tracing out the index register yields
\begin{equation}
\Tr_{\mathrm{idx}}\!\bigl[\ketbra{\Omega}\bigr]
=
\frac{1}{N}\sum_{i=1}^N \ketbra{\phi_i}
=
\bar\rho_\phi.
\end{equation}
Hence $\bar\rho_\phi$ is obtained by discarding the index register of the purification Eq.~\eqref{eq:Omega-state}.

\paragraph{Preparation of $\ket{m}$.}
Let $U_N$ be any unitary such that $U_N\ket{0}=\ket{u_N}$, such as a tensor product of Hadamard gates. Applying $U_N^\dagger$ to the index register of Eq.~\eqref{eq:Omega-state} gives
\begin{equation}
\label{eq:mean-state-postselection}
\ket{\chi_m}=(U_N^\dagger\otimes \I)\ket{\Omega}
=
\ket{0}\ket{m} +\ket{\perp},
\end{equation}
where the index component of $\ket{\perp}$ is orthogonal to $\ket{0}$, and
\begin{equation}
\alpha
=
\left\|(\bra{0}\otimes \I)(U_N^\dagger\otimes \I)\ket{\Omega}\right\|^2=
\langle m \vert m \rangle
\end{equation}
is exactly the probability of obtaining outcome $\ket{0}$ on the index register.
In the algorithms in this paper, however, we do not need to estimate \(\alpha\) in
advance. Rather than normalizing the mean branch and then multiplying it by
\(\alpha\), we use the subnormalized branch \(\ketbra m\) directly. The
index \(\ket0\) acts as an
internal selector for the mean term, so the factor \(\alpha\) enters
automatically through the branch amplitude.

In practice, when feature-space centering is available in advance, one may work directly with centered feature states and omit the mean state. If one instead chooses an uncentered PCA variant, the \(\ketbra m\) term can also be dropped. 
However, for the centered protocol considered here, classical zero-mean inputs do not generally imply zero mean in a nonlinear quantum feature space. Moreover, for quantum-native data, there may be no accessible classical feature vector whose mean can be subtracted in advance. We therefore use the same subnormalized mean branch to center new test inputs coherently.

\paragraph{Preparation of a centered input.}
For a new input \(\mathbf x\), assume access to a test state preparation unitary
\begin{equation}
U_{\mathbf x}\ket{0}
=
\ket{\phi(\mathbf x)}.
\end{equation}
The centered feature vector associated with \(\mathbf x\) is
\begin{equation}
\ket{z_{\mathbf x}}
\coloneqq
\ket{\phi(\mathbf x)}-\ket m,
\qquad
\nu_{\mathbf x}
\coloneqq
\braket{z_{\mathbf x}}{z_{\mathbf x}}.
\end{equation}
Using the same subnormalized mean branch
\(\ket{\chi_m}=\ket0\ket m+\ket{\perp}\) from
Eq.~\eqref{eq:mean-state-postselection}, one can prepare the centered input by
a simple LCU postselection \cite{ChildsWiebe2012,Berry2015}. Prepare
\begin{equation}
\ket{\Xi_{\mathbf x}}
=
\frac{1}{\sqrt2}
\left(
\ket0_a\ket0_{\rm idx}\ket{\phi(\mathbf x)}
+
\ket1_a\ket{\chi_m}
\right),
\end{equation}
where \(a\) is an LCU ancilla. Postselect the ancilla onto
\begin{equation}
\ket{-}_a
=
\frac{\ket0_a-\ket1_a}{\sqrt2}
\end{equation}
and the index register onto \(\ket0_{\rm idx}\). The resulting unnormalized input state is
\begin{equation}
\frac12
\left(
\ket{\phi(\mathbf x)}-\ket m
\right)
=
\frac12\ket{z_{\mathbf x}}.
\end{equation}
Therefore the success probability of this centered state preparation is
\begin{equation}
P_{\rm succ}(\mathbf x)
=
\frac14\nu_{\mathbf x},
\end{equation}
and, conditioned on success, the prepared state is
\begin{equation}
\ket{\widetilde z_{\mathbf x}}
=
\frac{\ket{z_{\mathbf x}}}{\sqrt{\nu_{\mathbf x}}},
\qquad
\nu_{\mathbf x}>0.
\end{equation}
Again, no prior estimate of \(\alpha=\braket m m\) is required, since the
subnormalized mean vector \(\ket m\) is produced directly as a branch amplitude.
If \(\nu_{\mathbf x}=0\), then the test input coincides with the feature-space
mean and its unnormalized centered spectral energy is zero.

\subsection{Modified quantum thermal measurement}
We now describe how the Fermi--Dirac soft filter
\begin{equation}
M_{T,\mu}=\left(\I+e^{(\mu\I-C)/T}\right)^{-1}
\end{equation}
is realized as a quantum measurement.
A direct thermal-measurement implementation~\cite{LiuWilde2026} would use a Hamiltonian evolution generated by \(\hat p\otimes H(\mu)\), where \((\hat q,\hat p)\) denotes position and momentum quadratures with $[\hat q,\hat p]=i$ and
\begin{equation}
H(\mu)=\mu\I-C.
\end{equation}
In the quantum setting, we take \(C=C_\phi\). Using Eq.~\eqref{eq:centered-covariance-main} we obtain
\begin{equation}
\label{eq:A-decomposition-two-step}
H(\mu)=\mu\I-\bar\rho_\phi+\ketbra{m}.
\end{equation}
Now introduce the maximally mixed state
\begin{equation}
\tau_{\mathrm{mm}}\coloneqq \frac{\I}{d}.
\end{equation}
Then
\begin{equation}
\label{eq:A-decomposition-final}
H(\mu)=d\mu\,\tau_{\mathrm{mm}}-\bar\rho_\phi+\ketbra{m}.
\end{equation}

If the first identity term in Eq.~\eqref{eq:A-decomposition-final} was represented within the same state-based Hamiltonian simulation, it would inflate the coefficient budget without adding covariance information. Its coefficient \(d\mu\) depends on the chemical potential variable $\mu$ and scales as \(O(d)\), while the informative covariance terms have coefficients of order one.

Since this term commutes with the covariance terms and acts only as a scalar shift, we move it to the control side, equivalently to the final position threshold. The data-side evolution then only needs to implement the covariance operator \(C_\phi\).

\begin{proposition}[Modified thermal measurement]
\label{prop:modified-thermal measurement}
Fix a temperature decomposition
\begin{equation}
T=T_1T_2,
\qquad
T_1>0,\quad T_2>0.
\end{equation}
Let
\begin{equation}
g_{T_1}(q)
\coloneqq
\frac{e^{-q/T_1}}{T_1\bigl(1+e^{-q/T_1}\bigr)^2},
\qquad q\in\mathbb R,
\end{equation}
and define the displaced control qumode state
\begin{equation}
\ket{\psi_{T_1,\delta}}
\coloneqq
\int_{\mathbb R}\sqrt{g_{T_1}(q-\delta)}\,\ket{q}\,dq,
\qquad \delta\in\mathbb R.
\end{equation}
Consider the following procedure:
\begin{enumerate}[label=\arabic*.]
\item prepare the control qumode in the state \(\ket{\psi_{T_1,\delta}}\);
\item prepare the data register in a probe state \(\rho_{\mathrm{in}}\);
\item apply the joint evolution
\begin{equation}
\label{eq:modified-Hevolution}
U_C\coloneqq e^{-i\hat p\otimes C/T_2};
\end{equation}
\item measure the control register in the position basis, obtaining \(q\in\mathbb R\);
\item output \(1\) if \(q>\beta\) and \(0\) otherwise, where \(\beta\in\mathbb R\) is a fixed threshold.
\end{enumerate}
Then the resulting binary POVM is
\begin{equation}
\bigl(\widetilde M_{T,\delta,\beta},\,\I-\widetilde M_{T,\delta,\beta}\bigr),
\end{equation}
with
\begin{equation}
\label{eq:modified-effect}
\widetilde M_{T,\delta,\beta}
=
\left(\I+e^{(((\beta-\delta)T_2)\I-C)/T}\right)^{-1},
\end{equation}
and
\begin{equation}
\Pr(1\mid \rho_{\mathrm{in}},\beta,\delta)
=
\Tr\!\bigl[\widetilde M_{T,\delta,\beta}\rho_{\mathrm{in}}\bigr].
\end{equation}
In particular, if
\begin{equation}
\label{eq:offset-condition}
(\beta-\delta)T_2=\mu,
\end{equation}
then
\begin{equation}
\label{eq:betathermal}
\widetilde M_{T,\delta,\beta}
=
\left(\I+e^{(\mu\I-C)/T}\right)^{-1}
=
M_{T,\mu}.
\end{equation}
\end{proposition}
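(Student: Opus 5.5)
The plan is to diagonalize $C=\sum_j\lambda_j\ketbra{u_j}$ and work one eigen-branch at a time. Since $\hat p$ generates position translations, $e^{-i\hat p a}\ket{q}=\ket{q+a}$. Hence on the branch $\ket{u_j}$ the unitary $U_C$ acts on the control as $e^{-i\hat p\lambda_j/T_2}$. It maps $\ket{\psi_{T_1,\delta}}\otimes\ket{u_j}$ to
\[
\int_{\mathbb R}\sqrt{g_{T_1}(q-\delta-\lambda_j/T_2)}\,\ket{q}\,dq\otimes\ket{u_j}.
\]
For a general probe $\rho_{\mathrm{in}}$, I will write the output state and compute the position density of the control:
\[
p(q)=\sum_{j}\bra{u_j}\rho_{\mathrm{in}}\ket{u_j}\,g_{T_1}(q-\delta-\lambda_j/T_2).
\]
The off-diagonal terms $\bra{u_j}\rho_{\mathrm{in}}\ket{u_l}$ for $j\neq l$ drop out, because the data register is traced out and the branches $\ket{u_j}\bra{u_l}$ have zero trace. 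In operator form this says the effect for the event $q>\beta$ is
\[
\widetilde M=\int_\beta^\infty g_{T_1}\bigl(q\I-\delta\I-C/T_2\bigr)\,dq,
\]
the functional calculus of $C$. This immediately gives $\Pr(1\mid\rho_{\mathrm{in}},\beta,\delta)=\Tr[\widetilde M\rho_{\mathrm{in}}]$ and $0\preceq\widetilde M\preceq\I$. The complementary event gives $\I-\widetilde M$, since $g_{T_1}$ integrates to $1$.

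The scalar computation comes next. The density $g_{T_1}$ is the derivative of the logistic function $F(q)=(1+e^{-q/T_1})^{-1}$. The tail integral is therefore
\[
\int_\beta^\infty g_{T_1}(q-c)\,dq=1-F(\beta-c)=\bigl(1+e^{(\beta-c)/T_1}\bigr)^{-1}.
\]
Substituting $c=\delta+\lambda_j/T_2$ gives exponent $(\beta-\delta-\lambda_j/T_2)/T_1=((\beta-\delta)T_2-\lambda_j)/(T_1T_2)$. Using $T=T_1T_2$ and reassembling over the eigenbasis yields Eq.~\eqref{eq:modified-effect}. Imposing Eq.~\eqref{eq:offset-condition} then turns the effect into $M_{T,\mu}$, exactly as in Eq.~\eqref{eq:thermal-projector-def}.

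The main obstacle is purely one of rigor about conventions. The translation direction of $e^{-i\hat p a}$ must be checked against $[\hat q,\hat p]=i$ (with $\hat p=-i\,d/dq$, $e^{-ia\hat p}\psi(q)=\psi(q-a)$, which shifts the state to $+a$). The sign convention determines whether $\beta-\delta$ or $\delta-\beta$ appears, so I will fix it explicitly. It is also worth noting that $g_{T_1}$ is a normalized probability density, which makes $\ket{\psi_{T_1,\delta}}$ a normalized state. The decomposition over eigenbranches is valid because $C$ is Hermitian and finite-dimensional, so $e^{-i\hat p\otimes C/T_2}=\sum_j e^{-i\hat p\lambda_j/T_2}\otimes\ketbra{u_j}$.
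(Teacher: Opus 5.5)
Your proposal is correct and follows essentially the same route as the paper: diagonalize $C$, note that on each eigenbranch $\ket{u_j}$ the evolution translates the logistic control density to $g_{T_1}(q-\delta-\lambda_j/T_2)$, then evaluate the tail integral through the logistic CDF to obtain the Fermi--Dirac occupation $\bigl(1+e^{((\beta-\delta)T_2-\lambda_j)/T}\bigr)^{-1}$. Your explicit check that the coherences of a general $\rho_{\mathrm{in}}$ drop out of the position density, and your fixing of the translation sign convention, supply details the paper leaves implicit.
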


\begin{proof}
Let
\begin{equation}
C=\sum_{j=1}^d \lambda_j \ketbra{u_j}
\end{equation}
be the spectral decomposition of \(C\).
Under the evolution \(U_C=e^{-i\hat p\otimes C/T_2}\), each eigenmode \(\ket{u_j}\) shifts the
control position-quadrature distribution by \(\lambda_j/T_2\). Since the initial control density is
\(g_{T_1}(q-\delta)\), the position density conditioned on \(\ket{u_j}\) becomes
\begin{equation}
g_{T_1}\!\left(q-\delta-\frac{\lambda_j}{T_2}\right).
\end{equation}
Hence, the acceptance probability of mode \(j\) is
\begin{equation}
m_j
=
\int_{\beta}^{\infty}
g_{T_1}\!\left(q-\delta-\frac{\lambda_j}{T_2}\right)\,dq.
\end{equation}
Using the logistic cumulative distribution function
\begin{equation}
F_{T_1}(x)
\coloneqq
\int_{-\infty}^x g_{T_1}(q)\,dq
=
\frac{1}{1+e^{-x/T_1}},
\end{equation}
we obtain
\begin{align}
m_j
&=
1-
F_{T_1}\!\left(\beta-\delta-\frac{\lambda_j}{T_2}\right) \\
&=
\frac{1}{1+\exp\!\left(\frac{(\beta-\delta)T_2-\lambda_j}{T}\right)}.
\end{align}
Therefore
\begin{equation}
\widetilde M_{T,\delta,\beta}
=
\sum_{j=1}^d
\frac{1}{1+\exp\!\left(\frac{(\beta-\delta)T_2-\lambda_j}{T}\right)}
\ketbra{u_j}
=
\left(\I+e^{(((\beta-\delta)T_2)\I-C)/T}\right)^{-1},
\end{equation}
which proves Eq.~\eqref{eq:modified-effect}. The final statement follows immediately from
the condition~\eqref{eq:offset-condition}.
\end{proof}

The data-side quantum evolution Eq.~\eqref{eq:modified-Hevolution}
\begin{equation}
U_C = e^{-i\hat p\otimes C_\phi/T_2}
\end{equation} in Proposition~\ref{prop:modified-thermal measurement} no longer contains the large uninformative variable term \(d\mu\,\tau_{\mathrm{mm}}\). 
Meanwhile, the control qumode position $q$ has a logistic probability density
\begin{equation}
g_{T_1}(q-\delta)
=
\frac{e^{-(q-\delta)/T_1}}
{T_1\bigl(1+e^{-(q-\delta)/T_1}\bigr)^2}
=
\frac{1}{4T_1}
\sech^2\!\left(\frac{q-\delta}{2T_1}\right).
\end{equation}
Equivalently, the idealized position-space wavefunction is
\begin{equation}
\psi_{T_1,\delta}(q)
=
\sqrt{g_{T_1}(q-\delta)}
=
\frac{1}{2\sqrt{T_1}}
\sech\!\left(\frac{q-\delta}{2T_1}\right),
\end{equation}
so the control state is a smooth, symmetric, bell-shaped non-Gaussian wavepacket with
exponentially decaying tails that are broader than Gaussian tails. An approximate finite-energy realization of this shaped
continuous-variable wavepacket is sufficient in practice.

We now describe how to make this Hamiltonian simulation $U_C = e^{-i\hat p\otimes C_\phi/T_2}$.
Recall that Eq.~\eqref{eq:centered-covariance-main} gives the centered
feature-space covariance as
\begin{equation}
\label{eq:C-decomposition-final}
C_\phi=\bar\rho_\phi-\ketbra{m}.
\end{equation}
The empirical term is realized by the purification \(\ket{\Omega}\), while the
mean term is realized directly through the subnormalized mean branch
\begin{equation}
\label{eq:chi-m-branch}
\ket{\chi_m}
\coloneqq
(U_N^\dagger\otimes \I)\ket{\Omega}
=
\ket{0}\ket{m}+\ket{\perp},
\end{equation}
where the index component of \(\ket{\perp}\) is orthogonal to \(\ket{0}\).
We now instantiate the state-based Hamiltonian simulation using a deterministic
two substep product formula. Since the two terms in
Eq.~\eqref{eq:C-decomposition-final} have equal coefficient magnitude, there is
no need to randomly sample a label according to coefficient dependent
probabilities. Instead, each simulation round applies the empirical term and the
mean term once, with opposite signs.

Let
\begin{equation}
\Delta\coloneqq \frac{1}{R T_2},
\end{equation}
where \(R\) is the total number of deterministic simulation rounds. A single round consists of the following two substeps, 
as illustrated in Fig~\ref{fig:soft-pca-hamiltonian-sim}:
\begin{figure}[t]
\centering
\includegraphics[width=\textwidth]{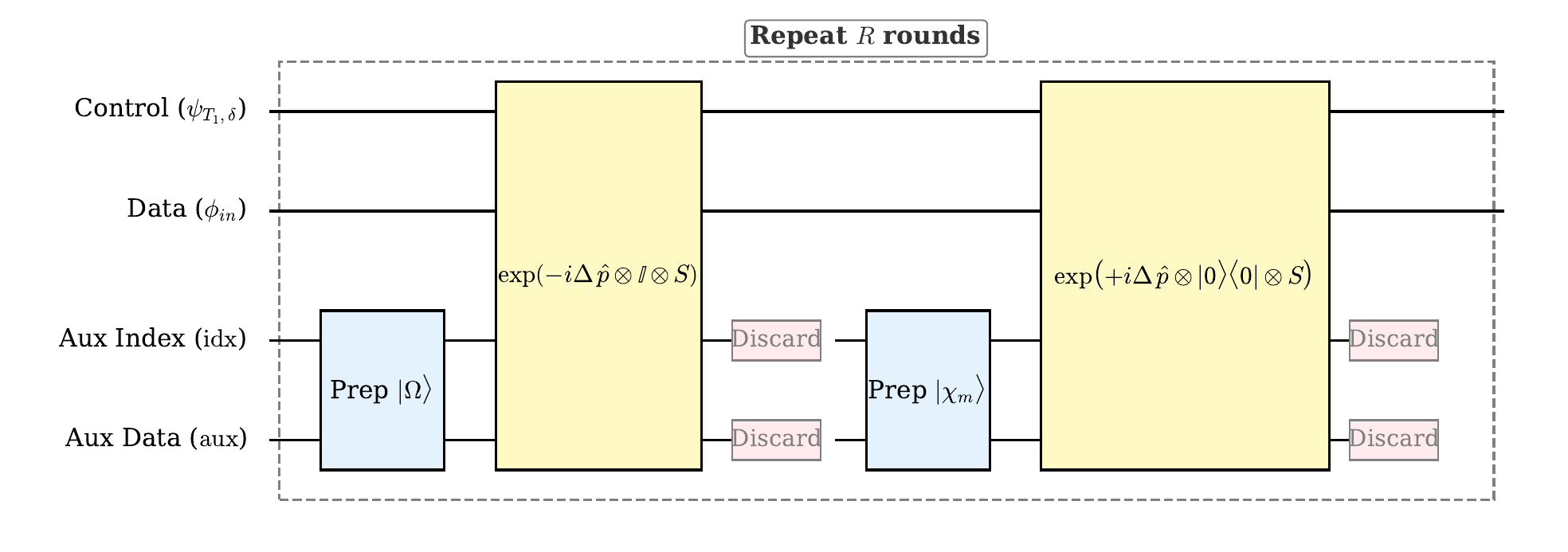}
\caption{\textbf{Implementation of the Hamiltonian evolution
\(
U_C\approx e^{-i\hat{p}\otimes C_\phi/T_2}
\)
for the soft PCA.} Each round applies two signed selector-swap substeps.
The empirical substep prepares
\(
\ket{\Omega}=N^{-1/2}\sum_i\ket{i}\ket{\phi_i}
\)
and applies
\(
\exp\!\left(
-i\,\Delta\,
\hat p\otimes
\I\otimes S_{\mathrm{data,aux}}\right)
\),
thereby generating \(+\bar\rho_\phi\). The mean-subtraction substep prepares
\(
\ket{\chi_m}=\ket0\ket m+\ket{\perp}
\)
and applies
\(
\exp\!\left(
+i\,\Delta\,
\hat p\otimes
\ketbra{0}\otimes S_{\mathrm{data,aux}}\right)
\),
thereby generating \(-\ketbra m\). The selector
\(\ketbra0_{\rm idx}\) is coherent, so no index postselection or
\(\alpha\)-estimation is required during Hamiltonian simulation. Repeating the
two substeps for \(R\) rounds yields Eq.~\eqref{eq:modified-Hevolution}.}
\label{fig:soft-pca-hamiltonian-sim}
\end{figure}

\begin{enumerate}[label=\arabic*.]
\item \textbf{Empirical covariance substep.}
Prepare one auxiliary copy of
\begin{equation}
\ket{\Omega}
=
\frac{1}{\sqrt N}\sum_{i=1}^N \ket{i}\ket{\phi_i}.
\end{equation}
Apply the short time swap evolution
\begin{equation}
\exp\!\left(
-i\,\Delta\,
\hat p\otimes
\I\otimes S_{\mathrm{data,aux}}\right),
\end{equation}
where \(S_{\mathrm{data,aux}}\) swaps the data register with the auxiliary data register and acts trivially on the index register. After discarding the
auxiliary registers, this substep implements the short time Hamiltonian term
\(+\bar\rho_\phi\).

\item \textbf{Mean-subtraction substep.}
Prepare one auxiliary copy of the subnormalized mean branch
\begin{equation}
\ket{\chi_m}
=
\ket{0}\ket{m}+\ket{\perp}.
\end{equation}
Apply the following operator which we call the \textit{opposite sign selector-swap evolution}
\begin{equation} \label{eq:selectorswap}
\exp\!\left(
+i\,\Delta\,
\hat p\otimes
\ketbra{0}\otimes S_{\mathrm{data,aux}}\right).
\end{equation}
The projector \(\ketbra{0}\) selects only the index \(\ket0\) branch
of Eq.~\eqref{eq:chi-m-branch}. Therefore, after discarding the auxiliary
registers, this substep implements the short time Hamiltonian term
\(-\ketbra m\).
\end{enumerate}

The selector in the second substep is coherent. We do not measure or postselect
the index register during the Hamiltonian simulation. The component
\(\ket{\perp}\) simply gives no contribution because it is annihilated by
\(\ketbra0_{\rm idx}\). Thus the same auxiliary index and feature registers can
be reset and reused in each round; no record of the index outcome, and no
additional register storing \(\alpha\) is required.

At the effective Hamiltonian level, the empirical substep contributes the
Hamiltonian
\(
+\hat p\otimes \bar\rho_\phi,
\)
while the mean-subtraction substep contributes
\(
-\hat p\otimes \ketbra m.
\)
Thus one deterministic round has the effective first order product form
\begin{equation}
U_{\rm round}
=
\exp\!\left(
+i\,\Delta\,\hat p\otimes \ketbra m
\right)
\exp\!\left(
-i\,\Delta\,\hat p\otimes \bar\rho_\phi
\right),
\qquad
\Delta=\frac{1}{R T_2}.
\end{equation}
By the first order Lie-Trotter formula~\cite{Trotter1959,Suzuki1990},
\begin{equation}
U_{\rm round}
=
\exp\!\left(
-i\,\Delta\,\hat p\otimes
(\bar\rho_\phi-\ketbra m)
\right)
+
O(\Delta^2).
\end{equation}
Since
\begin{equation}
C_\phi=\bar\rho_\phi-\ketbra m,
\end{equation}
repeating this deterministic round \(R\) times gives
\begin{equation}
U_{\rm round}^R
\approx
\exp\!\left(
-i\,\frac{1}{T_2}\hat p\otimes C_\phi
\right)
=
U_C.
\end{equation}

As a direct specialization of the swap based state exponentiation analysis of
~\cite{LiuWilde2026}, for any simulation error parameter
\(\varepsilon_{\mathrm{sim}}>0\), it suffices to take
\begin{equation}
\label{eq:estimationcost}
R
=
O\!\left(
\frac{1}{T_2^2\,\varepsilon_{\mathrm{sim}}}
\right)
\end{equation}
deterministic simulation rounds, where each round contains two controlled selector-swap substeps.

We have thus provided the complete implementation of the
modified thermal measurement described in
Proposition~\ref{prop:modified-thermal measurement}. We next show that this primitive suffices to calibrate the optimal soft PCA filters.

\subsection{Training-free soft PCA filter}
With the control-side displacement fixed to $\delta=0$, the parameter $\mu$ is no longer embedded in the data-side Hamiltonian. Instead, it is absorbed directly into the classical post-measurement threshold.
Following Proposition~\ref{prop:modified-thermal measurement}, we prepare the control qumode state
\begin{equation}
\ket{\psi_{T_1}}=
\ket{\psi_{T_1,0}}
=
\int_{\mathbb R}\sqrt{g_{T_1}(q)}\,\ket q\,dq,
\qquad
g_{T_1}(q)
=
\frac{e^{-q/T_1}}{T_1(1+e^{-q/T_1})^2}.
\end{equation}
With this fixed control state, all filters in the family
\begin{equation}
M_{T,\mu}=\left(\I+e^{(\mu\I-C)/T}\right)^{-1}
\end{equation}
are generated by the same fixed quantum evolution
\begin{equation}
\label{eq:uclast}
U_C=e^{-i\hat p\otimes C/T_2},
\end{equation}
and differ only by the classical threshold
\begin{equation}
\beta=\mu/T_2.
\end{equation}

For any probe state \(\rho_{\rm in}\), the circuit
\begin{equation}
\ket{\psi_{T_1}}\otimes \rho_{\rm in}
\ \xrightarrow{\;U_C=e^{-i\hat p\otimes C/T_2}\;}
\ \text{measure }q
\end{equation}
followed by the classical rule \(q>\beta\) implements
\begin{equation}
\Pr(q>\beta\mid \rho_{\rm in})
=
\Tr\!\left(M_{T,\beta T_2}\rho_{\rm in}\right).
\end{equation}
Thus the circuit can be used to estimate
\(\Tr(M_{T,\beta T_2}\rho_{\rm in})\) for any input state \(\rho_{\rm in}\). When the probe state is chosen as \(\rho_{\mathrm{in}}=\tau_{\mathrm{mm}}=\I/d\) for rank calibration, the
entire one-parameter family \(\{\Tr(M_{T,\mu})\}_{\mu\in\mathbb R}\) is encoded in the single position
distribution produced by this fixed quantum circuit.
A lightweight schematic of this fixed measurement primitive is shown in Fig.~\ref{fig:schematic}. We now delve into the details.
\begin{figure}[t]
    \centering
    \includegraphics[width=0.85\textwidth]{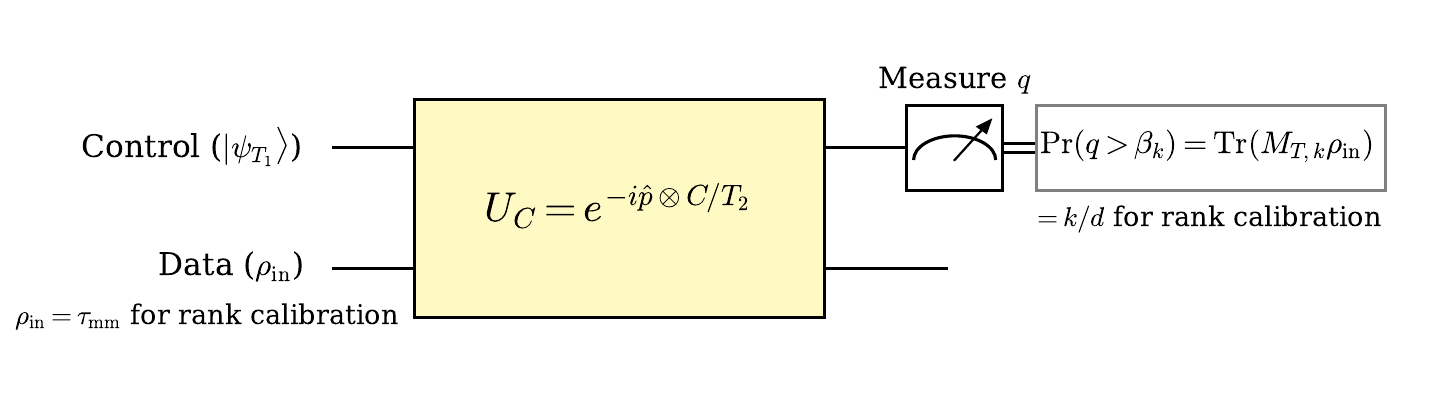}
    \caption{\textbf{Schematic of the quantum soft PCA.} A control qumode prepared in $\ket{\psi_{T_1}}$ and a data register in a generic probe state $\rho_{\mathrm{in}}$ undergo the joint Hamiltonian evolution $U_C = e^{-i\hat{p}\otimes C/T_2}$. By measuring the position quadrature $q$ of the control qumode and applying a simple classical threshold rule $q > \beta$, the protocol directly evaluates the soft PCA score $\Tr(M_{T,\beta T_2}\rho_{\mathrm{in}})$. This unified schematic applies to both the trace-budget calibration phase (e.g. $\rho_{\mathrm{in}} = \mathbb{I}/d$ for rank calibration) and the downstream inference on new test inputs. A specific implementation of the quantum circuit is shown in Fig~\ref{fig:soft-pca-hamiltonian-sim}, and the detailed rank calibration process is illustrated in Fig.~\ref{fig:training-free-soft-pca}.}
    \label{fig:schematic}
\end{figure}

In the rank calibration case, if
\(C=\sum_j \lambda_j\ketbra{u_j}\), the output position density is
\begin{equation}
f_C(q)
=
\frac1d\sum_{j=1}^d
g_{T_1}\!\left(q-\frac{\lambda_j}{T_2}\right).
\end{equation}
Therefore, the peaks in the physical position distribution are centered at \(\lambda_j/T_2\), while their width is determined by the control input distribution \(g_{T_1}\), equivalently by the parameter \(T_1\). In eigenvalue
units the corresponding smoothing scale is \(T=T_1T_2\).

Moreover, the optimal soft PCA filter for
any legal trace budget \(k\in\{1,\dots,d-1\}\) can be obtained directly by a classical quantile-selection post-processing step, rather than by an iterative training.
Let \(F_C\) denote the cumulative distribution function of the position outcome produced by the fixed circuit
Eq.~\eqref{eq:uclast}
with probe state \(\tau_{\mathrm{mm}}\), namely
\begin{equation}
F_C(\beta)\coloneqq \Pr(q\le \beta \mid \tau_{\mathrm{mm}}).
\end{equation}
Then, by Proposition~\ref{prop:modified-thermal measurement}, for every \(\beta\in\mathbb R\),
\begin{equation}
\label{eq:tail-trace-identity-short}
1-F_C(\beta)
=\Tr\bigl(M_{T,\beta T_2}\tau_{\mathrm{mm}}\bigr)
=\frac{1}{d}\Tr\bigl(M_{T,\beta T_2}\bigr).
\end{equation}
We get the following training-free optimal soft PCA filter as a calibration for all possible $k$
which is demonstrated in Fig.~\ref{fig:training-free-soft-pca}:

\begin{figure}[t]
    \centering
    \includegraphics[width=0.7\textwidth]{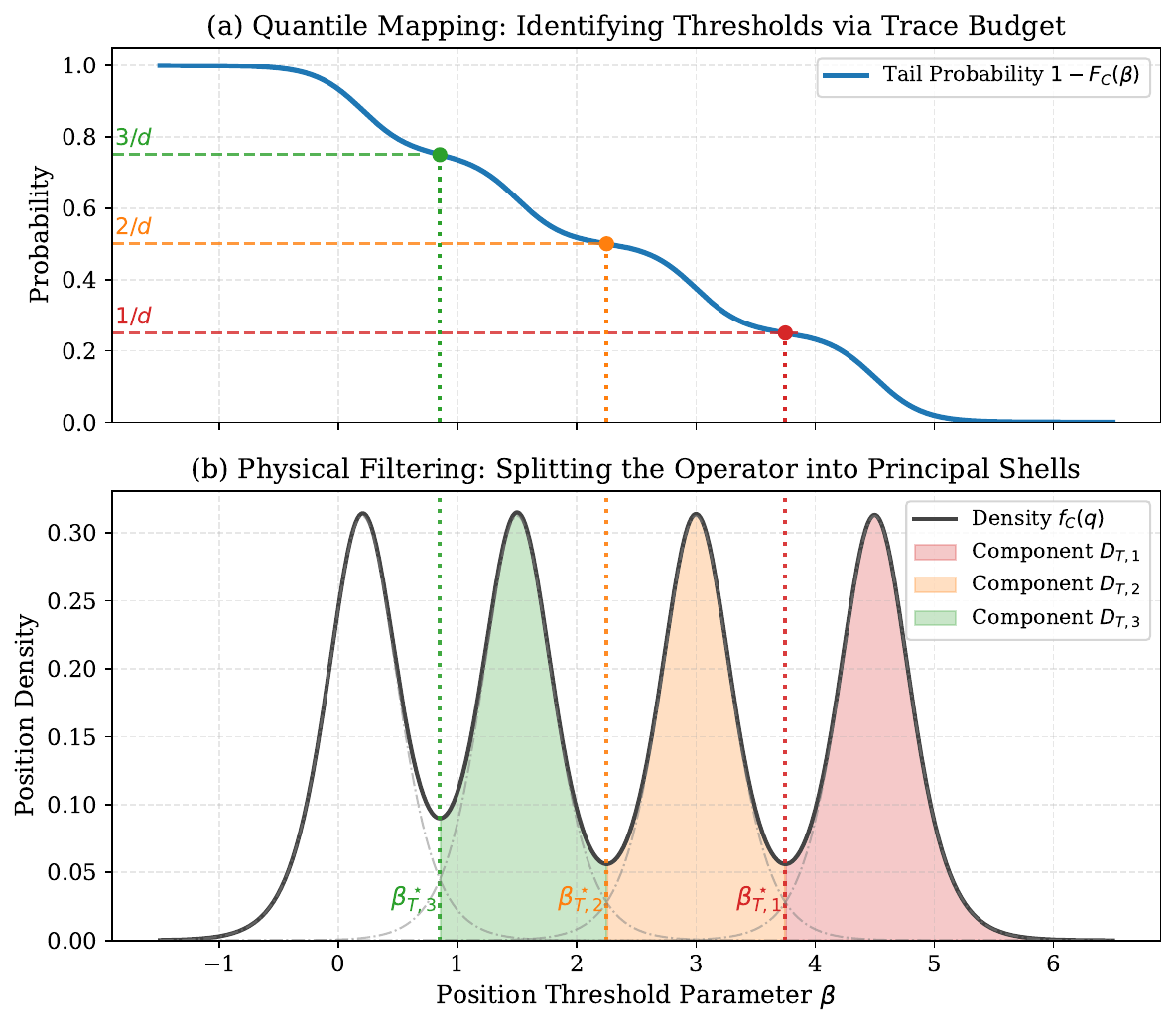}
    \caption{\textbf{Training-free calibration of optimal soft PCA filters via quantile mapping. }(a) The tail probability $1-F_C(\beta)$ decreases in a staircase-like manner. The optimal thresholds $\beta^\star_{k}$ are found strictly at the quantiles $k/d$. (b) The corresponding physical position density $f_C(q)$ (solid line), composed of distinct eigenmode peaks (dashed gray lines). The thresholds vertically partition the measured distribution into adjacent quantile windows,
    which implement the soft spectral components \(D_{T,k}\). When the relevant eigengaps are large and the peaks are well separated, the thresholds tend to lie in low-density valleys, making the selected filter robust to small threshold perturbations.}
    \label{fig:training-free-soft-pca}
\end{figure}

\begin{theorem}[Training-free soft PCA filter calibration]
\label{thm:quantile-soft-pca}
Fix \(\delta=0\) in Proposition~\ref{prop:modified-thermal measurement}. For each legal trace budget
\(k\in\{1,\dots,d-1\}\), let \begin{equation}
M_{T,k}^\star
\coloneqq
\left(\I+e^{(\mu_{k}^\star\I-C)/T}\right)^{-1}
\end{equation}
be the unique soft PCA optimizer characterized by
$\Tr(M_{T,k}^\star)=k$. Then the corresponding threshold
$\beta_k^\star\coloneqq \mu_k^\star/T_2$
is the unique solution of
\begin{equation}
\label{eq:quantile-equation}
1-F_C(\beta)=\frac{k}{d},
\end{equation}
and
\begin{equation}
M_{T,k}^\star=\widetilde M_{T,\beta_k^\star}.
\end{equation}
Hence the optimal soft PCA filter for every target rank \(k\) is obtained by classical quantile selection from the same fixed circuit position distribution, without any iterative quantum training or rank-dependent quantum circuit update.

Moreover, let \(q_1,\dots,q_S\) be \(S\) independent position samples from the same fixed circuit, and
\begin{equation}\widehat F_S(\beta)\coloneqq \frac{1}{S}\sum_{s=1}^S \mathbf 1\{q_s\le \beta\}\end{equation}
be the empirical cumulative distribution function. To guarantee that the empirical soft filter $\widehat{M}_{T,k}^\star$ satisfies a trace error bound
\begin{equation}
    \sup_{k\in\{1,\dots,d-1\}} \left| \Tr(\widehat{M}_{T,k}^\star) - k \right| \le \eta
\end{equation}
with probability at least $1-\delta_{\mathrm{fail}}$, it suffices to require a CDF estimation accuracy of $\eta/d$ and then requires a sample complexity of
\begin{equation}
S\ge \frac{d^2}{2\eta^2}\log\frac{2}{\delta_{\mathrm{fail}}},
\end{equation}
\end{theorem}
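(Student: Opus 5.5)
The plan has two parts. The first is the exact calibration statement, which follows from Proposition~\ref{prop:modified-thermal measurement} with $\delta=0$ and the tail identity~\eqref{eq:tail-trace-identity-short}. The second is the statistical guarantee, which follows from the Dvoretzky--Kiefer--Wolfowitz (DKW) inequality combined with a monotonicity argument.

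\textbf{Exact quantile characterization.} Define $\tau(\mu)\coloneqq \Tr(M_{T,\mu})=\sum_j (1+e^{(\mu-\lambda_j)/T})^{-1}$. By Proposition~\ref{thm:soft-pca-scalar-specialization}, $\tau'(\mu)=-T\,\Phi_T''(\mu)<0$, so $\tau$ is a continuous strictly decreasing bijection $\mathbb R\to(0,d)$. The equation $\tau(\mu)=k$ therefore has the unique solution $\mu_k^\star$ for each $k\in\{1,\dots,d-1\}$. By~\eqref{eq:tail-trace-identity-short}, $1-F_C(\beta)=\tau(\beta T_2)/d$, and since $\beta\mapsto\beta T_2$ is an increasing bijection ($T_2>0$), $1-F_C$ is continuous and strictly decreasing. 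Hence~\eqref{eq:quantile-equation} has exactly one solution, namely $\beta=\mu_k^\star/T_2=\beta_k^\star$. Plugging $\delta=0$ and $\beta=\beta_k^\star$ into the offset condition~\eqref{eq:offset-condition} gives $\widetilde M_{T,\beta_k^\star}=M_{T,\mu_k^\star}=M_{T,k}^\star$. The only $k$-dependence is in the classical threshold, which justifies the ``no circuit update'' clause.

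\textbf{Sample complexity.} Define the empirical threshold $\widehat\beta_k$ as any $\beta$ with $1-\widehat F_S(\beta)$ equal to $k/d$ up to the jump resolution. To avoid discreteness issues, I will take $\widehat\beta_k\coloneqq\inf\{\beta:\widehat F_S(\beta)\ge 1-k/d\}$, so that $|\widehat F_S(\widehat\beta_k)-(1-k/d)|$ is controlled up to the atom size $1/S$, which is absorbed into the final bound. Set $\widehat M_{T,k}^\star\coloneqq M_{T,\widehat\beta_k T_2}$. The key identity is
\[
\bigl|\Tr(\widehat M_{T,k}^\star)-k\bigr| = d\,\bigl|F_C(\widehat\beta_k)-(1-k/d)\bigr| \le d\Bigl(\bigl|F_C(\widehat\beta_k)-\widehat F_S(\widehat\beta_k)\bigr| + \bigl|\widehat F_S(\widehat\beta_k)-(1-k/d)\bigr|\Bigr),
\]
so uniform accuracy of the empirical CDF at level $\eta/d$ suffices. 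The DKW inequality with Massart's constant gives $\Pr\bigl(\sup_\beta|\widehat F_S(\beta)-F_C(\beta)|>\epsilon\bigr)\le 2e^{-2S\epsilon^2}$. Setting $\epsilon=\eta/d$ and requiring the right-hand side to be at most $\delta_{\mathrm{fail}}$ yields $S\ge \frac{d^2}{2\eta^2}\log\frac{2}{\delta_{\mathrm{fail}}}$. Because DKW is uniform in $\beta$, the bound holds simultaneously for all $k$, which gives the supremum over $k$ in the statement.

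\textbf{Main obstacle.} The delicate step is the discreteness of $\widehat F_S$: it is a step function, so $1-\widehat F_S(\beta)=k/d$ may have no exact solution. I would handle this in two moves. First, define $\widehat\beta_k$ via the generalized inverse above. Second, use continuity of $F_C$ together with the DKW event, $\sup|\widehat F_S-F_C|\le\epsilon$, to conclude $|F_C(\widehat\beta_k)-(1-k/d)|\le\epsilon$ directly. Concretely, for $\beta$ slightly below $\widehat\beta_k$ we have $\widehat F_S<1-k/d$, hence $F_C<1-k/d+\epsilon$; at $\widehat\beta_k$ we have $\widehat F_S\ge1-k/d$, hence $F_C\ge1-k/d-\epsilon$. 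Continuity of $F_C$ then closes the argument without needing the $1/S$ slack.
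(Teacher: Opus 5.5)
Your proposal is correct and follows the paper's proof. The exact part uses the tail identity $1-F_C(\beta)=\Tr(M_{T,\beta T_2})/d$ together with strict monotonicity of $\mu\mapsto\Tr(M_{T,\mu})$, and the statistical part applies DKW with Massart's constant at accuracy $\eta/d$. Your generalized-inverse and continuity argument makes explicit the step from uniform CDF accuracy to the trace bound, which the paper only asserts. One harmless slip: since $\Phi_T'(\mu)=k-\tau(\mu)$, you have $\tau'(\mu)=-\Phi_T''(\mu)$, not $-T\,\Phi_T''(\mu)$; only the sign is used, so nothing downstream changes.
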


\begin{proof}
For each \(k\), let \(\mu_k^\star\) be the unique solution of
\begin{equation}
\Tr(M_{T,\mu})=k.
\end{equation}
Setting
\begin{equation}
\beta_k^\star=\mu_k^\star/T_2,
\end{equation}
Eq.~\eqref{eq:tail-trace-identity-short} gives
\begin{equation}
1-F_C(\beta_k^\star)
=
\frac{1}{d}\Tr(M_{T,\mu_k^\star})
=
\frac{k}{d},
\end{equation}
so \(\beta_k^\star\) satisfies Eq.~\eqref{eq:quantile-equation}. Proposition~\ref{prop:modified-thermal measurement} then implies
\begin{equation}
\widetilde M_{T,\beta_k^\star}=M_{T,\mu_k^\star}=M_{T,k}^\star.
\end{equation}
Uniqueness follows from the strict monotonicity of \(\mu\mapsto \Tr(M_{T,\mu})\).

For the empirical statement, apply the Dvoretzky-Kiefer-Wolfowitz inequality with Massart's sharp constant~\cite{Dvoretzky1956,Massart1990} to the empirical CDF \(\widehat F_S\):
\begin{equation}
\label{eq:all-k-dkw-samples}
\Pr\!\left(
\sup_{\beta\in\mathbb R}
|\widehat F_S(\beta)-F_C(\beta)|>\eta/d
\right)
\le 2e^{-2S\eta^2/d^2}.
\end{equation}
This yields the stated sample complexity bound.
\end{proof}

The rank calibration procedure corresponding to
Theorem~\ref{thm:quantile-soft-pca} is summarized in
Algorithm~\ref{alg:rank-calibration}. It outputs a collection of classical thresholds. Once a threshold \(\beta_k\) is stored, the
corresponding soft filter is deployed by rerunning the same fixed thermal
measurement circuit and accepting the event \(q>\beta_k\). Thus all
rank dependence is confined to classical post-processing of the measured position variable.

\begin{algorithm}[t]
\caption{Rank calibration by empirical quantiles}
\label{alg:rank-calibration}
\begin{algorithmic}[1]
\Require Fixed circuit \(U_C=e^{-i\hat p\otimes C/T_2}\), maximally mixed probe
\(\tau_{\mathrm{mm}}=\I/d\), number of samples \(S\), and requested ranks
\(k\in\{1,\ldots,d-1\}\).

\State Run the fixed circuit $U_C$ with probe \(\tau_{\mathrm{mm}}\) for \(S\) shots
and record the position outcomes \(q_1,\ldots,q_S\).
\State Form the empirical CDF
\[
\widehat F_S(\beta)
\coloneqq
\frac1S\sum_{s=1}^S \mathbf 1\{q_s\le \beta\}.
\]
\For{each requested rank \(k\)}
    \State Set
    \[
    \widehat\beta_k
    \coloneqq
    \inf\left\{
    \beta\in\mathbb R:
    \widehat F_S(\beta)\ge 1-\frac{k}{d}
    \right\}.
    \]
\EndFor
\State \Return Calibrated thresholds \(\{\widehat\beta_k\}\).
\end{algorithmic}
\end{algorithm}

\begin{remark}
    Here the protocol for the training-free soft PCA filter is in fact very closely related to the Power of One Qumode model~\cite{liu2016power}, where the Hamiltonian in the Power of Qumode model is here replaced by that of the covariance matrix $C$. 
    When using the input state $\I/d$ and ancilla squeezed state input, one is also able to recover the eigenvalues of $C$ by studying the location of the minima of the probability distribution of the output position measurements, with eigenvalue resolution determined by the amount of squeezing in the input qumode ancilla state. 
\end{remark}

\begin{proposition}[Task-dependent sampling for a fixed target rank]
\label{prop:fixed-rank-sampling}
The all-\(k\) guarantee in Theorem~\ref{thm:quantile-soft-pca} is a uniform CDF guarantee. If, instead, one targets a single fixed rank \(K\), the relevant tail probability is
\begin{equation}
r_K\coloneqq \frac{K}{d}.
\end{equation}
To obtain an \(O(1)\) accuracy in the trace budget, one must estimate this tail probability to accuracy \(O(1/d)\). A Bernstein-type binomial concentration bound~\cite{Boucheron2013} gives the following sharper sufficient sample complexity: for trace accuracy \(\eta>0\) and failure probability \(\delta_{\mathrm{fail}}\), it is sufficient to take
\begin{equation}
\label{eq:fixed-K-sampling}
S
\ge
\left(
\frac{2K(d-K)}{\eta^2}
+
\frac{2d}{3\eta}
\right)
\log\frac{2}{\delta_{\mathrm{fail}}}.
\end{equation}
Thus, up to constants and logarithmic factors,
\begin{equation}
S
=
O\!\left(
\frac{K(d-K)}{\eta^2}
+
\frac{d}{\eta}
\right).
\end{equation}
In particular:
\begin{enumerate}[label=\roman*.]
    \item if \(K=O(1)\), then \(S=\widetilde{O}(d)\);
    \item if \(K\approx d/2\), then \(S=\widetilde{O}(d^2)\), matching the worst case all-\(k\) scaling;
    \item if only the normalized rank fraction \(K/d\) is required to constant accuracy, then the required probability accuracy is \(O(1)\), and the sampling cost is \(O(1)\).
\end{enumerate}
\end{proposition}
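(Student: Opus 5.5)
We work at the true optimal threshold \(\beta_K^\star=\mu_K^\star/T_2\) from Theorem~\ref{thm:quantile-soft-pca}. There the acceptance event \(\{q>\beta_K^\star\}\) under probe \(\tau_{\mathrm{mm}}\) has probability exactly \(r_K=K/d\). So the single-rank problem reduces to estimating one Bernoulli mean. The \(S\) indicators \(X_s=\mathbf 1\{q_s>\beta_K^\star\}\) are i.i.d.\ Bernoulli\((r_K)\), with variance
\[
\sigma^2=r_K(1-r_K)=\frac{K(d-K)}{d^2}.
\]
By Eq.~\eqref{eq:tail-trace-identity-short}, the trace of the corresponding filter is \(d\) times this tail probability. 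Hence trace accuracy \(\eta\) is equivalent to tail-probability accuracy \(t=\eta/d\). The same conversion explains the opening claim: an \(O(1)\) trace accuracy needs probability accuracy \(O(1/d)\).

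Next, apply Bernstein's inequality to the centered variables \(X_s-r_K\), which are bounded by \(1\) in absolute value:
\[
\Pr\!\left(\left|\tfrac1S\textstyle\sum_s X_s-r_K\right|>t\right)\le 2\exp\!\left(-\frac{S t^2}{2\sigma^2+\tfrac{2}{3}t}\right).
\]
Setting the right-hand side to \(\delta_{\mathrm{fail}}\) gives the sufficient condition
\[
S\ge \left(\frac{2\sigma^2}{t^2}+\frac{2}{3t}\right)\log\frac{2}{\delta_{\mathrm{fail}}}.
\]
Substituting \(t=\eta/d\) and \(\sigma^2=K(d-K)/d^2\) yields exactly Eq.~\eqref{eq:fixed-K-sampling}. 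Dropping constants and logarithms gives \(S=O(K(d-K)/\eta^2+d/\eta)\).

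The three regimes then follow by direct substitution.
\begin{enumerate}[label=\roman*.]
\item For \(K=O(1)\), take \(\eta=O(1)\). Then \(K(d-K)=O(d)\) and \(d/\eta=O(d)\), so \(S=\widetilde O(d)\).
\item For \(K\approx d/2\), we have \(K(d-K)\approx d^2/4\), so \(S=\widetilde O(d^2)\). This matches the DKW rate \(d^2/(2\eta^2)\) of Theorem~\ref{thm:quantile-soft-pca}.
\item If only the fraction \(K/d\) is needed to constant accuracy \(\epsilon\), then \(t=\epsilon\) directly. Since \(\sigma^2\le 1/4\), the bound gives \(S=O(\epsilon^{-2}\log(1/\delta_{\mathrm{fail}}))=O(1)\).
\end{enumerate}

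The main subtlety is the interpretation step. The Bernstein bound controls the empirical tail frequency at the fixed, correct threshold \(\beta_K^\star\). The calibrated threshold, however, is the empirical quantile \(\widehat\beta_K\) from Algorithm~\ref{alg:rank-calibration}, and what we actually want is \(|\Tr(\widehat M_{T,K}^\star)-K|\le\eta\) at that threshold.

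To transfer the guarantee, we use a standard quantile-inversion argument. Apply the Bernstein bound at the two deterministic thresholds where the true tail equals \(r_K\pm t\). With high probability, the empirical tail at these two points falls on the correct sides of \(r_K\). Because the empirical CDF is monotone, \(\widehat\beta_K\) is then sandwiched between them, and so \(|1-F_C(\widehat\beta_K)-r_K|\le t\).

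At each of these two points the variance is \((r_K\pm t)(1-r_K\mp t)\), which differs from \(\sigma^2\) only by \(O(t)\). That perturbation is absorbed into the linear term, and the union bound over two points is absorbed into the constant inside the logarithm. Continuity and strict monotonicity of \(F_C\) guarantee that these two deterministic thresholds exist. Both properties hold because \(F_C\) is a mixture of logistic CDFs.
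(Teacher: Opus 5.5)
Your core argument, Bernstein's inequality for the indicators of $\{q>\beta_K^\star\}$ with variance $r_K(1-r_K)=K(d-K)/d^2$ and tail accuracy $\eta/d$, is exactly the paper's proof, and your substitution reproduces Eq.~\eqref{eq:fixed-K-sampling} verbatim. Your quantile-inversion step is not in the paper and is a genuine improvement, because it moves the guarantee from the unknown population threshold to the calibrated $\widehat\beta_K$; note, though, that at the two bracketing points the variances satisfy $(r_K\pm t)(1-r_K\mp t)\le\sigma^2+t$, so with one-sided bounds at each point the sufficient condition for $\widehat\beta_K$ is $\bigl(2K(d-K)/\eta^2+8d/(3\eta)\bigr)\log(2/\delta_{\mathrm{fail}})$, and the displayed constants therefore hold exactly only at the fixed threshold $\beta_K^\star$.
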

\begin{proof}
For a fixed threshold \(\beta\), the event \(\{q>\beta\}\) is Bernoulli with success probability
\begin{equation}
G_C(\beta)\coloneqq 1-F_C(\beta).
\end{equation}
At the population rank-\(K\) threshold, \(G_C(\beta_K^\star)=K/d=r_K\). A trace error \(\eta\) corresponds to a tail-probability error
\begin{equation}
\epsilon_q=\frac{\eta}{d}.
\end{equation}
For Bernoulli samples with variance \(r_K(1-r_K)=K(d-K)/d^2\), Bernstein's inequality gives
\begin{equation}
\Pr\!\left(
|\widehat G_S-r_K|>\epsilon_q
\right)
\le
2\exp\!\left(
-
\frac{S\epsilon_q^2}
{2r_K(1-r_K)+2\epsilon_q/3}
\right).
\end{equation}
Substituting \(r_K=K/d\) and \(\epsilon_q=\eta/d\) yields Eq.~\eqref{eq:fixed-K-sampling}. 
\end{proof}

The distinction between Theorem~\ref{thm:quantile-soft-pca} and Proposition~\ref{prop:fixed-rank-sampling} is purely statistical and task dependent. The quantum circuit is unchanged. The \(d^2\) factor in the all-\(k\) bound is not caused by the Fermi--Dirac filter itself, but by the normalization
\begin{equation}
\Pr(q>\beta\mid \tau_{\mathrm{mm}})
=
\frac{1}{d}\Tr(M_{T,\beta T_2}).
\end{equation}
Thus a constant additive error in the unnormalized trace requires an \(O(1/d)\) additive error in a probability.

While Theorem~\ref{thm:quantile-soft-pca} controls the uniform CDF error, bounding the Euclidean distance of the threshold parameter $|\widehat\beta_k-\beta_k^\star|$ is both mathematically ill-conditioned and physically unnecessary. Operationally, the primary object of interest is not the coordinate $\beta$, but the selected soft filter $\widehat{M}_{T,k} \coloneqq M_{T,\widehat\beta_k T_2}$.
Since all filters \(M_{T,\beta T_2}\) are functions of the same operator \(C\), any two such filters commute and are ordered by the scalar threshold \(\beta\). Hence, for two thresholds \(\beta,\beta'\), the difference \(M_{T,\beta T_2}-M_{T,\beta'T_2}\) has a fixed sign and
\begin{equation}
\|M_{T,\beta T_2}-M_{T,\beta'T_2}\|_1
=
\left|\Tr(M_{T,\beta T_2})-\Tr(M_{T,\beta'T_2})\right|.
\end{equation}
Therefore, on the event \(\sup_\beta|\widehat F_S(\beta)-F_C(\beta)|\le \eta/d\), the empirically selected filter satisfies
\begin{equation}
\|\widehat M_{T,k}-M_{T,k}^\star\|_1\le \eta
\end{equation}

Consequently, the error in any downstream unnormalized score $\Tr(\widehat{M}_{T,k}\rho_{\rm in})$ is bounded by $\eta$, completely independent of the numerical fluctuation of $\widehat\beta_k$.

This distinction is crucial in the highly favorable large eigengap regime. A large spectral gap $\lambda_k - \lambda_{k+1}$ creates a near-zero ``probability valley" in the position density $f_C(q)$. While this flat region causes the empirical coordinate $\widehat\beta_k$ to be statistically ill-conditioned and exhibit high variance, any threshold drifting within this spectral gap does not cross new eigenvalue shells. This explains why operator-level calibration can remain stable even when the numerical threshold coordinate itself has high variance.

\begin{corollary}[Task-dependent quantum sampling complexity]
\label{cor:gate-complexity}
Each position sample from the fixed circuit \(U_C\) requires
\begin{equation}
O\!\left(\frac{1}{T_2^2\,\varepsilon_{\mathrm{sim}}}\right)
\end{equation}
controlled selector-swap substeps. Therefore:

\begin{enumerate}[label=\roman*.]
    \item \textbf{All-\(k\) trace calibration.}
    To determine the entire nested rank ladder with uniform trace accuracy \(\eta\) and failure probability at most \(\delta_{\mathrm{fail}}\), it suffices to use
    \begin{equation}
    O\!\left(
    \frac{d^2}{T_2^2\,\varepsilon_{\mathrm{sim}}\,\eta^2}
    \log\frac{2}{\delta_{\mathrm{fail}}}
    \right)
    \end{equation}
    controlled selector-swap substeps (see Eq.~\eqref{eq:selectorswap}).

    \item \textbf{Single fixed-rank calibration.}
    For one fixed target rank \(K\), the sharper binomial bound gives
    \begin{equation}
    O\!\left(
    \frac{1}{T_2^2\,\varepsilon_{\mathrm{sim}}}
    \left[
    \frac{K(d-K)}{\eta^2}
    +
    \frac{d}{\eta}
    \right]
    \log\frac{2}{\delta_{\mathrm{fail}}}
    \right)
    \end{equation}
    controlled selector-swap substeps.

    \item \textbf{Fractional rank calibration.}
    If the task only requires the normalized trace fraction \(K/d\) to additive accuracy \(\eta\), then the required probability accuracy is \(O(\eta)\), and the sampling contribution becomes
    \begin{equation}
    O\!\left(
    \frac{1}{T_2^2\,\varepsilon_{\mathrm{sim}}\,\eta^2}
    \log\frac{2}{\delta_{\mathrm{fail}}}
    \right).
    \end{equation}
\end{enumerate}
Multiplying these bounds by the gate cost per selector-swap micro-step gives the corresponding gate complexity.
\end{corollary}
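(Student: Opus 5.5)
The corollary is a bookkeeping statement. I would prove it by multiplying a per-sample circuit cost by the number of position samples each task needs.

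\emph{Per-sample cost.} A single run of the fixed circuit $U_C=e^{-i\hat p\otimes C_\phi/T_2}$ produces one position outcome $q$. It is implemented by $R$ deterministic rounds, and each round contains two selector-swap substeps (the empirical substep and the mean-subtraction substep of Eq.~\eqref{eq:selectorswap}). By Eq.~\eqref{eq:estimationcost}, $R=O(1/(T_2^2\varepsilon_{\mathrm{sim}}))$ suffices for simulation error $\varepsilon_{\mathrm{sim}}$. Each shot therefore costs $2R=O(1/(T_2^2\varepsilon_{\mathrm{sim}}))$ substeps. This cost does not depend on the threshold $\beta$, the target rank $k$, or the probe, since all rank dependence sits in classical post-processing (Theorem~\ref{thm:quantile-soft-pca}).

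\emph{Case i.} Theorem~\ref{thm:quantile-soft-pca} uses the DKW--Massart inequality at CDF accuracy $\eta/d$. It shows that $S\ge \frac{d^2}{2\eta^2}\log\frac{2}{\delta_{\mathrm{fail}}}$ samples give $\sup_k|\Tr(\widehat M_{T,k}^\star)-k|\le\eta$ with probability at least $1-\delta_{\mathrm{fail}}$. Multiplying by the per-shot cost gives the stated bound.

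\emph{Case ii.} Proposition~\ref{prop:fixed-rank-sampling} gives, via Bernstein's inequality, $S=O\bigl((K(d-K)/\eta^2+d/\eta)\log(2/\delta_{\mathrm{fail}})\bigr)$. Multiplying by the per-shot cost gives the stated bound.

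\emph{Case iii.} Here only the normalized fraction $\Tr(M)/d=1-F_C(\beta)$ is needed to accuracy $\eta$. I would rerun the DKW argument (or Hoeffding at the single threshold) with probability accuracy $\eta$ instead of $\eta/d$. This gives $S\ge\frac{1}{2\eta^2}\log\frac{2}{\delta_{\mathrm{fail}}}$, which is independent of $d$. Multiplying by the per-shot cost gives the stated bound.

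\emph{Main obstacle: combining simulation error with sampling error.} The samples come from the approximate circuit $U_{\rm round}^R$, not the exact $U_C$. I would bound the total-variation distance between the ideal and implemented position distributions by the diamond-norm simulation error $\varepsilon_{\mathrm{sim}}$. This shifts every tail probability by at most $\varepsilon_{\mathrm{sim}}$. The error is then absorbed by requiring $\varepsilon_{\mathrm{sim}}$ to be at most a constant fraction of the target probability accuracy: $\eta/d$ in cases i and ii, and $\eta$ in case iii. The statistical bounds are then applied with accuracy halved, which changes only constants. Since the corollary keeps $\varepsilon_{\mathrm{sim}}$ as a free parameter, it is enough to state this as a side condition. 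The final sentence, converting substep counts into gate counts, follows by multiplying by the per-substep gate cost.
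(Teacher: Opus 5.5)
Your proposal matches the paper's proof. The paper likewise just multiplies the per-shot cost from Eq.~\eqref{eq:estimationcost} by the DKW, Bernstein and Hoeffding sample counts in the three cases. Your extra step of charging the simulation error to the tail probabilities, requiring $\varepsilon_{\mathrm{sim}}\lesssim\eta/d$ in cases i--ii and $\varepsilon_{\mathrm{sim}}\lesssim\eta$ in case iii, does not appear in the paper's proof, which tacitly treats the samples as coming from the exact $U_C$. It is a correct refinement, and it makes explicit that the stated accuracy guarantees hold only under that side condition.
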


\begin{proof}
The first item combines the DKW sample complexity in Eq.~\eqref{eq:all-k-dkw-samples} with the cost of one simulated thermal measurement sample. The second item combines Proposition~\ref{prop:fixed-rank-sampling} with the same per-sample simulation cost. The third item is the standard Bernoulli/Hoeffding scaling for estimating a probability to constant additive accuracy.
\end{proof}

Although Theorem~\ref{thm:quantile-soft-pca} provides the thresholds for all
\(k\) simultaneously, quantitative approximation to the hard rank-\(k\) spectrum is a
\emph{per-\(k\)} statement. For each \(k\), the entropy argument gives
\begin{equation}
0\le v_{\mathrm{hard}}^{(k)}-\Tr(CM_{T,k}^\star)\le T\,d\,h(k/d),
\end{equation}
while under the low temperature local eigengap condition $\lambda_k(C)>\mu_k^\star>\lambda_{k+1}(C)$,
\begin{equation}
\label{eq:per-k-gap-bound}
\varepsilon_{T,k}\coloneqq \|M_{T,k}^\star-P_k\|_{\mathrm{op}}
\le e^{-g_{T,k}/T},
\qquad
g_{T,k}\coloneqq
\min\{\lambda_k(C)-\mu_k^\star,\ \mu_k^\star-\lambda_{k+1}(C)\}.
\end{equation}

Therefore, a temperature chosen to give a prescribed accuracy for one target rank \(K\) does not
automatically provide the same guarantee for all larger or smaller \(k\). The all-\(k\) profile is
always available, but only those ranks for which both \(T\,d\,h(k/d)\) and
\(e^{-g_{T,k}/T}\) are sufficiently small admit a certified approximation to the corresponding hard
PCA spectrum.

The thresholds \(\beta_k^\star\) are strictly decreasing in \(k\). Hence the
filters are nested:
\begin{equation}
0\preceq M_{T,1}^\star \preceq M_{T,2}^\star \preceq
\cdots \preceq M_{T,d-1}^\star \preceq \I .
\end{equation}
Define
\begin{equation}
M_{T,0}^\star\coloneqq 0,
\qquad
M_{T,d}^\star\coloneqq \I,
\qquad
D_{T,k}\coloneqq M_{T,k}^\star-M_{T,k-1}^\star,
\quad k=1,\dots,d.
\end{equation}
Then
\begin{equation}
D_{T,k}\succeq 0,
\qquad
\Tr(D_{T,k})=1,
\qquad
\sum_{k=1}^d D_{T,k}=\I.
\end{equation}
Thus \(\{D_{T,k}\}_{k=1}^d\) forms a soft spectral resolution of the covariance
operator \(C\). At low temperature, each \(D_{T,k}\) concentrates around the
corresponding principal component.

Operationally, the operators \(D_{T,k}\) are obtained by resolving the position
outcome into adjacent quantile bins. Let
\begin{equation}
W_{1}\coloneqq (\beta_{1}^\star,\infty),
\end{equation}
\begin{equation}
W_{k}\coloneqq
(\beta_k^\star,\beta_{k-1}^\star],
\qquad
k=2,\dots,d-1,
\end{equation}
and
\begin{equation}
W_{d}\coloneqq (-\infty,\beta_{d-1}^\star].
\end{equation}
Then the window event \(q\in W_{k}\) implements the POVM effect \(D_{T,k}\).

\begin{proposition}[Soft filters and component states from window-conditioned outputs]
\label{prop:shell-output-Dk}
Let the probe state be the maximally mixed state $\tau_{\mathrm{mm}} = \I/d$. For a measurable position window $W \subseteq \mathbb{R}$, let $E(W)$ be its associated POVM effect. Whenever $\Tr(E(W)) > 0$, the postselected data-register state conditioned on $q \in W$ is
\begin{equation}
\label{eq:window-conditioned-output}
\rho_{\mathrm{out}\mid W} = \frac{E(W)}{\Tr(E(W))}.
\end{equation}

In particular, this framework prepares the following states:
\begin{enumerate}
    \item For the position window $W_k$, the associated POVM effect is $E(W_k) = D_{T,k}$, and the bin-conditioned output state is exactly the soft component state:
    \begin{equation}
    \rho_{\mathrm{out}\mid W_k} = D_{T,k}.
    \end{equation}

    \item For the cumulative tail window $W_{\le K} \coloneqq (\beta_K^\star, \infty)$, the associated POVM effect is $E(W_{\le K}) = M_{T,K}^\star$, and the tail-conditioned output state is the normalized cumulative soft filter:
    \begin{equation}
    \rho_{\mathrm{out}\mid W_{\le K}} = \frac{1}{K}M_{T,K}^\star.
    \end{equation}
\end{enumerate}
\end{proposition}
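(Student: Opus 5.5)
I would first derive the post-measurement state of the data register for a general probe $\rho_{\mathrm{in}}$, and only then specialize to $\tau_{\mathrm{mm}}$. Write $C=\sum_j\lambda_j\ketbra{u_j}$.

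Since $U_C=e^{-i\hat p\otimes C/T_2}$ is block diagonal in the eigenbasis of $C$, we have
\begin{equation}
U_C\bigl(\ket{\psi_{T_1}}\otimes\ket{u_j}\bigr)=\bigl(e^{-i\hat p\lambda_j/T_2}\ket{\psi_{T_1}}\bigr)\otimes\ket{u_j},
\end{equation}
i.e.\ the control wavefunction becomes $\psi_{T_1}(q-\lambda_j/T_2)$. Measuring position and keeping only whether $q\in W$ (coarse-grained, nonselective within $W$) then yields the unnormalized data-register state
\begin{equation}
\tilde\rho_W=\int_W K_q\,\rho_{\mathrm{in}}\,K_q^\dagger\,dq,
\qquad
K_q=\sum_j\psi_{T_1}(q-\lambda_j/T_2)\ketbra{u_j}.
\end{equation}
Here $K_q$ is the Kraus operator for outcome $q$, diagonal in the eigenbasis of $C$.

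For $\rho_{\mathrm{in}}=\I/d$ the probe commutes with every $K_q$, so $K_q\rho_{\mathrm{in}}K_q^\dagger=\frac1d K_qK_q^\dagger$. Since $|\psi_{T_1}|^2=g_{T_1}$, this gives
\begin{equation}
\tilde\rho_W=\frac1d\sum_j\Bigl(\int_W g_{T_1}(q-\lambda_j/T_2)\,dq\Bigr)\ketbra{u_j}=\frac1d E(W),
\end{equation}
where $E(W)=\int_W K_q^\dagger K_q\,dq$ is exactly the POVM effect of the window event. Normalizing by $\Pr(q\in W)=\Tr(E(W))/d>0$ gives Eq.~\eqref{eq:window-conditioned-output}. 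The main point to handle carefully is that the general post-measurement state is $\int_W K_q\rho K_q^\dagger$ rather than $\sqrt{E}\rho\sqrt{E}$. The identity with $E(W)/\Tr E(W)$ relies on two facts: the maximally mixed probe commutes with all $K_q$, and the phases of $\psi$ cancel in $K_qK_q^\dagger$. I would state explicitly that the result fails for a general $\rho_{\mathrm{in}}$ without this commutation.

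For the specific windows, I would apply Proposition~\ref{prop:modified-thermal measurement} with $\delta=0$, which gives $E((\beta,\infty))=M_{T,\beta T_2}$.

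\textbf{Cumulative tail window.} Taking $\beta=\beta_K^\star$ and using Theorem~\ref{thm:quantile-soft-pca}, $E(W_{\le K})=M_{T,K}^\star$ with $\Tr(M_{T,K}^\star)=K$. The normalized cumulative filter $M_{T,K}^\star/K$ follows.

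\textbf{Quantile bins.} For $2\le k\le d-1$, $W_k=(\beta_k^\star,\infty)\setminus(\beta_{k-1}^\star,\infty)$, which uses the strict monotonicity $\beta_{k-1}^\star>\beta_k^\star$. Additivity of $E$ over disjoint sets gives
\begin{equation}
E(W_k)=M_{T,k}^\star-M_{T,k-1}^\star=D_{T,k}.
\end{equation}
The edge windows follow from the conventions $M_{T,0}^\star=0$ and $M_{T,d}^\star=\I$. Indeed, $W_1$ is the tail window for $k=1$. For $W_d$, $E$ of the full line is $\I$, since each $g_{T_1}$ integrates to one, so $E(W_d)=\I-M_{T,d-1}^\star=D_{T,d}$. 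Since $\Tr(D_{T,k})=1$, the normalized output is $D_{T,k}$ itself.
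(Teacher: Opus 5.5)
Your proof is correct and takes essentially the same route as the paper. Both arguments write the window instrument $\int_W K_q\rho K_q^\dagger\,dq$ with the Kraus operators $K_q$, which are diagonal in the eigenbasis of $C$, and use the fact that $\tau_{\mathrm{mm}}$ commutes with every $K_q$ to obtain $E(W)/d$, which is then normalized. Your additivity argument over the disjoint decomposition $(\beta_k^\star,\infty)=W_k\cup(\beta_{k-1}^\star,\infty)$, which relies on strict monotonicity of the thresholds, makes explicit the identification $E(W_k)=D_{T,k}$ that the paper simply asserts ``by definition.''
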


\begin{proof}
Recall that the measurement instrument corresponding to a window $W$ is defined as $\mathcal I_W(\rho) \coloneqq \int_W K_q\rho K_q^\dagger\,dq$, where the Kraus operators are
\begin{equation}
K_q = \bra q U_C\ket{\psi_{T_1}} = \sum_{i=1}^d \sqrt{g_{T_1}\!\left(q-\frac{\lambda_i}{T_2}\right)} \,\ketbra{u_i}.
\end{equation}
Since $K_q$ is Hermitian and diagonal in the eigenbasis of $C$, the POVM effect associated with $W$ is exactly $E(W) = \int_W K_q^\dagger K_q\,dq = \int_W K_q^2\,dq$.

Evaluating the instrument on the maximally mixed state $\tau_{\mathrm{mm}} = \I/d$, we find
\begin{equation}
\mathcal I_W(\tau_{\mathrm{mm}}) = \int_W K_q\frac{\I}{d}K_q\,dq = \frac{1}{d} \int_W K_q^2\,dq = \frac{1}{d}E(W).
\end{equation}
The probability of obtaining a measurement outcome $q \in W$ is therefore
\begin{equation}
\Pr(q\in W\mid \tau_{\mathrm{mm}}) = \Tr\!\left[\mathcal I_W(\tau_{\mathrm{mm}})\right] = \frac{1}{d}\Tr(E(W)).
\end{equation}
Normalizing the postselected state yields the general relation in Eq.~\eqref{eq:window-conditioned-output}:
\begin{equation}
\rho_{\mathrm{out}\mid W} = \frac{\mathcal I_W(\tau_{\mathrm{mm}})}{\Tr[\mathcal I_W(\tau_{\mathrm{mm}})]} = \frac{E(W)}{\Tr(E(W))}.
\end{equation}

Applying this general relation to the specific windows yields the remaining claims:
For the position window $W_k$, we have $E(W_k) = D_{T,k}$ by definition. Since $\Tr(D_{T,k})=1$, the success probability is $1/d$, and the conditioned state is $\rho_{\mathrm{out}\mid W_k} = D_{T,k}$.

For the cumulative tail window $W_{\le K}$, the POVM effect is the cumulative soft filter $E(W_{\le K}) = M_{T,K}^\star$. Since $\Tr(M_{T,K}^\star)=K$, the success probability is $K/d$, yielding the conditioned state $\rho_{\mathrm{out}\mid W_{\le K}} = M_{T,K}^\star / K$.
\end{proof}

The operator \(D_{T,k}\) should be interpreted as the \(k\)-th soft principal
component state. At finite temperature it need not be rank one. In the nondegenerate low temperature regime, \(D_{T,k}\) concentrates around the
usual \(k\)-th principal eigenspace. More precisely, assume
\begin{equation}
\lambda_1>\lambda_2>\cdots>\lambda_{K+1}.
\end{equation}
For \(k=1,\dots,K\), let
\begin{equation}
\Delta_{T,k}
\coloneqq
\varepsilon_{T,k}+\varepsilon_{T,k-1},
\qquad
\varepsilon_{T,0}\coloneqq 0.
\end{equation}
Then
\begin{equation}
\label{eq:Dk-rank-one-approx}
\|D_{T,k}-\ketbra{u_k}\|_{\mathrm{op}}
\le
\Delta_{T,k}.
\end{equation}
Thus, as \(T\to 0\) under a fixed eigengap, the window-conditioned component
state \(D_{T,k}\) converges in operator norm to the rank-one principal projector
\(\ketbra{u_k}\).

Consequently, for any normalized input state \(\rho_{\mathrm{in}}\),
\begin{equation}
\left|
\Tr(D_{T,k}\rho_{\mathrm{in}})
-
\Tr(\ketbra{u_k}\rho_{\mathrm{in}})
\right|
\le
\Delta_{T,k}.
\end{equation}

Thus, if a classical description is needed, it can be obtained by tomography of
the window-conditioned output states \(D_{T,k}\) and \(M_{T,K}^\star/K\),
respectively.

\subsection{Normalized retained variance calibration}
\label{sec:covariance-probe-profile}

The maximally mixed probe used in Theorem~\ref{thm:quantile-soft-pca}
calibrates the trace of the soft filter. A second natural probe gives the
retained variance calibration directly. This is closer to the classical explained-variance criterion often used for selecting the number of principal components. A covariance-weighted probe would estimate the retained variance fraction directly, so this formulation avoids the \(1/d\) trace-conversion factor that appears in rank calibration. Let
\begin{equation}
V_C\coloneqq \Tr(C),
\qquad
\rho_C\coloneqq \frac{C}{V_C}.
\end{equation}
We get the normalized covariance probe \(\rho_C\) using the LCU centered-input preparation described in Section~\ref{sec:qcovariance}. 
Indeed, sampling \(i\) uniformly and postselecting the centered branch
\begin{equation}
\ket{z_i}\coloneqq \ket{\phi_i}-\ket m
\end{equation}
produces, conditioned on success, the mixed state
\begin{equation}
\label{eq:rho_c}
\frac{1}{V_C}
\frac1N\sum_{i=1}^N \ketbra{z_i}
=
\frac{C_\phi}{V_C}
=
\rho_C.
\end{equation}
Because the procedure conditions on the successful centered LCU branch, \(V_C\) need not be known classically.

Run the same quantum evolution circuit
\begin{equation}
U_C=e^{-i\hat p\otimes C/T_2}
\end{equation}
with input probe \(\rho_C\), and define the tail function
\begin{equation}
G_C(\beta)
\coloneqq
\Pr(q>\beta\mid \rho_C).
\end{equation}
By Proposition~\ref{prop:modified-thermal measurement},
\begin{equation}
\label{eq:covariance-probe-tail-main}
G_C(\beta)
=
\Tr\!\left(\rho_C M_{T,\beta T_2}\right).
\end{equation}
Thus \(G_C(\beta)\) is exactly the normalized retained covariance energy of the
soft filter selected by threshold \(\beta\).
\begin{remark}
For the unnormalized target $\Tr\!\left(C M_{T,\beta T_2}\right)$, recall that the centered LCU preparation success probability is
\begin{equation}
P_{\rm succ}
=
\frac14\,\Tr(C)
=
\frac{V_C}{4},
\end{equation}
and the conditioned state is \(\rho_C=C/V_C\). Hence the joint probability
\begin{equation}
\Pr(q>\beta)
=
\frac{V_C}{4}\,
\Tr(\rho_C M_{T,\beta T_2})
=
\frac14\Tr(C M_{T,\beta T_2}).
\end{equation}
This can also be used as a calibration target if needed. 
\end{remark}

In parallel, the maximally mixed probe gives
\begin{equation}
\Pr(q>\beta\mid \tau_{\rm mm})
=
\frac{1}{d}\Tr(M_{T,\beta T_2}).
\end{equation}
Together, the maximally mixed scan and covariance-probe scan produce the two quantities needed to describe the soft explained-variance profile:
\begin{equation}
\beta
\longmapsto
\left(
\Tr(M_{T,\beta T_2}),
\;
\Tr(\rho_C M_{T,\beta T_2})
\right).
\end{equation}

For a target normalized variance level \(\vartheta\in(0,1)\), the threshold
\(\beta_\vartheta\) is determined by
\begin{equation}
G_C(\beta_\vartheta)=\vartheta,
\end{equation}
and the associated variance-indexed soft filter is
\begin{equation}
M_{T,\vartheta}
\coloneqq
M_{T,\beta_\vartheta T_2}.
\end{equation}
This filter is also the trace-constrained soft PCA optimizer for the soft trace
budget
\(
k_T(\vartheta)
=
\Tr(M_{T,\vartheta}).
\)
Hence the covariance-probe scan does not introduce a different finite-temperature
optimization problem; it selects a point on the same trace-constrained soft PCA
solution path, indexed by retained variance rather than by trace.

The corresponding empirical retained-variance calibration is given in
Algorithm~\ref{alg:retained-variance-calibration}. As in the rank case, the
algorithm only returns thresholds; deployment again uses the same fixed circuit
with the tail rule \(q>\beta_\vartheta\).

\begin{algorithm}[t]
\caption{Retained-variance calibration by empirical tail probabilities}
\label{alg:retained-variance-calibration}
\begin{algorithmic}[1]
\Require Fixed circuit \(U_C=e^{-i\hat p\otimes C/T_2}\), covariance probe
\(\rho_C=C/V_C\), number of successful probe samples \(S\), and requested
levels \(\vartheta\in(0,1)\).

\State Run the fixed circuit $U_C$ with probe \(\rho_C\) for \(S\) successful shots
and record the position outcomes \(q_1,\ldots,q_S\).
\State Form the empirical tail function
\[
\widehat G_C(\beta)
\coloneqq
\frac1S\sum_{s=1}^S \mathbf 1\{q_s>\beta\}.
\]
\For{each requested retained-variance level \(\vartheta\)}
    \State Set
    \[
    \widehat\beta_\vartheta
    \coloneqq
    \inf\left\{
    \beta\in\mathbb R:
    \widehat G_C(\beta)\le \vartheta
    \right\}.
    \]
\EndFor
\State \Return Calibrated thresholds \(\{\widehat\beta_\vartheta\}\).
\end{algorithmic}
\end{algorithm}

\begin{proposition}[Sample complexity of the retained variance scan]
\label{prop:cov-probe-profile-complexity}
Let \(q_1,\dots,q_S\) be independent position samples obtained from the fixed
circuit with probe \(\rho_C\), and let \(\widehat G_C\) be the empirical tail
function. To guarantee
\begin{equation}
\sup_{\beta\in\mathbb R}
\left|
\widehat G_C(\beta)-G_C(\beta)
\right|
\le
\eta
\end{equation}
with probability at least \(1-\delta\), it suffices to take
\begin{equation}
S
\ge
\frac{1}{2\eta^2}
\log\frac{2}{\delta}.
\end{equation}
Consequently, the entire retained variance profile, and in particular the
threshold for any target \(\vartheta\in(0,1)\), is calibrated with
dimension-independent successful-sample complexity
\(O(\eta^{-2})\). If raw centered-LCU preparation attempts are counted, the expected number of
attempts is larger by the factor \(4/V_C\).
\end{proposition}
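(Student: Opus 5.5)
The plan mirrors the proof of Theorem~\ref{thm:quantile-soft-pca}: apply the Dvoretzky--Kiefer--Wolfowitz inequality with Massart's constant. The key simplification is that the covariance probe has unit-normalized tail, so no $1/d$ rescaling appears. The argument has three steps.

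\emph{Step 1: the samples are i.i.d.} Each successful centered-LCU shot produces the probe $\rho_C$ of Eq.~\eqref{eq:rho_c}, and $\rho_C$ does not depend on $V_C$ or on past outcomes. I will treat the $S$ successful shots as independent runs of the fixed circuit $U_C$ with input $\rho_C$. Then $q_1,\dots,q_S$ are i.i.d.\ real random variables. Their common distribution function is $F(\beta)=1-G_C(\beta)$, by Eq.~\eqref{eq:covariance-probe-tail-main}.

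\emph{Step 2: apply DKW to the tail.} The empirical tail satisfies $\widehat G_C(\beta)=1-\widehat F_S(\beta)$, so
\[
\sup_\beta|\widehat G_C(\beta)-G_C(\beta)|=\sup_\beta|\widehat F_S(\beta)-F(\beta)|.
\]
The DKW--Massart bound gives probability at most $2e^{-2S\eta^2}$ that this supremum exceeds $\eta$. The bound holds for any distribution, including mixtures of logistic densities. Setting $2e^{-2S\eta^2}\le\delta$ yields the condition $S\ge \frac{1}{2\eta^2}\log\frac{2}{\delta}$. This bound does not depend on $d$, because $G_C$ is already the normalized retained variance.

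\emph{Step 3: from the uniform bound to thresholds, and counting attempts.} The uniform statement covers all $\beta$ simultaneously, so it calibrates the whole profile $\beta\mapsto G_C(\beta)$. In particular, it calibrates the threshold $\widehat\beta_\vartheta$ of Algorithm~\ref{alg:retained-variance-calibration} for every $\vartheta$ simultaneously. This should be read in the same operator-level sense as the rank case: on the good event, the selected filter satisfies $|\Tr(\rho_C M_{T,\widehat\beta_\vartheta T_2})-\vartheta|\le\eta$, up to the $1/S$ jump discontinuity of the empirical step function. Finally, each attempt succeeds independently with probability $P_{\rm succ}=V_C/4$. The number of attempts needed for $S$ successes is therefore negative binomial, with mean $4S/V_C$, which is the stated factor $4/V_C$.

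\emph{Main obstacle.} The delicate point is Step~1: justifying that the postselected successes are genuinely i.i.d.\ draws from $\rho_C$ passed through the measurement of Proposition~\ref{prop:modified-thermal measurement}. The approach is to note that successive attempts are independent, fresh preparations. Conditioning on success then yields an i.i.d.\ subsequence, with the index $i$ sampled uniformly and averaged out inside Eq.~\eqref{eq:rho_c}. A secondary subtlety is translating the sup-norm bound into a statement about $\widehat\beta_\vartheta$, since $G_C$ may be flat in eigengap valleys. I would avoid this by stating the guarantee at the level of retained variance rather than the coordinate $\beta$, as was done after Theorem~\ref{thm:quantile-soft-pca}.
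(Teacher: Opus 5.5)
Your proposal is correct and follows the paper's proof, which likewise applies the Dvoretzky--Kiefer--Wolfowitz inequality with Massart's constant to the covariance-probe position samples and notes that the tail $G_C$ is already a normalized Born probability, so no $1/d$ conversion factor appears. Your additional details are consistent with the paper and make explicit what it leaves implicit: the i.i.d.\ structure of the postselected successes, the negative-binomial mean $4S/V_C$ for the number of raw attempts, and the $1/S$ slack incurred when passing from the uniform bound to the threshold $\widehat\beta_\vartheta$.
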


\begin{proof}
The claim follows directly from the Dvoretzky--Kiefer--Wolfowitz inequality
applied to the position samples from the covariance-probe distribution. Since
the retained variance tail is a Born probability, no \(1/d\) trace conversion is
needed.
\end{proof}

\begin{proposition}[Low-temperature explained-variance limit]
\label{prop:low-temp-explained-variance}
Let
\begin{equation}
C_\phi
=
\sum_{j=1}^{d}
\lambda_j\ketbra{u_j},
\qquad
\lambda_1\ge \lambda_2\ge\cdots\ge \lambda_d>0,
\end{equation}
be the spectral decomposition on the support of \(C_\phi\), and set
\begin{equation}
V_C\coloneqq \Tr(C_\phi)=\sum_{j=1}^{d}\lambda_j,
\qquad
\rho_C\coloneqq \frac{C_\phi}{V_C}.
\end{equation}
Define
\begin{equation}
E_K
\coloneqq
\frac{\sum_{j=1}^{K}\lambda_j}{V_C},
\qquad
E_0\coloneqq 0.
\end{equation}
For a target \(\vartheta\in(0,1)\), let
\begin{equation}
K_\vartheta
\coloneqq
\min\{K:E_K\ge \vartheta\},
\end{equation}
and define
\begin{equation}
a_\vartheta
\coloneqq
\frac{\vartheta-E_{K_\vartheta-1}}
{E_{K_\vartheta}-E_{K_\vartheta-1}}
=
\frac{\vartheta V_C-\sum_{j=1}^{K_\vartheta-1}\lambda_j}
{\lambda_{K_\vartheta}}
\in(0,1].
\end{equation}
Let \(M_{T,\vartheta}=M_{T,\mu_T}\), where \(\mu_T\) is determined by
\begin{equation}
\Tr(\rho_C M_{T,\mu_T})=\vartheta.
\end{equation}
Assume that the critical eigenvalue \(\lambda_{K_\vartheta}\) is simple. Then,
as \(T\to0\),
\begin{equation}
\label{eq:variance-profile-zero-temp-limit}
M_{T,\vartheta}
\longrightarrow
M_\vartheta
\coloneqq
P_{K_\vartheta-1}
+
a_\vartheta\ketbra{u_{K_\vartheta}},
\end{equation}
where
\begin{equation}
P_{K_\vartheta-1}
\coloneqq
\sum_{j=1}^{K_\vartheta-1}\ketbra{u_j}.
\end{equation}
The limiting filter satisfies
\begin{equation}
\Tr(\rho_C M_\vartheta)=\vartheta
\end{equation}
and solves the minimum-trace retained-variance problem
\begin{equation}
\label{eq:min-trace-retained-variance}
\min_{0\preceq M\preceq \I}\Tr(M)
\qquad
\text{subject to}
\qquad
\Tr(\rho_C M)\ge \vartheta.
\end{equation}
If the output is restricted to orthogonal projectors, the smallest feasible
principal subspace is \(P_{K_\vartheta}\).
\end{proposition}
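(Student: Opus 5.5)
Everything here is diagonal in the eigenbasis $\{u_j\}$, so the matrix limit reduces to a limit of Fermi--Dirac occupations. Write $n_j(T)\coloneqq(1+e^{(\mu_T-\lambda_j)/T})^{-1}$. The defining equation then reads
\[
\phi_T(\mu)\coloneqq\frac1{V_C}\sum_j\lambda_j\bigl(1+e^{(\mu-\lambda_j)/T}\bigr)^{-1}=\vartheta .
\]
Since all $\lambda_j>0$, $\phi_T$ is continuous and strictly decreasing from $1$ to $0$. Hence $\mu_T$ exists and is unique for every $\vartheta\in(0,1)$.

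\textbf{Locating $\mu_T$.} Set $K=K_\vartheta$ and $a=a_\vartheta$. I will show that $\mu_T\to\lambda_K$. Suppose instead that along some subsequence $\mu_T\ge\lambda_K+\epsilon$. Then $n_j\to0$ for all $j\ge K$, so $\limsup\phi_T\le E_{K-1}<\vartheta$, which contradicts $\phi_T(\mu_T)=\vartheta$. (Here $E_{K-1}<\vartheta$ holds by minimality of $K$.) Suppose next that $\mu_T\le\lambda_K-\epsilon$. Then $n_j\to1$ for all $j\le K$, so $\liminf\phi_T\ge E_K$. If $E_K>\vartheta$ this is a contradiction. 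If $E_K=\vartheta$ (so $a=1$), I instead use that $\phi_T(\mu_T)\ge \sum_{j\le K}\lambda_j n_j/V_C+(\text{positive tail})$. Choosing the comparison carefully, the strict monotonicity in $\mu$ together with the exact equation forces $\mu_T\to\lambda_K$ from below or leaves it at the boundary. This $a=1$ edge case is the main technical nuisance. I expect it to be handled by working directly with the occupation limits in the next step rather than with $\mu_T$ itself.

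\textbf{Occupation limits.} Simplicity of $\lambda_K$ gives a positive gap $\gamma=\min(\lambda_{K-1}-\lambda_K,\lambda_K-\lambda_{K+1})$. Because $\mu_T\to\lambda_K$, for small $T$ we have $\mu_T$ within $\gamma/2$ of $\lambda_K$. This yields $|n_j-1|\le e^{-\gamma/(2T)}$ for $j<K$ and $n_j\le e^{-\gamma/(2T)}$ for $j>K$, and the same bound controls every eigenvalue equal to $\lambda_{K-1}$ or $\lambda_{K+1}$. Substituting into the constraint gives
\[
\lambda_K n_K(T)=\vartheta V_C-\sum_{j<K}\lambda_j+o(1),
\]
so $n_K(T)\to a$. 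In particular the case $a=1$ is covered automatically, with no separate treatment of $\mu_T$. Consequently $M_{T,\vartheta}=\sum_j n_j\ketbra{u_j}\to P_{K-1}+a\ketbra{u_K}=M_\vartheta$ in operator norm. Passing the constraint $\Tr(\rho_CM_{T,\vartheta})=\vartheta$ to the limit gives $\Tr(\rho_CM_\vartheta)=\vartheta$; this can also be checked directly from the definition of $a$.

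\textbf{Optimality.} For any feasible $M$, let $m_j\coloneqq\bra{u_j}M\ket{u_j}\in[0,1]$, exactly as in Proposition~\ref{prop:classical-pca-meas-opt}. Then $\Tr M=\sum_j m_j$ and the constraint becomes $\sum_j\lambda_j m_j\ge\vartheta V_C$. This is a fractional knapsack linear program. The greedy argument applies: with weights $\lambda_j$ and unit cost, fill the largest $\lambda_j$ first. Concretely, exchanging mass from a smaller $\lambda_j$ to a larger $\lambda_i$ with $m_i<1$ reduces trace while keeping the constraint. The optimum is therefore $m_j=1$ for $j<K$, $m_K=a$, and $m_j=0$ otherwise, which is attained by $M_\vartheta$ with value $K-1+a$. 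Finally, a projector $P$ has integer trace. Any feasible projector satisfies $\Tr P\ge K-1+a>K-1$, hence $\Tr P\ge K$. The projector $P_K$ is feasible because $E_K\ge\vartheta$. Among rank-$K$ projectors, $P_K$ maximizes the retained variance by Ky Fan (Proposition~\ref{prop:classical-pca-meas-opt}), so it is the smallest feasible principal subspace.
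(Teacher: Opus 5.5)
\paragraph{Verdict.} There is a genuine gap. It is confined to the boundary case $\vartheta=E_{K_\vartheta}$ and is easy to close.

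\paragraph{Where it fails.} Your overall route is the paper's: diagonalize, argue that modes $j<K_\vartheta$ fill and modes $j>K_\vartheta$ empty, read $n_{K_\vartheta}\to a_\vartheta$ off the constraint, then reduce the minimum-trace problem to a fractional knapsack and use integrality of projector traces. The paper simply asserts the filling and emptying. Your attempt to justify it by localizing $\mu_T$ is where the proof breaks.

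Write $K=K_\vartheta$. The claim $\mu_T\to\lambda_K$ is false when $\vartheta=E_K$, i.e.\ $a_\vartheta=1$, which forces $K<d$. There the constraint is equivalent to $\sum_{j\le K}\lambda_j(1-n_j)=\sum_{j>K}\lambda_j n_j$. The exponentially small deficit of the top $K$ modes must equal the exponentially small spillover into the lower modes. Balancing $e^{-(\lambda_K-\mu_T)/T}$ against $e^{-(\mu_T-\lambda_{K+1})/T}$ gives $\mu_T\to\tfrac12(\lambda_K+\lambda_{K+1})$.

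For instance, take $(\lambda_1,\lambda_2)=(2,1)$ and $\vartheta=2/3$, so $K=1$, $a_\vartheta=1$ and $\gamma=1$. Solving exactly gives $e^{\mu_T/T}=\tfrac14 e^{1/T}\bigl(\sqrt{1+8e^{1/T}}-1\bigr)$. Hence $\mu_T<\tfrac32-\tfrac T2\log 2$ for every $T>0$, and $\mu_T$ never comes within $\gamma/2=1/2$ of $\lambda_1=2$.

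Your occupation step opens with ``Because $\mu_T\to\lambda_K$, \ldots\ within $\gamma/2$ of $\lambda_K$''. It therefore does not cover $a_\vartheta=1$ at all. The claim that this case is covered automatically, with no separate treatment of $\mu_T$, is circular: the occupation limits are precisely a statement about where $\mu_T$ sits.

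\paragraph{The fix.} The occupation step does not need $\mu_T\to\lambda_K$. It only needs $\mu_T$ to lie eventually in a fixed compact subinterval of $(\lambda_{K+1},\lambda_{K-1})$, with $\lambda_0=+\infty$.
\begin{itemize}
\item Your first contradiction argument already gives $\limsup\mu_T\le\lambda_K<\lambda_{K-1}$.
\item For $a_\vartheta<1$, your second argument gives $\liminf\mu_T\ge\lambda_K$, so that case is fine.
\item For $a_\vartheta=1$, suppose $\mu_T\le\lambda_{K+1}+\epsilon$ along a sequence $T\to0$, with $0<\epsilon<\tfrac12(\lambda_K-\lambda_{K+1})$. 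Then the deficit satisfies $\sum_{j\le K}\lambda_j(1-n_j)\le V_C\,e^{-(\lambda_K-\lambda_{K+1}-\epsilon)/T}$. The spillover satisfies $\sum_{j>K}\lambda_j n_j\ge\lambda_{K+1}/(1+e^{\epsilon/T})$. Since $\lambda_{K+1}>0$ and $\lambda_K-\lambda_{K+1}-\epsilon>\epsilon$, the spillover is strictly larger for small $T$, contradicting the constraint.
\end{itemize}
Hence, for any $\epsilon'\in(0,\lambda_{K-1}-\lambda_K)$, eventually $\lambda_{K+1}+\epsilon<\mu_T<\lambda_K+\epsilon'$. Your exponential bounds on $n_j$ for $j\neq K$ then hold with $\gamma/2$ replaced by $\min\{\epsilon,\lambda_{K-1}-\lambda_K-\epsilon'\}$. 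So $n_K\to a_\vartheta$ follows in every case.

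\paragraph{Remaining parts.} The optimality and projector parts are fine. One small point: with the decomposition taken only on the support, $\Tr M\ge\sum_j m_j$ rather than equality, which is the direction you need.
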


\begin{proof}
Write
\begin{equation}
M_{T,\mu}
=
\sum_{j=1}^{d}
m_j(T,\mu)\ketbra{u_j},
\qquad
m_j(T,\mu)
=
\frac{1}{1+e^{(\mu-\lambda_j)/T}}.
\end{equation}
The covariance-probe condition is
\begin{equation}
\sum_{j=1}^{d}
\frac{\lambda_j}{V_C}
m_j(T,\mu_T)
=
\vartheta.
\end{equation}
For every \(T>0\), the left-hand side is continuous and strictly decreasing in
\(\mu_T\), so \(\mu_T\) is unique.

As \(T\to0\), the logistic occupations approach step occupations. Since
\(\vartheta\in(E_{K_\vartheta-1},E_{K_\vartheta}]\), the modes
\(j<K_\vartheta\) become fully occupied and the modes \(j>K_\vartheta\) become
unoccupied. The remaining occupation is fixed by the constraint:
\begin{equation}
m_{K_\vartheta}(T,\mu_T)
\longrightarrow
a_\vartheta.
\end{equation}
This proves Eq.~\eqref{eq:variance-profile-zero-temp-limit}.

It remains to identify the limiting variational problem. By pinching any
feasible \(M\) in the eigenbasis of \(C_\phi\), one obtains a diagonal feasible
operator with the same values of \(\Tr(M)\) and \(\Tr(\rho_C M)\). Thus the
problem reduces to
\begin{equation}
\min_{0\le m_j\le1}\sum_{j=1}^{d}m_j
\qquad
\text{subject to}
\qquad
\sum_{j=1}^{d}\frac{\lambda_j}{V_C}m_j\ge \vartheta.
\end{equation}
Because the weights \(\lambda_j/V_C\) are ordered nonincreasingly, the optimal
solution fills the largest weights first. Hence
\begin{equation}
m_j=1\quad(j<K_\vartheta),
\qquad
m_{K_\vartheta}=a_\vartheta,
\qquad
m_j=0\quad(j>K_\vartheta),
\end{equation}
which gives exactly \(M_\vartheta\). If \(M\) is restricted to projectors, the
fractional occupation is not allowed, and the smallest feasible projector is
\(P_{K_\vartheta}\).
\end{proof}

The covariance-probe scan is a variance-indexed view of the same trace-constrained path \(M_{T,\mu}\). It is distinct from the finite-temperature fixed-variance variational problem in Appendix~\ref{app:fixed-variance-soft-pca}, whose optimizer has the form
\begin{equation}
M_\lambda^{\rm var}
=
\left(\I+e^{(\I-\lambda C_\phi)/T'}\right)^{-1}.
\end{equation}
For fixed \(T'\), changing \(\lambda\) rescales the covariance operator and generally requires a \(\lambda\)-dependent evolution, or equivalently a change in effective temperature. We therefore use the trace-constrained path as the main algorithmic setting and keep the fixed-variance formulation as an appendix variant. Using bisection search, this calibration for a given target variance level $\gamma_{\rm var}$ costs $S=O(\frac{\log(1/\eta_{\rm var})}{\eta_{\rm var}^2})$ samples at additive accuracy $\eta_{\rm var}$ according to Proposition~\ref{prop:fixed-variance-complexity}.

These two approaches agree in the
low-temperature explained-variance limit: both recover the minimum-trace
fractional filter in Eq.~\eqref{eq:variance-profile-zero-temp-limit}, and when
restricted to projectors, the same smallest principal subspace
\(P_{K_\vartheta}\).

\subsection{Quantum soft PCA inference}
After calibration, inference on new input states uses the same fixed thermal
measurement circuit
\begin{equation}
U_C=e^{-i\hat p\otimes C_\phi/T_2}.
\end{equation}
The calibration stage may provide thresholds in two useful forms. A
rank-indexed threshold \(\beta_K^\star\) selects the soft filter
\begin{equation}
M_{T,K}^\star
=
M_{T,\beta_K^\star T_2},
\qquad
\Tr(M_{T,K}^\star)=K,
\end{equation}
whereas a variance-indexed threshold \(\beta^\star_\vartheta\) selects
\begin{equation}
M^\star_{T,\vartheta}
=
M_{T,\beta^\star_\vartheta T_2},
\qquad
\Tr(\rho_C M^\star_{T,\vartheta})=\vartheta,
\end{equation}
where \(\vartheta\in(0,1)\) is the normalized retained-variance level on the training covariance.

It is useful to write both cases with a generic calibrated label \(\ell\):
\begin{equation}
\beta_\ell
=
\begin{cases}
\beta_K^\star, & \ell=K,\\
\beta_\vartheta^\star, & \ell=\vartheta,
\end{cases}
\qquad
M_{T,\ell}
\coloneqq
M_{T,\beta_\ell T_2}.
\end{equation}
Thus \(\ell\) specifies how the threshold was calibrated, while the deployment
measurement itself is always the same tail event \(q>\beta_\ell\).

For a new input \(\mathbf x\), prepare the centered feature vector
\begin{equation}
\ket{z_{\mathbf x}}
\coloneqq
\ket{\phi(\mathbf x)}-\ket m,
\qquad
\nu_{\mathbf x}
\coloneqq
\braket{z_{\mathbf x}}{z_{\mathbf x}},
\end{equation}
using the LCU centered-input preparation described in
Section~\ref{sec:qcovariance}. Conditioned on successful preparation, the data
register contains
\begin{equation}
\ket{\widetilde z_{\mathbf x}}
=
\frac{\ket{z_{\mathbf x}}}{\sqrt{\nu_{\mathbf x}}},
\qquad
\nu_{\mathbf x}>0.
\end{equation}
If \(\nu_{\mathbf x}=0\), then the input coincides with the feature-space mean,
and all unnormalized centered spectral scores are zero.

Using the normalized centered input as the probe,
\begin{equation}
\rho_{\rm in}=\ketbra{\widetilde z_{\mathbf x}},
\end{equation}
the calibrated tail event gives
\begin{equation}
\label{eq:generic-tail-score}
\Pr(q>\beta_\ell\mid \ketbra{\widetilde z_{\mathbf x}})
=
\Tr\!\left(M_{T,\ell}\ketbra{\widetilde z_{\mathbf x}}\right)
=
\frac{s_{T,\ell}(\mathbf x)}{\nu_{\mathbf x}},
\end{equation}
where
\begin{equation}
\label{eq:generic-unnormalized-score}
s_{T,\ell}(\mathbf x)
\coloneqq
\bra{z_{\mathbf x}}M_{T,\ell}\ket{z_{\mathbf x}}.
\end{equation}
Accordingly,
\begin{equation}
\label{eq:generic-normalized-score}
\bar s_{T,\ell}(\mathbf x)
\coloneqq
\frac{s_{T,\ell}(\mathbf x)}{\nu_{\mathbf x}}
\end{equation}
is the normalized soft principal score of the test input with respect to the
calibrated filter \(M_{T,\ell}\). When \(\ell=K\), this is the soft rank-\(K\)
principal-subspace score. When \(\ell=\vartheta\), this is the score obtained
from the filter selected to retain training covariance fraction \(\vartheta\).
The residual scores are
\begin{equation}
1-\bar s_{T,\ell}(\mathbf x)
\end{equation}
in the normalized case and
\begin{equation}
\nu_{\mathbf x}-s_{T,\ell}(\mathbf x)
\end{equation}
in the unnormalized case.

The rank-indexed calibration also gives a soft spectral energy profile. For the
rank ladder, define
\begin{equation}
M_{T,0}^\star\coloneqq 0,
\qquad
M_{T,d}^\star\coloneqq \I,
\qquad
D_{T,k}
\coloneqq
M_{T,k}^\star-M_{T,k-1}^\star.
\end{equation}
The corresponding windows are
\begin{equation}
W_1=(\beta_1^\star,\infty),
\qquad
W_k=(\beta_k^\star,\beta_{k-1}^\star],
\quad k=2,\dots,d-1,
\qquad
W_d=(-\infty,\beta_{d-1}^\star].
\end{equation}
For the normalized centered input, the measured window probability is
\begin{equation}
\Pr(q\in W_k\mid \ketbra{\widetilde z_{\mathbf x}})
=
\Tr\!\left(D_{T,k}\ketbra{\widetilde z_{\mathbf x}}\right)
=
\frac{\bra{z_{\mathbf x}}D_{T,k}\ket{z_{\mathbf x}}}{\nu_{\mathbf x}}.
\end{equation}
Thus the normalized soft spectral energy profile is
\begin{equation}
\pi_{T,k}(\mathbf x)
\coloneqq
\Pr(q\in W_k\mid \ketbra{\widetilde z_{\mathbf x}})
=
\frac{e_{T,k}(\mathbf x)}{\nu_{\mathbf x}},
\qquad
e_{T,k}(\mathbf x)
\coloneqq
\bra{z_{\mathbf x}}D_{T,k}\ket{z_{\mathbf x}}.
\end{equation}
Since
\begin{equation}
\sum_{k=1}^d D_{T,k}=\I,
\end{equation}
the profile satisfies
\begin{equation}
\pi_{T,k}(\mathbf x)\ge0,
\qquad
\sum_{k=1}^d \pi_{T,k}(\mathbf x)=1.
\end{equation}
For a rank budget \(K\), the cumulative score is recovered as
\begin{equation}
\bar s_{T,K}(\mathbf x)
=
\sum_{k=1}^K\pi_{T,k}(\mathbf x),
\qquad
s_{T,K}(\mathbf x)
=
\sum_{k=1}^K e_{T,k}(\mathbf x).
\end{equation}
For a variance-indexed threshold \(\ell=\vartheta\), the tail score
\(\bar s_{T,\vartheta}(\mathbf x)\) in
Eq.~\eqref{eq:generic-normalized-score} is defined directly by the calibrated
threshold \(\beta_\vartheta\); it need not coincide with a sum over an
integer number of rank-indexed bins.

In practice, one may use the unpostselected centered LCU state
\(\ket{\Xi_{\mathbf x}}\) directly and jointly measure the LCU success
registers and the position \(q\). Since the centered success branch has
probability \(\nu_{\mathbf x}/4\), the joint probabilities yield the
unnormalized quantities:
\begin{equation}
\Pr(\mathrm{success},\,q\in W_k\mid \ketbra{\Xi_{\mathbf x}})
=
\frac14\bra{z_{\mathbf x}}D_{T,k}\ket{z_{\mathbf x}}
=
\frac14 e_{T,k}(\mathbf x),
\end{equation}
and for any calibrated tail label \(\ell\),
\begin{equation}
\label{eq:generic-joint-tail-test}
\Pr(\mathrm{success},\,q>\beta_\ell\mid \ketbra{\Xi_{\mathbf x}})
=
\frac14\bra{z_{\mathbf x}}M_{T,\ell}\ket{z_{\mathbf x}}
=
\frac14 s_{T,\ell}(\mathbf x).
\end{equation}
The normalized score can therefore be obtained as the conditional probability
\begin{equation}
\bar s_{T,\ell}(\mathbf x)
=
\frac{
\Pr(\mathrm{success},\,q>\beta_\ell\mid \ketbra{\Xi_{\mathbf x}})
}{
\Pr(\mathrm{success}\mid \ketbra{\Xi_{\mathbf x}})
}.
\end{equation}
Thus deployment only requires centered-input preparation, the fixed evolution
\(U_C\), and classical binning or thresholding of the measured position outcome.
No rank-dependent circuit update or further optimization is required at
deployment time.

To interpret the measured profile, write
\begin{equation}
C_\phi
=
\sum_{i=1}^d \lambda_i\ketbra{u_i},
\qquad
\ket{\widetilde z_{\mathbf x}}
=
\sum_{i=1}^d c_i(\mathbf x)\ket{u_i}.
\end{equation}
In the nondegenerate low-temperature regime, the rank-indexed soft component
effects approach the rank-one principal projectors. More precisely, using Eq.~\eqref{eq:Dk-rank-one-approx}
\begin{equation}
\|D_{T,k}-\ketbra{u_k}\|_{\rm op}
\le
\varepsilon_{T,k}+\varepsilon_{T,k-1},
\end{equation}
we obtain
\begin{equation}
\left|
\pi_{T,k}(\mathbf x)
-
|c_k(\mathbf x)|^2
\right|
\le
\varepsilon_{T,k}+\varepsilon_{T,k-1}.
\end{equation}
Equivalently,
\begin{equation}
\left|
e_{T,k}(\mathbf x)
-
\nu_{\mathbf x}|c_k(\mathbf x)|^2
\right|
\le
\nu_{\mathbf x}
\left(
\varepsilon_{T,k}+\varepsilon_{T,k-1}
\right).
\end{equation}
Hence the bin probabilities recover the centered PCA spectral energy
decomposition in the low-temperature limit.

For a variance-indexed threshold, the low-temperature limit has the
explained-variance interpretation of
Proposition~\ref{prop:low-temp-explained-variance}. In particular, if
\begin{equation}
M_{T,\vartheta}
\longrightarrow
M_\vartheta
=
P_{K_\vartheta-1}
+
a_\vartheta\ketbra{u_{K_\vartheta}},
\end{equation}
then
\begin{equation}
\bar s_{T,\vartheta}(\mathbf x)
\longrightarrow
\sum_{i=1}^{K_\vartheta-1}|c_i(\mathbf x)|^2
+
a_\vartheta |c_{K_\vartheta}(\mathbf x)|^2.
\end{equation}
If one restricts to projector-valued PCA subspaces, the corresponding smallest
principal subspace is \(P_{K_\vartheta}\), and the limiting score is
\begin{equation}
\sum_{i=1}^{K_\vartheta}|c_i(\mathbf x)|^2.
\end{equation}

Finally, for completeness, we record the data-register state produced by the thermal measurement circuit when the input-as-probe mode is used. 
The position-outcome Kraus density is given as
\begin{equation}
K_q
=
\sum_{i=1}^d
\sqrt{g_{T_1}\!\left(q-\frac{\lambda_i}{T_2}\right)}
\ketbra{u_i}.
\end{equation}
For a position window \(W\), the postselected state is
\begin{equation}
\rho_{\rm out\mid W}
=
\frac{\mathcal I_W(\rho_{\rm in})}
{\Tr(E(W)\rho_{\rm in})},
\end{equation}
where
\begin{equation}
\mathcal I_W(\rho)
=
\int_W K_q\rho K_q^\dagger\,dq,
\qquad
E(W)
=
\int_W K_q^\dagger K_q\,dq.
\end{equation}
For a calibrated tail window
\begin{equation}
W_{> \ell}\coloneqq(\beta_\ell,\infty),
\end{equation}
and \(\rho_{\rm in}=\ketbra{\widetilde z_{\mathbf x}}\), this becomes
\begin{equation}
\rho_{\rm out\mid \ell}
=
\frac{
\int_{\beta_\ell}^{\infty}
K_q\ketbra{\widetilde z_{\mathbf x}}K_q^\dagger\,dq
}{
\Tr(M_{T,\ell}\ketbra{\widetilde z_{\mathbf x}})
}
=
\frac{
\sum_{i,j=1}^d
\Gamma_{ij}^{(\ell)}
c_i(\mathbf x)\overline{c_j(\mathbf x)}
\ket{u_i}\bra{u_j}
}{
\sum_{i=1}^d
m_i^{(\ell)}|c_i(\mathbf x)|^2
},
\end{equation}
where
\begin{equation}
\Gamma_{ij}^{(\ell)}
\coloneqq
\int_{\beta_\ell}^{\infty}
\sqrt{
g_{T_1}\!\left(q-\frac{\lambda_i}{T_2}\right)
g_{T_1}\!\left(q-\frac{\lambda_j}{T_2}\right)
}\,dq,
\end{equation}
and
\begin{equation}
m_i^{(\ell)}
=
\Gamma_{ii}^{(\ell)}
=
\frac{1}{1+\exp\!\left(\frac{\beta_\ell T_2-\lambda_i}{T}\right)}.
\end{equation}
Therefore the diagonal weights are updated as
\begin{equation}
\langle u_i|\rho_{\rm out\mid \ell}|u_i\rangle
=
\frac{
m_i^{(\ell)}|c_i(\mathbf x)|^2
}{
\sum_{j=1}^d m_j^{(\ell)}|c_j(\mathbf x)|^2
}.
\end{equation}
The off-diagonal entries are weighted by window-overlap factors depending on
pairs of eigenvalues. Therefore, the accepted branch is a
measurement-induced soft spectral filter of the input. It should not be viewed
as the coherent action of \(M_{T,\ell}\) on the input vector, nor identified
with a coherent low-dimensional PCA encoding.

\begin{remark}
The protocol recovers soft principal-subspace scores and squared spectral
energies. It does not recover the relative signs or phases of the principal
coefficients \(c_i(\mathbf x)\). Consequently, it does not directly yield the
coherent vector-valued PCA reconstruction
\begin{equation}
\sum_{i=1}^K c_i(\mathbf x)\ket{u_i}.
\end{equation}
Obtaining such a reconstruction would require additional phase-sensitive
subroutines.
\end{remark}

\section{Comparison with Existing PCA Methods}
\label{sec:comparison}

A meaningful comparison between PCA algorithms must first specify the
downstream task, because different algorithms produce structurally different
outputs. 
Existing classical and
quantum PCA protocols generally target explicit spectral reconstruction: they return principal eigenvectors, approximated eigenstates, or coherent spectral projectors. 
Our protocol instead produces the entropy-regularized, measurement-based analog of such low-rank objects: a soft
filter family \(\{M_{T,k}^\star\}_{k=1}^{d-1}\), the associated soft spectral-component POVM \(\{D_{T,k}\}_{k=1}^d\), or the retained-variance profiles \(\{M_{T,\vartheta}^\star\}_{\vartheta}\). These objects are accessed not as classical matrices to multiply classical
vectors, but as quantum-measurement effects acting on quantum input states. The
natural deliverables are scalar scores
\(\Tr(M_{T,k}^\star\rho_{\mathrm{in}})\), retained-variance test scores
\(\Tr(M_{T,\vartheta}^\star\rho_{\mathrm{in}})\), and postselected quantum states. Direct cost
comparison with other PCA algorithms is meaningful on tasks that all algorithms
can solve in their native outputs, such as principal-subspace scoring,
cumulative-variance estimation, and postselection. It is qualitative on tasks
targeting outputs, such as eigenvector lists or coherent reconstructions, for
which our protocol would require an additional tomographic step. Table~\ref{tab:pca-comparison} summarizes the resulting comparison.

\begin{table}
\centering
\small
\begin{tabular}{m{2.7cm}m{3.5cm}m{4.5cm}m{3.5cm}}
\hline
Algorithm & Output & Cost & Direct comparison with this work? \\
\hline
Classical SVD / Krylov~\cite{GolubVanLoan2013,Saad2011}
  & Eigenvectors $\{u_j\}$
  & $O(Nd\min(N,d))$ to $\widetilde{O}(NdK)$
  & No: different output type \\
Randomized SVD~\cite{HalkoMartinssonTropp2011}
  & Eigenvectors $\{u_j\}$
  & $\widetilde{O}(NdK)$
  & No: different output type \\
Streaming / online PCA~\cite{Shamir2016,Jain2016}
  & Eigenvectors $\{u_j\}$
  & $O(dK)$ per update
  & No: different output type \\
Lloyd qPCA $+$ QPE~\cite{Lloyd2014}
  & Quantum eigenstates $\ket{u_j}$, eigenvalue estimates
  & $\widetilde{O}(R^2\kappa^2/\epsilon^3)$ copies
  & Yes: on scoring tasks \\
QSVT-based PCA~\cite{Gilyen2019}
  & Coherent rank-$K$ projector on quantum state
  & $\widetilde{O}(K\kappa/\Delta_K)$ queries
  & Yes: on scoring / projection tasks \\
Variational qPCA~\cite{LaRose2019}
  & Quantum eigenstates via ansatz
  & No general guarantee
  & Partial: depends on convergence \\
Tang dequantization~\cite{Tang2019,Tang2021QPCA}
  & Classical analog of qPCA outputs
  & $\widetilde{O}(\mathrm{poly}(K,1/\epsilon,\kappa))$
  & No: requires classical sample-and-query access \\
\textbf{This work}
  & Soft score $\Tr(M_{T,k}^\star\rho_{\mathrm{in}})$, $\Tr(M_{T,\vartheta}^\star\rho_{\mathrm{in}})$ or postselected state
  & $S=O(1/\eta^{2})$ to $O(d^2/\eta^{2})$ samples $+$ $\widetilde{O}(1)$ per query
  & \multicolumn{1}{c}{---} \\
\hline
\end{tabular}
\caption{Cost comparison across PCA algorithms. The $\widetilde{O}(\cdot)$ notation hides poly-logarithmic factors. The ``Direct comparison'' column indicates whether a task exists for which both algorithms can produce the same operational output, allowing head-to-head cost comparison. For algorithms targeting eigenvector lists, comparison with this work requires either a tomographic step (cancelling the sample-complexity advantage) or a restriction to scoring tasks.}
\label{tab:pca-comparison}
\end{table}

\paragraph{Classical PCA.}
Given a centered data matrix of $N$ samples in $d$ dimensions, direct singular-value decomposition has cost
\(O(Nd\min(N,d))\)~\cite{GolubVanLoan2013}. Truncated decompositions to the
top-\(K\) subspace are usually computed by Krylov methods, such as Lanczos or
subspace iteration, at cost
\(O(NdK\log(1/\epsilon)/\sqrt{\Delta_K})\), where
\(\Delta_K=\lambda_K-\lambda_{K+1}\) is the spectral gap~\cite{Saad2011}.
Randomized SVD reduces this to \(O(Nd\log K+(N+d)K^2)\)~\cite{HalkoMartinssonTropp2011}.
Streaming and stochastic algorithms, such as Oja's rule and
matrix-stochastic-gradient methods, lower the per-update cost to \(O(dK)\) at
the price of multiple passes over the data~\cite{Shamir2016,Jain2016}. In all
cases, the output is an explicit set of \(K\) classical \(d\)-dimensional
vectors or an explicit subspace estimate. Subsequent scoring of a test point
\(\mathbf y\in\R^d\) then incurs an additional \(O(dK)\) matrix-vector cost.
The same issue appears for retained-variance or model-order diagnostics. In
classical PCA, the explained-variance curve is normally computed from the
principal spectrum:
\begin{equation}
K\longmapsto
\frac{\sum_{j=1}^K\lambda_j}{\sum_{j=1}^d\lambda_j}.
\end{equation}
Thus, even if the downstream task only asks for the smallest subspace retaining
a prescribed variance fraction, the standard route first computes or
approximates the leading spectral data. 

\paragraph{Quantum PCA.}
Lloyd, Mohseni, and Rebentrost~\cite{Lloyd2014} introduced quantum PCA by
simulating \(e^{-i\rho t}\) from \(O(t^2/\epsilon)\) copies of the density
operator \(\rho=C/\Tr(C)\) and applying quantum phase estimation to read out
eigenphases on principal eigenvectors, at total query cost polynomial in the
effective rank \(R\) of \(\rho\), the condition number \(\kappa\), and the
inverse accuracy \(1/\epsilon\). This is typically reported as
\(\widetilde{O}(R^2\kappa^2/\epsilon^3)\) in regimes where these parameters are
well controlled. Block-encoding and QSVT methods~\cite{Gilyen2019} project onto
top-\(K\) subspaces using \(\widetilde{O}(K\kappa/\Delta_K)\) queries to a
covariance block encoding, with polylogarithmic dependence on \(d\) when
QRAM-like access is granted. Variational quantum
PCA~\cite{LaRose2019} diagonalizes a parameterized ansatz against \(C\) via
gradient-based optimization, and currently lacks general complexity guarantees.
Tang's dequantization results~\cite{Tang2019,Tang2021QPCA} establish that
whenever such oracle access is granted to a classical algorithm with
sample-and-query data structures, top-\(K\) PCA can be performed in classical
time \(\widetilde{O}(\mathrm{poly}(K,1/\epsilon,\kappa))\), independent of
\(d\). The figures quoted here should be read as indicative scalings rather than tight bounds, since the precise complexity depends on the oracle model, normalization, spectral gap, and success-probability assumptions.
All of these protocols target spectral information: eigenvectors, eigenstates,
eigenvalues, or coherent spectral projectors. They can therefore be used to
derive principal scores and retained-variance diagnostics, but generally only
after resolving a spectral object.

\paragraph{Where the proposed framework sits.}
Our protocol replaces spectral reconstruction with a single calibrated measurement. After fixing the evolution $U_C=e^{-i\hat p\otimes C_\phi/T_2}$, the optimal thresholds are recovered via classical quantile lookup, yielding a total cost breakdown of:
\begin{enumerate}[label=(\roman*),leftmargin=1.5em]

\item one-time calibration: \(O(d^2/\eta^2)\) samples for a uniform
all-rank unnormalized trace guarantee; \(O(K(d-K)/\eta^2)\) for one fixed
rank; \(O(\eta^{-2})\) for normalized fractional trace calibration
(Theorem~\ref{thm:quantile-soft-pca}, Proposition~\ref{prop:fixed-rank-sampling}) and for the full retained-variance profile (If raw centered-LCU attempts rather than successful covariance-probe samples are counted, the retained-variance cost is multiplied by \(4/\Tr(C_\phi)\), Proposition~\ref{prop:cov-probe-profile-complexity});
\item per-sample quantum cost:
\(\widetilde{O}(1/(T_2^2\varepsilon_{\mathrm{sim}}))\) selector-swap micro-steps
for the Hamiltonian simulation (Corollary~\ref{cor:gate-complexity});
\item per-input deployment: one thermal measurement on the test state, with no
further training or recompilation.
\end{enumerate}

This profile is best matched to the following regimes:
\begin{itemize}[leftmargin=1.5em]
\item \emph{Score-based Tasks on Quantum Inputs.} For downstream evaluations like anomaly scoring or subspace projection, our protocol produces the required scalar directly. This bypasses the structural overhead of classical PCA (explicit vectors) and eigenspectral qPCA (phase estimation), rendering the deployment cost entirely independent of the target rank $K$.
\item \emph{Dimension-Independent Normalized Profiling.} With the maximally mixed or covariance probe, the protocol natively learns normalized soft trace fractions or retained-variance profiles as direct Born probabilities. This achieves additive accuracy $\eta$ with $O(\eta^{-2})$ samples, sidestepping the standard requirement to first diagonalize the spectrum to compute explained variance.
\item \emph{Training-Free Multi-Target Deployment.} Unlike variational qPCA which necessitates a separate optimization loop, or QSVT which requires recompiling singular-value polynomials for different $K$, our single empirical distribution calibrates the entire rank-indexed or variance-indexed filter ladder simultaneously. Subsequent queries simply adjust a classical threshold.
\item \emph{Operator-Level Quantum-Kernel PCA.} In standard quantum-kernel workflows, forming the classical Gram matrix requires $\Theta(N^2)$ pairwise quantum evaluations. Our framework avoids this bottleneck. Because the nonzero spectrum of $C_\phi$ coincides with that of the centered kernel matrix $K_c/N$ (Remark~\ref{rem:kernel}), the protocol accesses the kernel-PCA spectrum strictly at the operator level without computing or storing individual matrix entries.
\item \emph{Native Quantum Centering.} Classical preprocessing and standard kernel PCA require explicit mean subtraction. When data are quantum-native or nonlinearly embedded, classical centering may be ill-defined or inaccessible. Our protocol integrates centering natively into the state preparation step (via the subnormalized mean branch $\ket{\chi_m}$), maintaining the entire procedure within the quantum access model without demanding state tomography.
\end{itemize}

The framework is, conversely, not advantageous when an explicit classical list
of principal eigenvectors is required, since extracting such vectors from the
postselected soft component states \(D_{T,k}\) requires tomography that cancels
the quantum-native sample-complexity benefit. Similarly, the covariance-probe
scan gives the retained-variance profile directly, but it does not by itself
give a classical eigenspectrum. Comparisons with dequantized classical PCA~\cite{Tang2019,Tang2021QPCA} are
also delicate: dequantization assumes structured classical sample-and-query
access to the data matrix, while our protocol operates on quantum input states
that may be produced by a quantum process, where no such classical handle exists.

% Table~\ref{tab:pca-comparison} summarizes the picture across the algorithms
% considered. The rightmost column reports whether a downstream task exists for
% which both algorithms can produce the same operational output, allowing
% head-to-head cost benchmarking. For algorithms targeting eigenvector lists,
% comparison with the present work requires either a tomographic conversion step,
% which cancels the sample-complexity advantage, or a restriction to score-based
% and retained-variance tasks.

\section{Discussion and Outlook}
\label{sec:4}

The main observation in this work is that, for score-based tasks, PCA can be treated as a measurement problem rather than an eigenvector-recovery problem. This reformulation is naturally compatible with quantum computation, where \(0\preceq M\preceq \I\) is a measurement effect and \(\Tr(M\rho)\) is a Born probability.
Adding Fermi--Dirac entropy replaces the hard projector with a unique soft filter whose occupations vary smoothly between zero and one. 
Finite temperature smooths the spectral cutoff, while the zero-temperature limit recovers hard PCA under a nonzero eigengap. 

This distinction is operationally important for quantum data. 
The protocol does not need to output a classical list of principal eigenvectors. 
Instead, rank-dependent soft PCA filters are selected by changing only a classical position threshold after one fixed thermal measurement circuit. 
These thresholds are calibrated by quantiles of a single position distribution, rather than by a rank-dependent hybrid optimization loop. 
An additional operational advantage is that centering is not delegated to a classical preprocessing stage. The empirical mean branch state is prepared coherently. It centers the training covariance and each test input during inference. This is especially relevant for quantum-native data and nonlinear quantum feature maps, where the feature-space mean may not be available as a classical vector.

The cost of the resource depends on the task. 
Uniform calibration of all ranks requires estimating a full cumulative distribution and carries the worst case \(d^2\)-type sampling factor. For one fixed target rank \(K\), the sampling cost is smaller and scales with the binomial variance of the corresponding tail probability. 
For normalized fraction-rank or retained-variance targets, the statistical cost of sampling can be independent of \(d\). 
Thus, the protocol is most useful when the downstream task consumes soft scores, residual scores, or postselected quantum states, rather than a full classical eigendecomposition. 

Likewise, obtaining a classical description of the soft component operators \(D_{T,k}\) requires tomography of the window conditioned output states, which can remove the advantage when a full classical model is required. The method is therefore best viewed as a quantum scoring and filtering primitive, not as a general purpose replacement for classical PCA diagonalization. 
% Sec.~\ref{sec:comparison} sets these costs against existing classical and quantum PCA algorithms in detail.

Several directions remain open. 
The Hamiltonian simulation \(U_C=e^{-i\hat p\otimes C/T_2}\) used here could be replaced by block encoding or quantum singular value transformation implementations when stronger access to the covariance operator is available. The temperature could also be chosen adaptively according to the observed spectral gaps or downstream anomaly detection performance. More broadly, the soft component POVM \(\{D_{T,k}\}\) suggests a measurement-based analogue of spectral feature extraction, where principal components are not accessed as explicitly reconstructed vectors. 

\section*{Acknowledgements}
MM and MMW
acknowledge support from the Cornell School of Electrical and Computer Engineering. NL acknowledges funding from the
Science and Technology Commission of Shanghai Municipality (STCSM)
grant no.~24LZ1401200 (21JC1402900), NSFC grants no.~12471411 and
no.~12341104, the Shanghai Jiao Tong University 2030 Initiative, the Shanghai Pilot Program for Basic Research, 
and the Fundamental Research Funds for the Central Universities.

\bibliographystyle{unsrturl}
\bibliography{soft_pca_top_style_refs}

\newpage
\appendix
\section{Reconstruction error of fractional Fantope optimizers}
\label{app:fantope-shrinkage}
Viewing the trace constraint as a convex relaxation over the Fantope provides a natural framework for analyzing the data reconstruction error. For a centered random data vector $\mathbf y$ with covariance $C$, applying the measurement operator $M \in \mathcal{M}_d$ as a linear filter yields the reconstruction $\hat{\mathbf y} = M\mathbf y$. The expected mean-squared reconstruction error is given by
\begin{align}
\mathbb E\|\mathbf y-M\mathbf y\|^2
&=
\mathbb E\!\left[
\mathbf y^\dagger(\I-M)^\dagger(\I-M)\mathbf y
\right]\\
&=
\Tr\!\left((\I-M)^\dagger(\I-M)C\right)\\
&=
\Tr(C)-2\operatorname{Re}\Tr(CM)+\Tr(CM^\dagger M).
\end{align}
For the measurement effects considered in the main text, \(M=M^\dagger\), and hence
\begin{equation}
\label{eq:reconstruction_error}
  \mathcal E(M)
=
\Tr(C)-2\Tr(CM)+\Tr(CM^2).  
\end{equation}

For a rank-\(k\) projector \(P_k\), the strict algebraic property $P_k^2 = P_k$ holds. Substituting this into Eq.~\eqref{eq:reconstruction_error}, the classical PCA reconstruction error is
\begin{equation}
\mathcal E(P_k)
=
\Tr(C)-\Tr(CP_k).
\end{equation}

When the boundary eigengap closes ($\lambda_k = \lambda_{k+1}$), the hard trace-constrained problem may admit fractional Fantope optimizers. 
For such \(M^\star\), one has $M^{\star2}\preceq M^\star$,
with strict inequality on the fractional eigenspaces. 
If \(C\) has positive weight on those fractional eigenspaces, then $\Tr(CM^{\star2})<\Tr(CM^\star)$. Because both the fractional $M^\star$ and the projector $P_k$ attain the same maximal linear variance $\Tr(CM^\star) = \Tr(CP_k) = \sum_{j=1}^k \lambda_j$, the reconstruction error of the fractional solution satisfies
\begin{equation}
\mathcal{E}(M^\star) = \Tr(C) - 2\Tr(CM^\star) + \Tr(C M^{\star 2}) < \Tr(C) - \Tr(C M^\star) = \mathcal{E}(P_k).
\end{equation}

Under the relaxed linear filtering objective, fractional Fantope optimizers may reduce Euclidean shrinkage error, but they no longer represent rank-k reconstructions. This arises because the trace constraint $\Tr(M)=k$ relaxes the strict rank constraint $\rank(M)=k$. The fractional solution distributes its spectrum across more than $k$ dimensions ($\rank(M^\star) > k$), providing a `soft' filtering that reduces the Euclidean error.

\section{Classical Optimization Algorithms for Soft PCA}
\label{app:classicalopt}
Recall that we have the strictly convex one-dimensional dual function Eq.~\eqref{eq:scalar-dual-objective}
\begin{equation}
\Phi_T(\mu)
=
\mu k + T\Tr\log\!\left(\I+e^{(C-\mu\I)/T}\right).
\end{equation}
corresponding to the soft PCA in Proposition~\ref{thm:soft-pca-scalar-specialization} with its derivative and second derivative Eq.~\eqref{eq:scalar-dual-derivative}
\begin{equation}
\Phi_T'(\mu)=k-\Tr(M_{T,\mu}),
\qquad
\Phi_T''(\mu)=\frac{1}{T}\Tr\!\bigl[M_{T,\mu}(\I-M_{T,\mu})\bigr].
\end{equation}
Since
\begin{equation}
0\le x(1-x)\le \frac14
\qquad \text{for all } x\in[0,1],
\end{equation}
we obtain the global smoothness bound
\begin{equation}
\label{eq:Lipschitz-constant}
0<\Phi_T''(\mu)\le \frac{d}{4T}
\qquad \text{for all } \mu\in\mathbb R.
\end{equation}
Thus $\Phi_T$ is a strictly convex $L_T$-smooth function with
\begin{equation}
L_T\coloneqq \frac{d}{4T}
\end{equation}
where $L_T$ defines a Lipschitz constant for the gradient.

It is convenient to introduce an interval for the optima $\mu^\star$. Define
\begin{equation}
\underline\mu \coloneqq \lambda_d(C)+T\log\frac{d-k}{k},
\qquad
\overline\mu \coloneqq \lambda_1(C)+T\log\frac{d-k}{k},
\end{equation}
and set
\begin{equation}
I_T\coloneqq [\underline\mu,\overline\mu].
\end{equation}
Then $\mu^\star\in I_T$. Indeed, the trace function
\begin{equation}
\psi_T(\mu)\coloneqq \Tr(M_{T,\mu})
=
\sum_{j=1}^d \frac{1}{1+e^{(\mu-\lambda_j(C))/T}}
\end{equation}
is strictly decreasing, and one checks that
\begin{equation}
\psi_T(\underline\mu)\ge k,
\qquad
\psi_T(\overline\mu)\le k.
\end{equation}
As such, the standard gradient descent methods can be applied to minimize the objective in Eq.~\eqref{eq:scalar-dual-objective} with respect to $\mu$, guaranteeing first order or quadratic convergence to the global minimum, which we provide as Algorithm~\ref{alg:soft-pca-first order} and Algorithm~\ref{alg:soft-pca-second order}.
\begin{algorithm}[h]
\caption{first order gradient descent of the soft PCA}
\label{alg:soft-pca-first order}
\begin{algorithmic}[1]
\Require Desired error $\epsilon>0$, Temperature $T=\frac{\epsilon}{d\,h(k/d)}$, iteration number $N$, step size $\eta\in(0,1/L_T)$, initialization $\mu_0\in I_T$
\For{$t=0,1,\dots,N-1$}
    \State Compute the gradient
    \begin{equation*}
    g_t \coloneqq \Phi_T'(\mu_t)=k-\Tr(M_{T,\mu_t}).
    \end{equation*}
    \State Update
    \begin{equation*}
    \mu_{t+1} \coloneqq \Pi_{I_T}\!\bigl(\mu_t-\eta g_t\bigr),
    \end{equation*}
    where $\Pi_{I_T}$ denotes projection onto the interval $I_T$.
\EndFor
\State \Return $\mu_N$ and the corresponding soft filter $M_{T,\mu_N}$.
\end{algorithmic}
\end{algorithm}

\begin{theorem}[first order convergence and iteration complexity]
\label{thm:first order-soft-pca}
Let $\{\mu_t\}_{t\ge 0}$ be generated by Algorithm~\ref{alg:soft-pca-first order} with initialization $\mu_0\in I_T$ and the canonical step size $\eta=1/L_T$. Then $\mu_t\in I_T$ for all $t$, and the objective error after $N$ iterations is bounded by
\begin{equation}
\label{eq:first order-gap-spectral-range}
\Phi_T(\mu_N)-\Phi_T(\mu^\star)
\le
\frac{d\bigl(\lambda_1(C)-\lambda_d(C)\bigr)^2}{8TN}.
\end{equation}
Consequently, to achieve an $\epsilon$-accurate optimization $\Phi_T(\mu_N)-\Phi_T(\mu^\star)\le \epsilon$ under the optimal temperature choice $T\le \epsilon/\bigl(d\,h(k/d)\bigr)$ (from Corollary~\ref{cor:temperature-choice}), it is sufficient to run the algorithm for
\begin{equation}
\label{eq:N-bound-logd}
N
\ge
\frac{dk\log(d/k)\bigl(\lambda_1(C)-\lambda_d(C)\bigr)^2}{8\epsilon^2}
\end{equation}
iterations.
\end{theorem}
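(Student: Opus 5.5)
The plan is to invoke the standard convergence theorem for projected gradient descent on a convex $L$-smooth function over a closed interval. Three inputs are needed: smoothness, invariance of $I_T$, and a bound on the initial distance $|\mu_0-\mu^\star|$. The last two are the real content here.

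\textbf{Step 1: the iterates stay in $I_T$.} Since every update is followed by $\Pi_{I_T}$, each $\mu_t$ lies in $I_T$ automatically, given $\mu_0\in I_T$. We also need $\mu^\star\in I_T$, and this is the main point to verify. Put $c\coloneqq T\log\frac{d-k}{k}$. At $\underline\mu=\lambda_d+c$, every eigenvalue satisfies $\lambda_j\ge\lambda_d$, so $\mu-\lambda_j\le c$. Hence each occupation is at least
\begin{equation*}
\frac{1}{1+e^{c/T}}=\frac{1}{1+(d-k)/k}=\frac{k}{d},
\end{equation*}
which gives $\psi_T(\underline\mu)\ge k$. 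Symmetrically, at $\overline\mu=\lambda_1+c$ every occupation is at most $k/d$, so $\psi_T(\overline\mu)\le k$. Because $\psi_T$ is strictly decreasing and continuous, the intermediate value theorem places the unique root $\mu^\star$ in $I_T$.

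\textbf{Step 2: apply the projected-gradient rate.} From Eq.~\eqref{eq:Lipschitz-constant}, $\Phi_T$ is convex and $L_T$-smooth with $L_T=d/(4T)$. The minimizer of $\Phi_T$ over $\mathbb R$ is $\mu^\star$, which lies in $I_T$, so it is also the constrained minimizer. The standard result for projected gradient descent with step $1/L$ on a convex closed set is
\begin{equation*}
\Phi_T(\mu_N)-\Phi_T(\mu^\star)\le \frac{L_T|\mu_0-\mu^\star|^2}{2N}.
\end{equation*}
Both $\mu_0$ and $\mu^\star$ lie in $I_T$, whose length is $\lambda_1-\lambda_d$, so $|\mu_0-\mu^\star|\le\lambda_1-\lambda_d$. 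Substituting $L_T=d/(4T)$ gives exactly $d(\lambda_1-\lambda_d)^2/(8TN)$, which is Eq.~\eqref{eq:first order-gap-spectral-range}. To be self-contained, I would recall the short proof of the rate: the descent lemma combined with the three-point property of the projection yields monotone decrease and telescoping of $\|\mu_t-\mu^\star\|^2$.

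\textbf{Step 3: iteration count.} Setting the bound $\le\epsilon$ requires
\begin{equation*}
N\ge \frac{d(\lambda_1-\lambda_d)^2}{8T\epsilon}.
\end{equation*}
Take the extremal admissible temperature $T=\epsilon/(d\,h(k/d))$ from Corollary~\ref{cor:temperature-choice}; this is the choice fixed in Algorithm~\ref{alg:soft-pca-first order}. Then $1/T=d\,h(k/d)/\epsilon$, and the requirement becomes
\begin{equation*}
N\ge \frac{d\cdot d\,h(k/d)\,(\lambda_1-\lambda_d)^2}{8\epsilon^2}.
\end{equation*}
Eq.~\eqref{eq:uneqt} gives $d\,h(k/d)\le k\log(d/k)+k$. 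The stated bound Eq.~\eqref{eq:N-bound-logd} carries only $k\log(d/k)$, so it matches this up to the additive $+k$ term and a constant.

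\textbf{Main obstacle.} The argument itself is routine; the difficulty is reconciling Eq.~\eqref{eq:N-bound-logd} exactly as written. Carried through literally, the computation gives the larger bound
\begin{equation*}
\frac{d\,\bigl(k\log(d/k)+k\bigr)(\lambda_1-\lambda_d)^2}{8\epsilon^2},
\end{equation*}
with an extra factor $1+1/\log(d/k)$ relative to the statement. This factor is $O(1)$ whenever $k\le d/e$. I would therefore state the result as sufficient ``up to this factor'' or in order-of-magnitude form, rather than force the exact constant. A second subtlety is that for a smaller temperature $T<\epsilon/(d\,h(k/d))$ the required $N$ grows, so the count should be read at the boundary temperature the algorithm prescribes.
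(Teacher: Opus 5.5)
Your proposal follows the paper's proof step for step: the $L_T$-smooth gradient-descent rate with $\eta=4T/d$, the bound $|\mu_0-\mu^\star|\le\lambda_1-\lambda_d$ because both points lie in $I_T$, the substitution $T=\epsilon/(d\,h(k/d))$, and finally Eq.~\eqref{eq:uneqt}. You also add two things the paper does not spell out: the check $\psi_T(\underline\mu)\ge k\ge\psi_T(\overline\mu)$, which the paper leaves as ``one checks'', and the projected-gradient rate that the algorithm actually runs.

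The obstacle you flag is genuine, and it sits in the paper's own last step. Eq.~\eqref{eq:uneqt} gives $d\,h(k/d)\ge k\log(d/k)$, so replacing $d\,h(k/d)$ by $k\log(d/k)$ lowers the threshold and does not preserve sufficiency. Likewise, substituting $T\le\epsilon/(d\,h(k/d))$ is valid only at equality. What is actually established is the iteration count with $d\,h(k/d)$, or $k\log(d/k)+k$, in place of $k\log(d/k)$ in Eq.~\eqref{eq:N-bound-logd}, read at the boundary temperature, exactly as you propose.
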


\begin{proof}
By the standard convergence analysis of gradient descent for $L_T$-smooth functions, the error after $N$ iterations with step size $\eta$ is bounded by
\begin{equation}
\Phi_T(\mu_N)-\Phi_T(\mu^\star)
\le
\frac{|\mu_0-\mu^\star|^2}{2\eta N}.
\end{equation}
Substituting the canonical step size $\eta=1/L_T=4T/d$, we obtain
\begin{equation}
\Phi_T(\mu_N)-\Phi_T(\mu^\star)
\le
\frac{d}{8TN}\,|\mu_0-\mu^\star|^2.
\end{equation}
Since both the initialization $\mu_0$ and the optimum $\mu^\star$ are confined within the interval $I_T$, their maximum possible distance is strictly bounded by the spectral spread of the covariance matrix:
\begin{equation}
|\mu_0-\mu^\star|
\le
\lambda_1(C)-\lambda_d(C).
\end{equation}
Plugging this uniform estimate into the error bound directly yields Eq.~\eqref{eq:first order-gap-spectral-range}.

To guarantee an error of at most $\epsilon$, we rearrange Eq.~\eqref{eq:first order-gap-spectral-range} to solve for $N$, which requires
\begin{equation}
\label{eq:N-bound-fixed-T}
N
\ge
\frac{d\bigl(\lambda_1(C)-\lambda_d(C)\bigr)^2}{8T\epsilon}.
\end{equation}
Substituting the temperature upper bound $T\le \frac{\epsilon}{d\,h(k/d)}$ into Eq.~\eqref{eq:N-bound-fixed-T}, it is sufficient to take
\begin{equation}
\label{eq:N-bound-temperature-substituted}
N
\ge
\frac{d^2 h(k/d)\bigl(\lambda_1(C)-\lambda_d(C)\bigr)^2}{8\epsilon^2}.
\end{equation}
Finally, applying Eq.~\eqref{eq:uneqt}, we arrive at the final iteration complexity bound in Eq.~\eqref{eq:N-bound-logd}.
\end{proof}

\begin{algorithm}[h]
\caption{second order Newton method of the soft PCA}
\label{alg:soft-pca-second order}
\begin{algorithmic}[1]
\Require Desired error $\epsilon>0$, Temperature $T=\frac{\epsilon}{d\,h(k/d)}$, iteration number $N$, initialization $\mu_0\in I_T$
\For{$t=0,1,\dots,N-1$}
    \State Compute
    \begin{equation*}
    g_t \coloneqq \Phi_T'(\mu_t)=k-\Tr(M_{T,\mu_t}),
    \end{equation*}
    and
    \begin{equation*}
    h_t \coloneqq \Phi_T''(\mu_t)=\frac{1}{T}\Tr\!\bigl[M_{T,\mu_t}(\I-M_{T,\mu_t})\bigr].
    \end{equation*}
    \State Perform the Newton update
    \begin{equation*}
    \mu_{t+1}\coloneqq \Pi_{I_T}\!\left(\mu_t-\frac{g_t}{h_t}\right).
    \end{equation*}
\EndFor
\State \Return $\mu_N$ and the corresponding soft filter $M_{T,\mu_N}$.
\end{algorithmic}
\end{algorithm}

\begin{theorem}[Local quadratic convergence and complexity]
\label{thm:second order-soft-pca}
Let $\{\mu_t\}_{t\ge 0}$ be generated by Algorithm~\ref{alg:soft-pca-second order}. Fix a compact interval $J\subseteq I_T$ containing $\mu^\star$, and define the local strict convexity bound
\begin{equation}
m_T \coloneqq \min_{\mu\in J}\Phi_T''(\mu)>0.
\end{equation}
Leveraging the algebraic properties of the Fermi--Dirac operator, the third derivative admits a global uniform bound $|\Phi_T'''(\mu)|\le \frac{d}{4T^2}$. Then there exists $\rho_T>0$ such that whenever $|\mu_0-\mu^\star|\le \rho_T$ and all iterates remain in $J$, the parameter error decays quadratically:
\begin{equation}
|\mu_{t+1}-\mu^\star|
\le
\frac{d}{8T^2 m_T}\,|\mu_t-\mu^\star|^2.
\end{equation}
Consequently, within this local neighborhood, reaching an $\epsilon$-accurate parameter estimation $|\mu_N-\mu^\star|\le \epsilon$ requires an iteration complexity of only
\begin{equation}
N = O\left(\log \log \frac{1}{\epsilon}\right).
\end{equation}
\end{theorem}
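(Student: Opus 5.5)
The plan is to run the textbook local analysis of Newton's method for a smooth, strictly convex scalar function, using the explicit spectral form of $\Phi_T$. Write $\sigma(x)=1/(1+e^{x})$ and $m_j(\mu)=\sigma((\mu-\lambda_j)/T)$. Then
\begin{equation*}
\Phi_T'(\mu)=k-\sum_{j=1}^d m_j(\mu),
\qquad
\Phi_T''(\mu)=\frac1T\sum_{j=1}^d m_j(1-m_j),
\end{equation*}
and one more differentiation gives
\begin{equation*}
\Phi_T'''(\mu)=-\frac{1}{T^2}\sum_{j=1}^d m_j(1-m_j)(1-2m_j).
\end{equation*}
The first step is the claimed uniform third-derivative bound. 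The scalar function $x(1-x)(1-2x)$ on $[0,1]$ has maximum absolute value $1/(6\sqrt3)$, which is below $1/4$. Hence $|\Phi_T'''(\mu)|\le d/(4T^2)$, and in fact the sharper constant $d/(6\sqrt3\,T^2)$ holds.

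The second step is the Newton error recursion. I would use the fact that $\Phi_T'(\mu^\star)=0$ and ignore the projection $\Pi_{I_T}$ for now. The unprojected update then satisfies
\begin{equation*}
\mu_{t+1}-\mu^\star=\frac{\Phi_T''(\mu_t)(\mu_t-\mu^\star)-\Phi_T'(\mu_t)+\Phi_T'(\mu^\star)}{\Phi_T''(\mu_t)}.
\end{equation*}
Taylor's theorem with Lagrange remainder, applied to $\Phi_T'$ around $\mu_t$, bounds the numerator by $\tfrac12\sup|\Phi_T'''|\,|\mu_t-\mu^\star|^2\le \frac{d}{8T^2}|\mu_t-\mu^\star|^2$. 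For the denominator, since $\mu_t\in J$ by hypothesis, $\Phi_T''(\mu_t)\ge m_T$. Together these give
\begin{equation*}
|\mu_{t+1}-\mu^\star|\le \frac{d}{8T^2m_T}|\mu_t-\mu^\star|^2.
\end{equation*}
The projection onto the interval $I_T$ is nonexpansive, and $\mu^\star\in I_T$ was shown just before Algorithm~\ref{alg:soft-pca-first order}, so applying $\Pi_{I_T}$ can only decrease the distance to $\mu^\star$. The recursion therefore survives the projection.

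The third step sets the radius and counts iterations. Put $c=\frac{d}{8T^2m_T}$ and $\rho_T=\min\{1/(2c),\ \mathrm{dist}(\mu^\star,\partial J)\}$, with the distance understood relative to $I_T$ when $J$ touches its boundary. By induction, $c|\mu_t-\mu^\star|\le (c|\mu_0-\mu^\star|)^{2^t}\le 2^{-2^t}$. This shows the error is monotonically decreasing, so the iterates stay in the neighborhood; the statement already grants that they remain in $J$, so I would not belabor this point. Solving $2^{-2^N}/c\le\epsilon$ gives $N\ge\log_2\log_2\bigl(1/(c\epsilon)\bigr)$, which is $O(\log\log(1/\epsilon))$ for fixed $T$, $d$ and $m_T$.

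The only delicate point is the third-derivative constant, together with confirming that the Lagrange-remainder form of Taylor's theorem gives the factor $\tfrac12$ that produces the $d/(8T^2m_T)$ coefficient. Everything else is routine, so I expect no real obstacle.
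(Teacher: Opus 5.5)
Your proposal is correct and follows essentially the same route as the paper's proof: a Taylor expansion of $\Phi_T'$ about $\mu_t$ with $\Phi_T''(\mu_t)\ge m_T$ in the denominator, a uniform bound on $\Phi_T'''$, nonexpansiveness of $\Pi_{I_T}$ (using $\mu^\star\in I_T$), and unrolling the quadratic recursion. The only difference is cosmetic: the paper bounds $|\Phi_T'''|\le \frac{1}{T}\Phi_T''\le \frac{d}{4T^2}$ via $\|\I-2M_{T,\mu}\|_{\mathrm{op}}\le 1$, whereas you maximize $|x(1-x)(1-2x)|$ directly and obtain the sharper constant $\frac{d}{6\sqrt{3}\,T^2}$.
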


\begin{proof}
For the unprojected Newton step $\tilde{\mu}_{t+1} \coloneqq \mu_t - \Phi_T'(\mu_t)/\Phi_T''(\mu_t)$, a standard second order Taylor expansion for the strictly convex scalar function $\Phi_T$ yields the error bound $|\tilde{\mu}_{t+1}-\mu^\star| \le \frac{\max|\Phi_T'''|}{2m_T}|\mu_t-\mu^\star|^2$. By evaluating the exact third derivative, we have
\begin{equation*}
|\Phi_T'''(\mu)| = \frac{1}{T^2}\bigl|\Tr\bigl[M_{T,\mu}(I-M_{T,\mu})(I-2M_{T,\mu})\bigr]\bigr|.
\end{equation*}
Since the eigenvalues of $(I-2M_{T,\mu})$ are strictly bounded in $[-1,1]$, this magnitude is bounded by $\frac{1}{T}\Phi_T''(\mu)$, which is globally upper-bounded by $\frac{d}{4T^2}$. Substituting this explicit bound yields $|\tilde{\mu}_{t+1}-\mu^\star| \le \frac{d}{8T^2 m_T}|\mu_t-\mu^\star|^2$. Furthermore, since $\mu^\star \in I_T$, the Euclidean projection $\Pi_{I_T}$ is non-expansive. Thus, $|\mu_{t+1}-\mu^\star| \le |\tilde{\mu}_{t+1}-\mu^\star|$, fully preserving the quadratic decay. The $O(\log \log(1/\epsilon))$ complexity follows directly from recursively applying this bound.
\end{proof}

\section{Fixed-Variance Soft PCA}
\label{app:fixed-variance-soft-pca}
\subsection{Optimizer for the fixed-variance soft PCA}
The main text studies the trace-constrained soft PCA problem
\begin{equation}
\max_{0\preceq M\preceq \I}
\ \Tr(CM)+T\fd{M}
\qquad
\text{subject to}
\qquad
\Tr(M)=k.
\end{equation}
There is a dual operational formulation in which one prescribes the amount of covariance energy to be retained and asks for the most compact soft filter. Let \(C\succeq 0\), \(V_C\coloneqq \Tr(C)>0\), and fix a target variance level
\begin{equation}
\gamma_{\rm var}\in(0,V_C).
\end{equation}
Consider
\begin{equation}
\label{eq:fixed-variance-soft-pca}
\min_{M\in\mathcal M_d}
\ \Tr(M)-T'\fd{M}
\qquad
\text{subject to}
\qquad
\Tr(CM)=\gamma_{\rm var},
\end{equation}
where \(T'>0\) is the entropy scale for this fixed-variance problem.

\begin{proposition}[Fixed-variance soft PCA has Fermi--Dirac form]
\label{prop:fixed-variance-fd-form}
The optimizer of Eq.~\eqref{eq:fixed-variance-soft-pca} is unique and is given by
\begin{equation}
\label{eq:fixed-variance-filter}
M_{\lambda}^{\mathrm{var}}
=
\left(
\I+
e^{(\I-\lambda C)/T'}
\right)^{-1},
\end{equation}
where the scalar multiplier \(\lambda\in\mathbb R\) is uniquely determined by
\begin{equation}
\label{eq:fixed-variance-constraint}
\Tr\!\left(CM_{\lambda}^{\mathrm{var}}\right)
=
\gamma_{\rm var}.
\end{equation}
Moreover, whenever \(\lambda>0\), the same optimizer can be written as a member of the Fermi--Dirac soft PCA family:
\begin{equation}
\label{eq:fixed-variance-effective-parameters}
M_{\lambda}^{\mathrm{var}}
=
\left(
\I+
e^{(\mu_{\mathrm{eff}}\I-C)/T_{\mathrm{eff}}}
\right)^{-1},
\qquad
\mu_{\mathrm{eff}}=\frac{1}{\lambda},
\qquad
T_{\mathrm{eff}}=\frac{T'}{\lambda}.
\end{equation}
Thus the trace-constrained and fixed-variance formulations select filters from the same two-parameter Fermi--Dirac family.
\end{proposition}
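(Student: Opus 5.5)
The plan is to reduce the problem to the diagonal case, mirroring the proof of Proposition~\ref{thm:soft-pca-scalar-specialization}, and then to solve a scalar Lagrangian problem mode by mode.

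\textbf{Step 1 (reduction to the eigenbasis).} Write $C=\sum_j \lambda_j\ketbra{u_j}$ with $\lambda_j\ge 0$. For a feasible $M$, let $\mathcal P(M)$ denote its pinching in this eigenbasis. Pinching preserves $\Tr(M)$, preserves $\Tr(CM)$, and keeps $0\preceq \mathcal P(M)\preceq\I$. Since $h$ is concave, Schur--Horn majorization of the diagonal by the spectrum gives $\fd{\mathcal P(M)}\ge \fd{M}$, so pinching can only decrease the objective $\Tr(M)-T'\fd{M}$. The objective is strictly convex on $\mathcal M_d$, because $-\fd{\cdot}$ is the trace of the strictly convex function $-h$, and the constraint set is convex. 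Hence the optimizer is unique, and by the pinching argument it is diagonal in the $u_j$ basis. The problem therefore reduces to
\[
\min_{m\in[0,1]^d}\ \sum_j \bigl(m_j - T' h(m_j)\bigr)\quad\text{s.t.}\quad \sum_j \lambda_j m_j=\gamma_{\rm var}.
\]

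\textbf{Step 2 (Lagrangian and stationarity).} Introduce a multiplier $\lambda$ and form the Lagrangian $\sum_j\bigl(m_j-T'h(m_j)-\lambda\lambda_j m_j\bigr)+\lambda\gamma_{\rm var}$. Using $h'(m)=\log\frac{1-m}{m}$, the stationarity condition is $1-\lambda\lambda_j + T'\log\frac{m_j}{1-m_j}=0$. This gives $m_j=(1+e^{(1-\lambda\lambda_j)/T'})^{-1}$, which is Eq.~\eqref{eq:fixed-variance-filter}. The solution lies in the open interval $(0,1)$, so the box constraints are inactive; this is justified because $h'$ diverges at the endpoints.

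\textbf{Step 3 (existence and uniqueness of $\lambda$).} Consider $\lambda\mapsto \Tr(CM^{\rm var}_\lambda)=\sum_j \lambda_j(1+e^{(1-\lambda\lambda_j)/T'})^{-1}$. This function is continuous and nondecreasing. It is strictly increasing whenever some $\lambda_j>0$, which holds because $V_C>0$. As $\lambda\to-\infty$ it tends to $0$, and as $\lambda\to+\infty$ it tends to $V_C$. For $\gamma_{\rm var}\in(0,V_C)$ there is therefore a unique root. Strong duality then follows: Slater's condition holds because an interior point such as $M=(\gamma_{\rm var}/V_C)\I$ is feasible. Together with strict convexity, this shows that the KKT point is the global optimizer.

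\textbf{Step 4 (matching the Fermi--Dirac family).} For $\lambda>0$, rewrite $(1-\lambda C)/T' = (\lambda^{-1}\I - C)/(T'/\lambda)$. This is exactly the form $(\mu_{\rm eff}\I-C)/T_{\rm eff}$, giving Eq.~\eqref{eq:fixed-variance-effective-parameters}.

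The main obstacle I anticipate is the rigor of Step 1. The pinching inequality for $\fd{\cdot}$ has to be justified via Schur concavity of $\sum_j h(\cdot)$, and care is needed when $\lambda_j$ has degeneracies or zero eigenvalues. In degenerate eigenspaces the pinching should be taken with respect to the spectral projectors of $C$; strict convexity then still forces a unique optimizer that commutes with $C$.
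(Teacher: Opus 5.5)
Your argument is correct. It shares the paper's skeleton: stationarity of a Lagrangian with a scalar multiplier, strict monotonicity of \(\lambda\mapsto\Tr(CM_\lambda^{\mathrm{var}})\) with limits \(0\) and \(V_C\), and the rescaling \((\I-\lambda C)/T'=(\lambda^{-1}\I-C)/(T'/\lambda)\). It differs in how the stationary point is reached and certified.

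The paper takes the Fr\'echet derivative of \(\Tr(M)-T'\fd{M}\) directly on the open set \(0\prec M\prec\I\) and solves the resulting operator equation. That is short and basis-free. However, it never shows that the optimizer lies in that open set, or that the stationary point is a global minimizer. Nor does it show that the optimizer itself, rather than just the multiplier, is unique, even though the statement asserts this. Your route supplies exactly these missing pieces: the pinching/Schur-concavity reduction, strict convexity of \(M\mapsto-\Tr h(M)\), the divergence of \(h'\) at \(0\) and \(1\), and the duality step. So your version proves every clause of the statement, at the cost of a little more machinery.

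Two simplifications are available. First, you do not need Slater. You construct \(\lambda^\star\) explicitly, and each scalar term \(m-T'h(m)-\lambda^\star c_j m\) is strictly convex, where \(c_j\) denotes the eigenvalues of \(C\). Hence \(m^\star\) minimizes the Lagrangian \(L(\cdot,\lambda^\star)\) over the box, and any feasible \(m\) satisfies \(f(m)=L(m,\lambda^\star)\ge L(m^\star,\lambda^\star)=f(m^\star)\), with \(f(m)=\sum_j\bigl(m_j-T'h(m_j)\bigr)\). Second, your degeneracy concern is moot. Pinching in any fixed orthonormal eigenbasis of \(C\) already preserves \(\Tr(M)\) and \(\Tr(CM)\), so uniqueness forces the optimizer to be diagonal in every such basis, i.e.\ to be a function of \(C\).

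Finally, rename the eigenvalues (e.g.\ \(c_j\), as the paper does) to avoid the clash with the multiplier \(\lambda\).
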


\begin{proof}
Introduce a Lagrange multiplier \(\lambda\in\mathbb R\) for the variance constraint and define
\begin{equation}
\mathcal L(M,\lambda)
=
\Tr(M)-T'\fd{M}
+
\lambda\bigl(\gamma_{\rm var}-\Tr(CM)\bigr).
\end{equation}
On the open operator interval \(0\prec M\prec \I\), the Fréchet derivative is
\begin{equation}
\nabla_M\mathcal L(M,\lambda)
=
\I
-
T'\bigl(\log(\I-M)-\log M\bigr)
-
\lambda C.
\end{equation}
Setting this derivative to zero gives
\begin{equation}
T'\bigl(\log M-\log(\I-M)\bigr)
=
\lambda C-\I.
\end{equation}
Equivalently,
\begin{equation}
M(\I-M)^{-1}
=
e^{(\lambda C-\I)/T'}.
\end{equation}
Solving for \(M\) yields Eq.~\eqref{eq:fixed-variance-filter}.

To see uniqueness of the multiplier, diagonalize
\begin{equation}
C=\sum_{j=1}^d c_j\ketbra{u_j},
\qquad
c_j\ge 0.
\end{equation}
Then
\begin{equation}
M_{\lambda}^{\mathrm{var}}
=
\sum_{j=1}^d
\frac{1}{1+e^{(1-\lambda c_j)/T'}}
\ketbra{u_j}.
\end{equation}
Define
\begin{equation}
A(\lambda)
\coloneqq
\Tr(CM_{\lambda}^{\mathrm{var}})
=
\sum_{j=1}^d
\frac{c_j}{1+e^{(1-\lambda c_j)/T'}}.
\end{equation}
A direct differentiation gives
\begin{equation}
A'(\lambda)
=
\frac{1}{T'}
\sum_{j=1}^d
c_j^2
m_j(\lambda)\bigl(1-m_j(\lambda)\bigr),
\end{equation}
where
\begin{equation}
m_j(\lambda)
=
\frac{1}{1+e^{(1-\lambda c_j)/T'}}.
\end{equation}
Hence \(A'(\lambda)>0\) whenever \(C\neq 0\). Moreover,
\begin{equation}
\lim_{\lambda\to-\infty}A(\lambda)=0,
\qquad
\lim_{\lambda\to+\infty}A(\lambda)=\Tr(C)=V_C.
\end{equation}
Thus for every \(\gamma_{\rm var}\in(0,V_C)\), there is a unique \(\lambda\) satisfying Eq.~\eqref{eq:fixed-variance-constraint}.

Finally, if \(\lambda>0\), then
\begin{equation}
\frac{\I-\lambda C}{T'}
=
\frac{\mu_{\mathrm{eff}}\I-C}{T_{\mathrm{eff}}},
\qquad
\mu_{\mathrm{eff}}=\frac1\lambda,
\qquad
T_{\mathrm{eff}}=\frac{T'}{\lambda}.
\end{equation}
Substituting this identity into Eq.~\eqref{eq:fixed-variance-filter} gives Eq.~\eqref{eq:fixed-variance-effective-parameters}.
\end{proof}

In the trace-constrained formulation, one usually fixes the effective temperature \(T\) and varies only the chemical potential \(\mu\). In the fixed-variance formulation Eq.~\eqref{eq:fixed-variance-soft-pca}, fixing \(T'\) and varying the variance multiplier \(\lambda\) changes both effective parameters:
\begin{equation}
\mu_{\mathrm{eff}}=\frac1\lambda,
\qquad
T_{\mathrm{eff}}=\frac{T'}{\lambda}.
\end{equation}
Therefore, the two formulations are identical at the level of the Fermi--Dirac filter family, but not necessarily along the same fixed-temperature one-parameter slice unless the entropy scale \(T'\) is co-tuned with \(\lambda\). Since the fixed-variance entropy scale $T'$ is a rigid hyperparameter defining the regularization strength, the multiplier $\lambda$ dynamically scales the effective spectrum of the covariance operator. This spectral scaling requires active quantum simulation and cannot be bypassed via a static classical threshold shift.
In practice, the operating point for a prescribed variance level is obtained by a one-dimensional bisection search over the Lagrange multiplier \(\lambda\). We now proceed to a detailed elaboration.

\subsection{Bisection calibration using a covariance probe}
\label{app:sub_fixed_variance_calib}

The trace-constrained calibration in Section~\ref{sec:3} uses the maximally
mixed probe \(\tau_{\rm mm}=\I/d\), for which
\begin{equation}
\Pr(q>\beta\mid \tau_{\rm mm})
=
\frac{1}{d}\Tr(M_{T,\beta T_2}).
\end{equation}
Thus an \(O(1)\) additive error in the unnormalized trace requires an
\(O(1/d)\) probability accuracy. For the fixed-variance formulation this
dilution can be avoided by probing the covariance itself.

Let
\begin{equation}
V_C\coloneqq \Tr(C_\phi)=1-\alpha,
\qquad
\rho_C\coloneqq \frac{C_\phi}{V_C},
\end{equation}
assuming \(V_C>0\). Although \(V_C\) need not be known classically, the state
\(\rho_C\) is prepared automatically by the centered LCU preparation described in Eq.~\eqref{eq:rho_c}. 
The total success probability of this preparation is \(V_C/4\), up to the
constant factor from the LCU normalization.
Following the thermal measurement scheme of Proposition~\ref{prop:modified-thermal measurement}, we get the thermal measurement for the fixed-variance filter.

\begin{proposition}[Measurement mapping for the fixed-variance filter]
\label{prop:fixed-variance-measurement}
Let
\begin{equation}
V_C\coloneqq \Tr(C_\phi)=1-\alpha,
\qquad
\rho_C\coloneqq \frac{C_\phi}{V_C},
\end{equation}
and assume \(V_C>0\). Set the control parameters to
\begin{equation}
T_1=\frac{T'}{T_2},
\qquad
\beta=\frac{1}{T_2}.
\end{equation}
For any \(\lambda\in\mathbb R\), perform the modified thermal measurement with the scaled data-side evolution
\begin{equation}
U_C(\lambda)
=
e^{-i\hat p\otimes(\lambda C_\phi)/T_2}
\end{equation}
on the successfully prepared covariance probe \(\rho_C\). Then the conditional acceptance probability satisfies
\begin{equation}
\label{eq:fixed-variance-conditional-prob}
p_{\rm cond}(\lambda)
\coloneqq
\Pr(q>\beta\mid \rho_C,\lambda)
=
\Tr(\rho_C M_\lambda^{\rm var})
=
\frac{\Tr(C_\phi M_\lambda^{\rm var})}{V_C},
\end{equation}
where
\begin{equation}
\label{eq:fixed-variance-filter-again}
M_\lambda^{\rm var}
=
\left(\I+e^{(\I-\lambda C_\phi)/T'}\right)^{-1}.
\end{equation}
Thus the normalized retained variance constraint
\begin{equation}
\label{eq:normalized-var-target}
\Tr(\rho_C M_\lambda^{\rm var})=\vartheta_{\rm var},
\qquad
\vartheta_{\rm var}\in(0,1),
\end{equation}
is exactly the scalar probability equation
\begin{equation}
p_{\rm cond}(\lambda)=\vartheta_{\rm var}.
\end{equation}

Moreover, using the unpostselected centered LCU preparation, the joint probability of successfully preparing the centered covariance branch and accepting the thermal measurement is
\begin{equation}
\label{eq:fixed-variance-joint-prob}
p_{\rm joint}(\lambda)
\coloneqq
\Pr(\mathrm{LCU\ success},\, q>\beta\mid \lambda)
=
\frac14\,\Tr(C_\phi M_\lambda^{\rm var}).
\end{equation}
Therefore the unnormalized retained variance constraint
\begin{equation}
\label{eq:absolute-var-target}
\Tr(C_\phi M_\lambda^{\rm var})=\gamma_{\rm var},
\qquad
\gamma_{\rm var}\in(0,V_C),
\end{equation}
can equivalently be tested by the raw joint-probability equation
\begin{equation}
p_{\rm joint}(\lambda)=\frac{\gamma_{\rm var}}{4}.
\end{equation}
\end{proposition}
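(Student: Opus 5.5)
The plan is to apply Proposition~\ref{prop:modified-thermal measurement} verbatim, with the covariance operator $C$ replaced by the scaled operator $\lambda C_\phi$ and with displacement $\delta=0$. Write $C_\phi=\sum_j c_j\ketbra{u_j}$; then $\lambda C_\phi$ has the same eigenbasis and eigenvalues $\lambda c_j$, for any sign of $\lambda$. The proof of Proposition~\ref{prop:modified-thermal measurement} only uses that the evolution $e^{-i\hat p\otimes C/T_2}$ shifts the control position by $\lambda_j/T_2$ on each eigenmode, so it holds for any Hermitian data-side operator. The step to watch is that this operator here is $\lambda C_\phi$, not $C_\phi$ itself. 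It therefore yields the effect
\begin{equation*}
\widetilde M=\left(\I+e^{(\beta T_2\,\I-\lambda C_\phi)/(T_1T_2)}\right)^{-1}.
\end{equation*}
Substituting $T_1T_2=T'$ and $\beta T_2=1$ gives exactly $M_\lambda^{\rm var}$ in Eq.~\eqref{eq:fixed-variance-filter-again}. The occupation of mode $j$ can also be checked directly: it is $1-F_{T_1}(\beta-\lambda c_j/T_2)=\bigl(1+e^{(1-\lambda c_j)/T'}\bigr)^{-1}$.

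Next, the conditional probability follows from the Born rule in Proposition~\ref{prop:modified-thermal measurement} with $\rho_{\rm in}=\rho_C$. This gives $p_{\rm cond}(\lambda)=\Tr(\rho_C M_\lambda^{\rm var})=\Tr(C_\phi M_\lambda^{\rm var})/V_C$. The probe $\rho_C$ is the one produced by the centered LCU preparation of Eq.~\eqref{eq:rho_c}, conditioned on success; as noted there, $V_C$ need not be known. The equivalence of the normalized constraint Eq.~\eqref{eq:normalized-var-target} with $p_{\rm cond}(\lambda)=\vartheta_{\rm var}$ is then immediate. Uniqueness of the solution $\lambda$ is inherited from the strict monotonicity of $A(\lambda)$ in Proposition~\ref{prop:fixed-variance-fd-form}, after dividing by $V_C$.

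For the joint probability, I will follow the remark after Eq.~\eqref{eq:covariance-probe-tail-main}. Sampling $i$ uniformly and running the centered LCU on $\ket{\phi_i}$ succeeds with amplitude-squared $\tfrac14\|\ket{z_i}\|^2$ and leaves the unnormalized branch $\tfrac12\ket{z_i}$. Averaging over $i$, the subnormalized data state on the success branch is $\tfrac14\cdot\frac1N\sum_i\ketbra{z_i}=\tfrac14 C_\phi$. The thermal measurement acts only on the data and control registers, and the success flag is determined by the ancilla and index registers, which commute with it. Hence $\Pr(\text{success},q>\beta)=\Tr\bigl(M_\lambda^{\rm var}\cdot\tfrac14C_\phi\bigr)$, which is Eq.~\eqref{eq:fixed-variance-joint-prob}. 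Equivalently, this is $p_{\rm cond}(\lambda)\cdot V_C/4$, and the raw-probability reformulation of Eq.~\eqref{eq:absolute-var-target} follows.

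The main obstacle is modest: justifying the commutation between the LCU postselection and the thermal measurement, so that success and acceptance probabilities factor. I would handle it by writing the full pre-measurement state as $\tfrac12\ket{-}\ket{0}\ket{z_i}+(\text{orthogonal junk})$ and noting that $U_C(\lambda)$ acts as identity on the ancilla and index registers.
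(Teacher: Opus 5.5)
Your proposal is correct and follows the paper's proof essentially step for step. You invoke Proposition~\ref{prop:modified-thermal measurement} with data operator $\lambda C_\phi$, substitute $T_1T_2=T'$ and $\beta T_2=1$, and evaluate on $\rho_C$. You then obtain the joint probability as $P_{\rm succ}\,p_{\rm cond}(\lambda)=\tfrac{V_C}{4}\Tr(\rho_C M_\lambda^{\rm var})$. The paper uses the factorization $P_{\rm succ}\,p_{\rm cond}$ without comment; your explicit success branch $\tfrac14 C_\phi$ and the observation that the LCU success measurement acts on registers untouched by $U_C(\lambda)$ justify it.
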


\begin{proof}
By Proposition~\ref{prop:modified-thermal measurement}, applying the modified thermal measurement with data operator \(\lambda C_\phi\) gives the measurement effect
\begin{equation}
\left(
\I+
e^{(\beta T_2\I-\lambda C_\phi)/(T_1T_2)}
\right)^{-1}.
\end{equation}
Using \(\beta=1/T_2\) and \(T_1T_2=T'\), this becomes
\begin{equation}
\left(
\I+
e^{(\I-\lambda C_\phi)/T'}
\right)^{-1}
=
M_\lambda^{\rm var}.
\end{equation}
Taking the expectation of this effect on the normalized covariance probe
\(\rho_C=C_\phi/V_C\) gives Eq.~\eqref{eq:fixed-variance-conditional-prob}.

For the joint-probability statement, recall that the centered LCU preparation success probability is
\begin{equation}
P_{\rm succ}
=
\frac14\,\Tr(C_\phi)
=
\frac{V_C}{4},
\end{equation}
and the conditioned state is \(\rho_C=C_\phi/V_C\). Hence
\begin{equation}
p_{\rm joint}(\lambda)
=
P_{\rm succ}\,
\Pr(q>\beta\mid \rho_C,\lambda)
=
\frac{V_C}{4}\,
\Tr(\rho_C M_\lambda^{\rm var})
=
\frac14\Tr(C_\phi M_\lambda^{\rm var}),
\end{equation}
which proves Eq.~\eqref{eq:fixed-variance-joint-prob}.
\end{proof}

Since
\begin{equation}
\frac{d}{d\lambda}
\Tr(C_\phi M_\lambda^{\rm var})
=
\frac{1}{T'}
\Tr\!\left[
C_\phi^2 M_\lambda^{\rm var}(\I-M_\lambda^{\rm var})
\right]
\ge 0,
\end{equation}
and the derivative is strictly positive on the support of \(C_\phi\), both
\(p_{\rm cond}(\lambda)\) and \(p_{\rm joint}(\lambda)\) are monotone increasing functions of \(\lambda\). Moreover,
\begin{equation}
\lim_{\lambda\to-\infty}p_{\rm cond}(\lambda)=0,
\qquad
\lim_{\lambda\to+\infty}p_{\rm cond}(\lambda)=1.
\end{equation}
Therefore, for every normalized target
\(\vartheta_{\rm var}\in(0,1)\), there is a unique multiplier \(\lambda^\star\) satisfying
\begin{equation}
p_{\rm cond}(\lambda^\star)=\vartheta_{\rm var}.
\end{equation}
Equivalently, for every feasible unnormalized target
\(\gamma_{\rm var}\in(0,V_C)\), the same multiplier can be identified from
\begin{equation}
p_{\rm joint}(\lambda^\star)=\frac{\gamma_{\rm var}}{4}.
\end{equation}
Thus \(\lambda^\star\) can be found by a one-dimensional classical bisection search.

\begin{proposition}[Sample complexity for fixed-variance calibration]
\label{prop:fixed-variance-complexity}
Fix a bounded bisection interval containing \(\lambda^\star\), and suppose
\begin{equation}
|\lambda|\le \Lambda
\end{equation}
throughout the search.

For the normalized retained variance target
\(\vartheta_{\rm var}\), to obtain
\begin{equation}
\left|
\Tr(\rho_C M_{\widehat\lambda}^{\rm var})-\vartheta_{\rm var}
\right|
\le
\eta_{\rm var}
\end{equation}
with failure probability at most \(\delta\), it suffices to use
\begin{equation}
\label{eq:successful-sample-complexity-var}
S_{\rm succ}
=
O\!\left(
\frac{\log(1/\eta_{\rm var})}{\eta_{\rm var}^2}
\log\frac{\log(1/\eta_{\rm var})}{\delta}
\right)
\end{equation}
successful covariance-probe samples. This bound is independent of both the feature dimension \(d\) and the unknown normalization \(V_C=1-\alpha\). If one instead counts raw centered-LCU preparation attempts, then the expected number of raw attempts is larger by the preparation factor \(1/P_{\rm succ}=\Theta(1/V_C)\), giving
\begin{equation}
\label{eq:raw-sample-complexity-var}
S_{\rm raw}
=
O\!\left(
\frac{1}{1-\alpha}
\frac{\log(1/\eta_{\rm var})}{\eta_{\rm var}^2}
\log\frac{\log(1/\eta_{\rm var})}{\delta}
\right).
\end{equation}

For the unnormalized retained variance target \(\gamma_{\rm var}\), one may instead estimate the raw joint probability \(p_{\rm joint}(\lambda)\). To obtain
\begin{equation}
\left|
\Tr(C_\phi M_{\widehat\lambda}^{\rm var})-\gamma_{\rm var}
\right|
\le
\eta_{\rm var},
\end{equation}
it suffices to estimate \(p_{\rm joint}(\lambda)\) to additive accuracy
\(\eta_{\rm var}/4\). Thus the raw-shot sample complexity is
\begin{equation}
\label{eq:absolute-variance-raw-complexity}
S_{\rm abs}
=
O\!\left(
\frac{\log(1/\eta_{\rm var})}{\eta_{\rm var}^2}
\log\frac{\log(1/\eta_{\rm var})}{\delta}
\right),
\end{equation}
with no additional division by \(1-\alpha\). In this case the unknown normalization is absorbed into the joint Born probability itself.
\end{proposition}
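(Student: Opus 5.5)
The plan is a noisy bisection over \(\lambda\in[-\Lambda,\Lambda]\), using the monotonicity of \(p_{\rm cond}(\lambda)\) established after Proposition~\ref{prop:fixed-variance-measurement}. Run \(L\) bisection rounds. At each round evaluate the midpoint \(\lambda_\ell\) with \(S_\ell\) successful covariance-probe shots. Each shot is a Bernoulli trial with success probability \(p_{\rm cond}(\lambda_\ell)=\Tr(\rho_C M_{\lambda_\ell}^{\rm var})\) by Eq.~\eqref{eq:fixed-variance-conditional-prob}. Form the empirical frequency \(\widehat p_\ell\), and move the left endpoint up if \(\widehat p_\ell<\vartheta_{\rm var}\); otherwise move the right endpoint down. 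Hoeffding's inequality gives \(|\widehat p_\ell-p_{\rm cond}(\lambda_\ell)|\le \eta_{\rm var}/2\) with probability at least \(1-\delta/L\), provided \(S_\ell=O(\eta_{\rm var}^{-2}\log(L/\delta))\). A union bound over the \(L\) rounds then makes all estimates good simultaneously with probability \(1-\delta\).

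On this good event, every decision is correct except possibly at points where \(|p_{\rm cond}(\lambda_\ell)-\vartheta_{\rm var}|\le\eta_{\rm var}/2\). The invariant to maintain is that the final interval contains a point \(\lambda'\) with \(|p_{\rm cond}(\lambda')-\vartheta_{\rm var}|\le \eta_{\rm var}/2\). Suppose a wrong move was made at \(\lambda_\ell\). Then \(\lambda_\ell\) itself was \(\eta_{\rm var}/2\)-good, and monotonicity keeps a good endpoint in the retained interval. I would track this by keeping the endpoints \(a,b\) with \(p_{\rm cond}(a)\le\vartheta_{\rm var}+\eta_{\rm var}/2\) and \(p_{\rm cond}(b)\ge\vartheta_{\rm var}-\eta_{\rm var}/2\); this property is preserved by every move on the good event.

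The final output \(\widehat\lambda\) is any point of the terminal interval of width \(2\Lambda 2^{-L}\). The difference \(|p_{\rm cond}(\widehat\lambda)-p_{\rm cond}(\lambda')|\) is then controlled by the Lipschitz constant of \(p_{\rm cond}\). From the derivative formula, \(p_{\rm cond}'(\lambda)=\frac{1}{T'V_C}\Tr[C_\phi^2M(\I-M)]\le \frac{\|C_\phi\|_{\rm op}}{4T'}\le\frac{1}{4T'}\), using \(\Tr(C_\phi^2)\le\|C_\phi\|_{\rm op}V_C\) and \(\|C_\phi\|_{\rm op}\le 1\). This bound is independent of \(d\) and \(V_C\). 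Choosing \(L=\lceil\log_2(\Lambda/(T'\eta_{\rm var}))\rceil=O(\log(1/\eta_{\rm var}))\), with \(\Lambda,T'\) treated as constants, makes the interpolation error at most \(\eta_{\rm var}/2\). The total error is therefore \(\eta_{\rm var}\), and the sample count is \(L\cdot S_\ell\), matching Eq.~\eqref{eq:successful-sample-complexity-var}. The raw count follows by Wald's identity, since each successful sample costs an expected \(1/P_{\rm succ}=4/V_C\) attempts and \(V_C=1-\alpha\); this gives Eq.~\eqref{eq:raw-sample-complexity-var}.

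For the unnormalized target, I would run the identical bisection on \(p_{\rm joint}(\lambda)=\frac14\Tr(C_\phi M_\lambda^{\rm var})\) from Eq.~\eqref{eq:fixed-variance-joint-prob}, which is also monotone. Each raw shot is a Bernoulli trial with this success probability, so \(V_C\) never appears. Estimating it to accuracy \(\eta_{\rm var}/8\) and using the Lipschitz bound \(\frac{1}{4}\Tr(C_\phi^2)/(4T')\le 1/(16T')\) yields accuracy \(\eta_{\rm var}\) on \(\Tr(C_\phi M^{\rm var}_{\widehat\lambda})\) after scaling by 4, giving Eq.~\eqref{eq:absolute-variance-raw-complexity}.

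The main obstacle is the noisy-bisection invariant: near the root the tests are unreliable, and one must argue that wrong decisions there cost nothing. The endpoint-invariant formulation above is what I would try first. The Lipschitz bound, which is what makes the result dimension-free, is routine.
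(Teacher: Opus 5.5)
Your proposal is correct and follows the same route as the paper's proof. Both arguments apply Hoeffding at each bisection midpoint and take a union bound over $N_{\rm iter}=O(\log(1/\eta_{\rm var}))$ steps, then use the factor $1/P_{\rm succ}=4/V_C$ for raw attempts and accuracy $\eta_{\rm var}/4$ on $p_{\rm joint}$ for the unnormalized target. Your endpoint invariant and the explicit slope bound $p_{\rm cond}'(\lambda)\le \|C_\phi\|_{\rm op}/(4T')\le 1/(4T')$ make precise what the paper leaves as ``constants depending on the chosen interval and the local slope.'' They also show that only an upper bound on the slope is needed, which is what makes the count independent of $d$ and $V_C$.
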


\begin{proof}
For the normalized target, each successful covariance-probe sample produces a Bernoulli random variable with mean
\begin{equation}
p_{\rm cond}(\lambda)
=
\Tr(\rho_C M_\lambda^{\rm var}).
\end{equation}
Hoeffding's inequality implies that
\begin{equation}
O\!\left(
\frac{1}{\eta_{\rm var}^2}
\log\frac{N_{\rm iter}}{\delta}
\right)
\end{equation}
successful samples per bisection step estimate this probability to additive accuracy \(O(\eta_{\rm var})\), uniformly over \(N_{\rm iter}\) bisection steps by a union bound. Since the search is one-dimensional and monotone, a bounded-interval bisection requires
\begin{equation}
N_{\rm iter}=O(\log(1/\eta_{\rm var}))
\end{equation}
steps, up to constants depending on the chosen interval and the local slope of \(p_{\rm cond}\). Multiplying the per-step cost by \(N_{\rm iter}\) gives Eq.~\eqref{eq:successful-sample-complexity-var}. The raw LCU bound Eq.~\eqref{eq:raw-sample-complexity-var} follows from the centered-branch success probability
\begin{equation}
P_{\rm succ}=\frac14 V_C=\frac14(1-\alpha).
\end{equation}

For the unnormalized target, Eq.~\eqref{eq:fixed-variance-joint-prob} gives
\begin{equation}
p_{\rm joint}(\lambda)
=
\frac14\Tr(C_\phi M_\lambda^{\rm var}).
\end{equation}
Therefore an additive error \(\eta_{\rm var}/4\) in estimating \(p_{\rm joint}\) gives an additive error \(\eta_{\rm var}\) in the unnormalized retained variance. Applying the same Hoeffding and union-bound argument to the raw Bernoulli event
\begin{equation}
\{\mathrm{LCU\ success},\,q>\beta\}
\end{equation}
gives Eq.~\eqref{eq:absolute-variance-raw-complexity}.
\end{proof}

The normalized target
\begin{equation}
\vartheta_{\rm var}
=
\Tr(\rho_C M_\lambda^{\rm var})
=
\frac{\Tr(C_\phi M_\lambda^{\rm var})}{\Tr(C_\phi)}
\end{equation}
is usually the more practical explained-variance criterion, because it is a fraction in \((0,1)\) and does not require prior knowledge of
\(\alpha=\braket m m\) or \(V_C=\Tr(C_\phi)\). The algorithm estimates this quantity directly as the conditional probability
\begin{equation}
p_{\rm cond}(\lambda)=\vartheta_{\rm var}.
\end{equation}
The unnormalized target
\begin{equation}
\gamma_{\rm var}=\Tr(C_\phi M_\lambda^{\rm var})
\end{equation}
is also accessible without estimating \(\alpha\): it is encoded in the raw joint probability
\begin{equation}
p_{\rm joint}(\lambda)=\gamma_{\rm var}/4.
\end{equation}
Thus the two targets differ only by whether one conditions on successful covariance-probe preparation. Conditioning gives the normalized explained-variance fraction, while the unconditioned joint event gives the absolute retained covariance weight.

Finally, for circuit implementation, the complexity also needs to include the per-sample quantum cost $\widetilde{O}(1/(T_2^2\varepsilon_{\mathrm{sim}}))$, requiring selector-swap substeps for the Hamiltonian simulation according to Corollary~\ref{cor:gate-complexity}.

\end{document}